\newtheorem{remark}{Remark}
\newtheorem{theorem}{Theorem}
\newtheorem{definition}{Definition}
\newtheorem{proposition}{Proposition}
\newtheorem{lemma}{Lemma}
\DeclareMathOperator{\tr}{tr}
\setlist{noitemsep,leftmargin=*}
\begin{document}
\title{Quantum algorithms for estimating quantum entropies}
\author{Youle Wang}
\affiliation{Institute for Quantum Computing, Baidu Research, Beijing 100193, China}
\affiliation{Center for Quantum Software and Information, University of Technology Sydney, NSW 2007, Australia}
\author{Benchi Zhao}
\affiliation{Institute for Quantum Computing, Baidu Research, Beijing 100193, China}
\author{Xin Wang}
\affiliation{Institute for Quantum Computing, Baidu Research, Beijing 100193, China}

\begin{abstract}
The von Neumann and quantum R\'enyi entropies characterize fundamental properties of quantum systems and lead to theoretical and practical applications in many fields. Quantum algorithms for estimating quantum entropies, using a quantum query model that prepares the purification of the input state, have been established in the literature. {However, constructing such a model is almost as hard as state tomography.} In this paper, we propose quantum algorithms to estimate the von Neumann and quantum $\alpha$-R\'enyi entropies of an $n$-qubit quantum state $\rho$ using independent copies of the input state. We also show how to efficiently construct the quantum circuits for {quantum entropy estimation} using primitive single/two-qubit gates. We prove that the number of required copies scales polynomially in $1/\epsilon$ and $1/\Lambda$, where $\epsilon$ denotes the additive precision and $\Lambda$ denotes the lower bound on all non-zero eigenvalues. Notably, our method outperforms previous methods in the aspect of practicality since it does not require any quantum query oracles, which are usually necessary for previous methods. Furthermore, we conduct experiments to show the efficacy of our algorithms to single-qubit states and study the noise robustness. We also discuss the applications to some quantum states of practical interest as well as some meaningful tasks such as quantum Gibbs state preparation and entanglement estimation. 
\end{abstract}
\maketitle

\section{Introduction}
Entropy \cite{BEIN2006101} is a vital concept in physics and computer science that can characterize the system's randomness. The celebrated Shannon entropy \cite{shannon2001mathematical} and R\'enyi entropies \cite{renyi1961measures} are often used to depict the randomness and capture the operational quantities in information processing and quantum physics. The R\'enyi entropies generalize the Shannon entropy and constitute a family of one-parameter information measures. In the quantum setting, the corresponding concepts are the von Neumann \cite{von1932mathematische} and quantum R\'enyi entropies \cite{Petz1986a}, {which have applications in many fields such as quantum chemistry \cite{aspuru2005simulated}, condensed matter physics \cite{laflorencie2016quantum}, and high energy physics \cite{Peschanski2019}.}
{In particular,} computing quantum entropies plays a key role in quantum information and quantum computing. For instance, quantum entropies can provide the asymptotic lower bound for compressing quantum data~\cite{schumacher1995quantum} and be applied to study quantum Gibbs state preparation \cite{Chowdhury2020,Yuan2018a,Wu2019b,Wang2020} and Hamiltonian learning \cite{Wiebe2013,Anshu2021,Bairey2018,wang2021hybrid}. 

For any quantum state $\rho\in\mathbb{C}^{2^n\times2^n}$, the von Neumann entropy is defined by $S(\rho)\coloneqq -\tr(\rho\ln(\rho))$, and the quantum $\alpha$-R\'enyi entropy is defined by $R_{\alpha}(\rho)\coloneqq\frac{1}{1-\alpha}\log\tr(\rho^\alpha)$ with parameter $\alpha\in(0,1)\cup(1,+\infty)$. Taking the limit $\alpha\to1$, $R_{\alpha}(\rho)$ converges to $S(\rho)$ up to a proportional factor. Additionally, if $\rho$ is diagonal in the computational basis, $S(\rho)$ and $R_{\alpha}(\rho)$ degenerate to the classical counterparts. 
Various methods \cite{Montanaro2016} have been proposed to estimate quantum entropies in past decades, while a large number of quantum resources are demanded as well.
The most straightforward method to estimate quantum entropy is using tomography \cite{Acharya2020}, which figures out the description of the density matrix. In that case, the consumption increases exponentially with the size of the state. On top of that, the current optimal classical algorithm for quantum entropy estimation has a cost that is linear to the number of non-zero elements of the density matrix~\cite{Kontopoulou2018}.

Regarding quantum computing methods, many proposals based on different models have been proposed~\cite{Hastings2010,Acharya2020,Li2019a,Gilyen2019,Subramanian2019,Chowdhury2020,Luongo2020}. Specifically speaking, \cite{Acharya2020} studies the cost of estimating the von Neumann and R\'enyi entropies in a model where one can get independent copies of the state. By allowing arbitrary measurements and classical post-processing, it shows that the cost of entropy estimation scales exponentially in the state size.
Later, \cite{Gilyen2019} and \cite{Subramanian2019} study the von Neumann entropy and quantum R\'enyi entropy estimation in a quantum query model, respectively. The query model here is a quantum circuit that can prepare the purification of the input state, i.e., $U_{\rho}\ket{0}_A\ket{0}_B=\ket{\psi_\rho}_{AB}$ and $\tr_{A}(\op{\psi_\rho}{\psi_\rho}_{AB})=\rho$. 
In these works, the times of using $U_{\rho}$ for estimating $S(\rho)$ could be linear in the dimension ($O(d)$ \cite{Gilyen2019}, where $d$ is the dimension of system), while the results for $R_\alpha(\rho)$ is comparable to the tomography ($O(d^2)$) \cite{Subramanian2019}. 
{Recently, the work \cite{Gur2021} has brought the number of using $U_\rho$ for estimating $S(\rho)$ to be sub-linear in the system's dimension.} Another work \cite{Yirka2020} has considered access to the purification of a state and used short-depth circuits, which generalize the swap test, to estimate $\tr(\rho^k)$. 

Although aforementioned quantum algorithms have promised quantum speedups, the quantum query model for the input state, the most crucial component of the algorithms, is still not known how to construct efficiently. And hence, the timescale for these algorithms to be effective in practice remains an open question. On the other hand, the fast development of quantum computing devices has brought us into the noisy intermediate-scale quantum (NISQ) era \cite{Preskill2018}. An important research direction is to exploit NISQ devices to solve challenging tasks for classical computers. To better exploit NISQ devices in the quantum entropy estimation task, it is highly desirable to devise quantum algorithms without using the quantum query model.

In this work, we propose quantum algorithms of concrete implementation to estimate the von Neumann and quantum R\'enyi entropies of an unknown quantum state, using independent copies of the input state.
To develop our algorithms, we firstly use the Fourier series approximation to decompose the entropy. Then, we devise quantum circuits to estimate individual term in the series. When designing quantum circuits, we synthesize several quantum gadgets, such as the iterative quantum phase estimation~\cite{kitaev1995quantum}, the exponentiation of the quantum state in \cite{Lloyd2014}, the linear combination of unitaries \cite{Berry2015}, and qubit reset~\cite{egger2018pulsed}.  {As a result, the circuits are composed of primitive single/two-qubit gates, and two copies of the state are maintained during computation.}
In the end, we could obtain the estimated entropy by classical post-processing. Particularly, we utilize the sampling method to reduce the computational resources and speed up computation. 
 

Compared with algorithms of~\cite{Gilyen2019,Subramanian2019,Chowdhury2020,Luongo2020,Gur2021}, our circuits no longer depend on the quantum query model but use copies of the input state. Generally, constructing such a model is almost as hard as state tomography. In contrast, our algorithms need less information (about the minimal non-zero eigenvalue) and thus are more implementable in practice. Moreover, implementing algorithms of~\cite{Gilyen2019,Subramanian2019,Chowdhury2020,Luongo2020,Gur2021} needs to devise corresponding circuits for different quantum states. In comparison, our methods can use fixed circuits to estimate all selected states. As a result, when implemented in experiments, our methods are more resource-efficient in compiling logical qubits to physical qubits. Thirdly, when we deal with special cases where the minimal non-zero eigenvalue is polynomially small, i.e., $\Lambda=\Omega(1/poly(n))$, our algorithm consumes polynomial resources while algorithms of \cite{Gilyen2019,Subramanian2019,Luongo2020,Gur2021} require exponential resources. Fourthly, the approach in \cite{Yirka2020} can estimate the quantum R\'enyi entropy $R_{\alpha}(\alpha)$ when the parameter $\alpha$ is integer. In comparison, our approach is more general, i.e., our approach can apply to the case where $\alpha$ is an integer or non-integer.

\section{Overview of our results}\label{sec:problem}
In this paper, we assume that copies of the input state $\rho$ can be accessed and have no constraint on the number of copies. Then, we formally state the task of estimating quantum entropies below.

\begin{definition}[Entropy estimation]
Given free access to the copies of a quantum state $\rho\in\mathbb{C}^{2^n\times 2^n}$, the aim is to estimate the von Neumann entropy $S(\rho)=-\tr(\rho\ln\rho)$ and $\alpha$-R\'enyi entropy $R_{\alpha}(\rho)=\frac{1}{1-\alpha}\log\tr(\rho^\alpha)$. To be more specific, find $S(\rho)_{est}$ and $R_{\alpha}(\rho)_{est}$ such that, for any constant $\alpha\in(0,1)\cup(1,+\infty)$,
\begin{align}
    &\Pr\left[\left|S(\rho)_{ est}-S(\rho)\right|\leq\epsilon\right]\geq1-\delta,\\
    &\Pr[|R_{\alpha}(\rho)_{est}-R_{\alpha}(\rho)|\leq\epsilon]\geq 1-\delta,
\end{align}
where $\epsilon\in(0,1)$ and $\delta\in(0,1)$ denote the estimation accuracy and the failure probability, respectively.
\end{definition}

To realize the defined tasks on quantum computers, the main idea is to find a Fourier series approximation of the entropy and evaluate the Fourier series by constructing explicit quantum circuits. In particular, we establish the following:
In Sec. \ref{sec:fourer_series}, we propose the Fourier series as approximations of the von Neumann and R\'enyi entropies, which means we decompose the von Neumann and quantum R\'enyi entropy into the combination of many terms that are easy to estimate. In Sec. \ref{sec:quantum_circuit}, we provide explicit quantum circuits to evaluate the Fourier series of the entropy approximations.
In Sec. \ref{sec:quantum_algorithm}, combining the Fourier series approximation and explicit circuit schemes, we propose quantum algorithms for estimating von Neumann and quantum R\'enyi entropies.
To demonstrate the effectiveness of our algorithms, we conduct numerical experiments in Sec.~\ref{sec:experiment}. Meanwhile, we also study the robustness of our algorithms to depolarizing and amplitude damping noise channels. 
Last, we compare our algorithms with the existing approaches and discuss several applications in Sec.~\ref{sec:Discussion}. Finally, the paper is concluded in Sec. \ref{sec:conclusion}. 

\section{Quantum entropy approximations}\label{sec:fourer_series}
Fourier series approximations of quantum entropies have been previously considered in \cite{Gilyen2019,Chowdhury2020,Subramanian2019}. Our results are inspired by the work \cite{Chowdhury2020}, which employs a method in \cite{van2017quantum} to convert a Taylor series approximation of $S(\rho)$ to Fourier series. In this section, we further employ this method to give approximations of $S(\rho)$ and $R_{\alpha}(\rho)$, expressed as a linear combination of terms of the form $\tr(\rho\cos(\rho t))$.

\subsection{Approximation of von Neumann entropy}
To provide the series approximation, we follow the method given in Lemma 37 of \cite{van2017quantum} to construct the Fourier series from a truncated Taylor series. Here, we use the truncated Taylor series of $S(\rho)$ shown below.
\begin{align}
    S(\rho)\approx\sum_{k=1}^{K}\frac{1}{k}\tr\left(\rho(I-\rho)^k\right),
\end{align}
where integer $K$ is the truncation order determining the accuracy. {The larger $K$ gives a more accuracy entropy $S(\rho)$.} The details of derivation are deferred to Appendix~\ref{sec:series}. Using this Taylor series, we can find a Fourier series approximation $S(\rho)_{est}$, which is presented in Lemma \ref{le:fourier}. 

\begin{lemma}\label{le:fourier}
For arbitrary quantum state $\rho\in\mathbb{C}^{2^n\times 2^n}$, let $\Lambda$ be the lower bound on all non-zero eigenvalues of $\rho$. There exists a Fourier series $S(\rho)_{est}$ such that $\left|S(\rho)-S(\rho)_{est}\right| \leq\epsilon$ for any $\epsilon\in(0,1)$, where
\begin{align}
    S(\rho)_{est} = \sum_{l=0}^{\lfloor L\rfloor}\sum_{s=D_l}^{U_l}\sum_{k=1}^{K}\frac{b_{l}^{(k)}\binom{l}{s}}{k2^{l}}\tr(\rho\cos(\rho\cdot t(s,l))).\label{eq:entropy estimation}
\end{align}
Particularly, coefficient $U_{l}=\min\{l,\lceil \frac{l}{2}\rceil+M_l\}$ and $D_l=\max\{0,\lfloor \frac{l}{2}\rfloor-M_l\}$, and $\binom{l}{s}$ denotes the binomial coefficient. Meanwhile, coefficients $t(s,l)=(2s-l)\pi/2$, and coefficients $K,L,M_l$ are given by 
\begin{align}
K\in\Theta\left(\frac{\log(\epsilon\Lambda)}{\log(1-\Lambda)}\right), \quad L=\ln\left(\frac{4\sum_{k=1}^{K}1/k}{\epsilon}\right)\frac{1}{\Lambda^2}, \quad M_l=\left\lceil\sqrt{\ln\left(\frac{4\sum_{k=1}^{K}1/k}{\epsilon}\right)\frac{l}{2}}\right\rceil.
\end{align}
For any $k=1,\ldots,K,l=0,\ldots,\lfloor L\rfloor$, the coefficients $b_{l}^{(k)}$ are positive and defined inductively. Explicitly, 
\begin{align}
      b_{l}^{(1)}=0,\quad\text{if $l$ is even},\quad b_{l}^{(1)}=\frac{2\binom{l-1}{(l-1)/2}}{\pi2^{l-1}l},\quad\text{if $l$ is odd},\quad b_{l}^{(k+1)}=\sum_{l'=0}^{l}b_{l'}^{(k)}b_{l-l'}^{(1)}, \quad \forall k\geq1. \label{eq:blk}
\end{align}
Moreover, overall weights of $S(\rho)_{est}$ is bounded as follows,
\begin{align}
  \sum_{l=0}^{\lfloor L\rfloor}\sum_{s=D_{l}}^{U_l}\sum_{k=1}^{K}\frac{b_{l}^{(k)}\binom{l}{s}}{2^lk}\in O\left(\log(K)\right).
\end{align}
\end{lemma}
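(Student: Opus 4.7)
The estimator is built in three nested approximation layers. My strategy is to introduce each layer in turn, control its truncation error, and then assemble the total weight from the coefficients introduced along the way.

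\emph{Taylor layer.} Because all non-zero eigenvalues of $\rho$ lie in $[\Lambda,1]$, the identity $-\ln(1-u)=\sum_{k\geq 1}u^k/k$ applied to $u=I-\rho$ yields the convergent representation $S(\rho)=\sum_{k\geq 1}\tfrac{1}{k}\tr(\rho(I-\rho)^k)$. Truncating at order $K$, each dropped term is bounded spectrally by $(1-\Lambda)^k$, so the tail is at most $\sum_{k>K}(1-\Lambda)^k/k\leq (1-\Lambda)^{K+1}/\Lambda$; forcing this to be $\leq\epsilon/2$ reproduces the stated $K=\Theta(\log(\epsilon\Lambda)/\log(1-\Lambda))$.

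\emph{Fourier layer ($l$-truncation).} For $x\in[0,1]$ set $\theta=\pi(1-x)/2$, so that $\sin\theta=\cos(\pi x/2)$ and $1-x=(2/\pi)\arcsin(\sin\theta)$. Substituting the Taylor series $\arcsin(z)=\sum_{n\geq 0}\binom{2n}{n}z^{2n+1}/(4^n(2n+1))$ produces $1-x=\sum_{l\geq 0}b_l^{(1)}\cos^l(\pi x/2)$ with $b_l^{(1)}$ exactly the odd-$l$ formula in the statement. Raising to the $k$-th power via the Cauchy product gives $(1-x)^k=\sum_l b_l^{(k)}\cos^l(\pi x/2)$ with the inductive convolution $b_l^{(k+1)}=\sum_{l'}b_{l'}^{(k)}b_{l-l'}^{(1)}$; all $b_l^{(k)}$ are positive as convolutions of non-negative numbers. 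Evaluating the generating function at $z=1$ yields $\sum_l b_l^{(k)}=((2/\pi)\arcsin 1)^k=1$. To truncate at $l\leq L$, I use $x\geq\Lambda\Rightarrow\cos(\pi x/2)\leq\cos(\pi\Lambda/2)\leq e^{-c\Lambda^2}$ (for an absolute constant $c>0$) together with $\sum_l b_l^{(k)}=1$ to bound the $l$-tail of $\sum_l b_l^{(k)}\cos^l(\pi x/2)$ by $e^{-c\Lambda^2 L}$; the stated $L\propto\Lambda^{-2}\log(K/\epsilon)$ drives this below $\epsilon/(4\sum_{k=1}^K 1/k)$.

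\emph{Binomial layer ($s$-truncation).} Expand $\cos^l(\pi x/2)=2^{-l}\sum_{s=0}^l\binom{l}{s}\cos(x\,t(s,l))$ with $t(s,l)=(2s-l)\pi/2$. Restricting $s$ to $[D_l,U_l]$ discards the symmetric-Bernoulli tail $2^{-l}\sum_{|s-l/2|>M_l}\binom{l}{s}$, which by Hoeffding's inequality is at most $2e^{-2M_l^2/l}$; the stated $M_l=\Theta(\sqrt{l\log(K/\epsilon)})$ again makes this at most $\epsilon/(4\sum_{k=1}^K 1/k)$. The three errors combine termwise using $|\tr(\rho\cos(\rho t))|\leq 1$ and the triangle inequality, and sum (with the outer factor $\sum_k 1/k$) to at most $\epsilon$. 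The total weight bound follows immediately from $\sum_{s=0}^l\binom{l}{s}/2^l=1$ and $\sum_l b_l^{(k)}\leq 1$, which collapses $\sum_{k,l,s}b_l^{(k)}\binom{l}{s}/(k\,2^l)$ to $\sum_{k=1}^K 1/k\in O(\log K)$.

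\emph{Main obstacle.} The delicate step is the $l$-truncation in Layer 2: the naive bound $|\cos^l|\leq 1$ gives no decay in $l$, so one must couple the uniform summability $\sum_l b_l^{(k)}=1$ with the spectral gap $\cos(\pi x/2)\leq 1-\Omega(\Lambda^2)$ on the support of $\rho$ to extract the exponential decay $e^{-\Omega(\Lambda^2 L)}$. This is precisely where the $1/\Lambda^2$ dependence in $L$, and ultimately in the algorithm's sample complexity, originates; identifying the generating function $h^{(1)}(z)=(2/\pi)\arcsin(z)$ (so that $h^{(1)}(1)=1$) is what makes the weight bound collapse cleanly.
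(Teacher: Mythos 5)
Your proposal is correct and follows essentially the same route as the paper's own proof: the truncated Taylor series of $-\ln$ with spectral tail bound $(1-\Lambda)^{K+1}/\Lambda$, the substitution $1-x=(2/\pi)\arcsin(\cos(\pi x/2))$ with the convolution-defined $b_l^{(k)}$ and the normalization $\sum_l b_l^{(k)}=1$, the $e^{-\Omega(\Lambda^2 L)}$ decay from $\cos(\pi x/2)\le 1-x^2$ on the support, and the Chernoff/Hoeffding truncation of the binomial expansion of $\cos^l$. The weight bound via $\sum_s\binom{l}{s}2^{-l}=1$ and the harmonic sum is likewise identical to the paper's argument.
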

\begin{proof}[Sketch of the proof]
Briefly speaking, we transform the truncated Taylor series $\sum_{k=1}^{K}\frac{1}{k}\tr\left(\rho(I-\rho)^k\right)$ into a weighted sum of cosines by first substituting $1-\rho$ with $\arcsin(\cos(\rho\pi/2))/\pi/2$ and then expanding $\arcsin$ to its Taylor series. Subsequently, we truncate the Taylor series of $\arcsin$ to derive a high-precision approximation. Last, we use the relation $\cos(\rho\pi/2)=\frac{e^{i\rho\pi/2}+e^{-i\rho\pi/2}}{2}$ to cancel the cosines, resulting the desired Fourier series $S(\rho)_{est}$. We provide the detailed analysis in Appendix~\ref{sec:fourier}.
\end{proof}

\subsection{Approximation of $\alpha$-R\'enyi entropy}
As for the quantum $\alpha$-R\'enyi entropy, the estimation task can be simplified. 
Notice the expression $R_{\alpha}(\rho)$, we only need to focus on the quantity $\tr(\rho^\alpha)$ and derive the desired estimate via calculation after obtaining the estimate of $\tr(\rho^\alpha)$. So, we construct the Fourier series approximation to $\tr(\rho^\alpha)$. Moreover, we display how the estimation error of $\tr(\rho^\alpha)$ propagates to the $\alpha$-R\'enyi entropy. 

To begin with, we write $\tr(\rho^\alpha)=\tr(\rho\cdot\rho^{\beta})$, where $\beta=\alpha-1$. Recall the Taylor series of the power function $x^{\beta}$ over the interval $x\in(0,2)$. 
\begin{align}
    x^\beta=\sum_{k=0}^{\infty}\binom{\beta}{k}(x-1)^k,
\end{align}
where $\binom{\beta}{k}=\prod_{j=1}^{k}\frac{\beta-j+1}{j}=\frac{\beta(\beta-1)\ldots(\beta-k+1)}{k!}$ is the generalized binomial coefficient.

Clearly, truncating the infinity series of $x^\beta$ would lead to the desired Taylor series. While, for different $\beta$, the truncation order $K$ will be various. To get such an order, we need more information about the generalized binomial coefficients. Thus we show the bounds on the generalized binomial coefficient below. Please note that the proofs of following results are provided in Appendix~\ref{app:proposition1} \& \ref{app:proposition2} \& \ref{app:renyi_fourier}.
\begin{proposition}\label{le:bound_binom}
For any constant $\beta\in(-1,0)\cup(0,+\infty)$, there exists a bound on the generalized binomial coefficient $\binom{\beta}{k}$. 
\begin{enumerate}
    \item For $\beta\in(-1,0)$ and any integer $k\geq 1$, $|\binom{\beta}{k}|\leq |\beta|$. 
    \item For $\beta\in(0,1]$ and any $k\geq 2$, $\left|\binom{\beta}{k}\right|\leq \left[1+\frac{\beta\ln\frac{(k+1)}{k^2}+\beta-1}{k}\right]^k.$
Particularly, $|\binom{\beta}{k}|\leq \frac{1}{e}$, if $k\in\Omega(1)$. 
\item For $\beta\in(1,+\infty)$ and $k\geq \beta+1$, $\left|\binom{\beta}{k}\right|\leq \left[1+\frac{\beta\ln\frac{(\beta+1)^2}{k}+2}{k}\right]^k.$
Particularly, $|\binom{\beta}{k}|\leq 1$, if $k\in \Omega((\beta+1)^2)$.
\end{enumerate}

Moreover, for an integer $K$, the sum $\sum_{k=1}^{K}\left|\binom{\beta}{k}\right|$ is bounded. 
\begin{align}
    \sum_{k=1}^{K}\left|\binom{\beta}{k}\right|\leq\left\{
    \begin{array}{ll}
     O\left( K+e^2(\beta+1)^{2\beta}\cdot\left[\ln(\lceil e^{\frac{2}{\beta}}(\beta+1)^2\rceil+1)+1\right]-e^{\frac{2}{\beta}}(\beta+1)^2 \right),  & \text{if $\beta\in(1,+\infty)$}, \\
        O(K), & \text{if $\beta\in(0,1]$},\\
     O\left(   |\beta| K \right), & \text{if $\beta\in(-1,0)$}.
    \end{array}\right.
\end{align}
\end{proposition}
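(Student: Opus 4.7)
The plan is to treat the three ranges of $\beta$ separately, working from the product form
$$\left|\binom{\beta}{k}\right| = \frac{\beta}{k}\prod_{j=1}^{k-1}\left|1 - \frac{\beta}{j}\right|,$$
and then to assemble the sum bound by combining the per-term estimates with a small-vs-large $k$ split. For $\beta \in (-1,0)$ every factor equals $(j - \beta)/j < (j+1)/j$ since $|\beta| < 1$, so the product telescopes to something strictly less than $k$, giving $|\binom{\beta}{k}| < |\beta|$ and the $O(|\beta| K)$ bound on the sum immediately.

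For $\beta \in (0, 1]$ all $k$ numbers $\beta/k$ and $1 - \beta/j$ (for $1 \leq j \leq k-1$) are non-negative, so AM--GM applies to give
$$\left|\binom{\beta}{k}\right| \leq \left[\frac{\beta/k + (k-1) - \beta H_{k-1}}{k}\right]^k = \left[1 + \frac{\beta(1/k - H_{k-1}) - 1}{k}\right]^k,$$
where $H_m = \sum_{j=1}^m 1/j$. Invoking $H_{k-1} \geq \ln k$ together with the elementary inequality $\ln k \geq 2\ln k - \ln(k+1) - 1 + 1/k$ for $k \geq 2$ (which reduces to $\ln(1+1/k) + 1 - 1/k \geq 0$) and using monotonicity of $(1+x/k)^k$ in $x$ then yields the stated closed form. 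Passing further through $(1 + x/k)^k \leq e^x$ gives $|\binom{\beta}{k}| \leq e^{\beta - 1}(k+1)^\beta/k^{2\beta}$, which drops below $1/e$ once $k$ exceeds a universal constant; since the handful of leading terms are each at most $1$, the full sum is $O(K)$.

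For $\beta \in (1, +\infty)$ the sign of $1 - \beta/j$ flips near $j = \beta$, so I would split the product at $j = \lfloor\beta\rfloor$ and still apply AM--GM to the $k$ non-negative absolute values. A direct computation of the arithmetic mean gives $1 + \tfrac{1}{k}[\beta/k + 2\beta H_{\lfloor\beta\rfloor} - \beta H_{k-1} - 1 - 2\lfloor\beta\rfloor]$; substituting $H_{\lfloor\beta\rfloor} \leq \ln\beta + 1 \leq \ln(\beta+1) + 1$, $H_{k-1} \geq \ln k$, $\beta/k \leq 1$ (valid since $k \geq \beta + 1$), and $2(\beta - \lfloor\beta\rfloor) \leq 2$, reshapes the bound into $1 + \tfrac{1}{k}[\beta\ln((\beta+1)^2/k) + 2]$, as stated. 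For the sum I would split at $K_0 = \lceil e^{2/\beta}(\beta+1)^2\rceil$: for $k \geq K_0$ the particular bound gives $|\binom{\beta}{k}| \leq 1$, contributing at most $K - K_0 + 1$; for $k < K_0$, using $|\binom{\beta}{k}| \leq e^2(\beta+1)^{2\beta}/k^\beta \leq e^2(\beta+1)^{2\beta}/k$ (since $\beta \geq 1$) and $\sum_{k=1}^{K_0 - 1} 1/k \leq \ln(K_0+1) + 1$ produces the head term $e^2(\beta+1)^{2\beta}[\ln(K_0+1) + 1]$ in the statement.

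The main obstacle I anticipate is the bookkeeping in Case 3: recovering the exact form $\beta\ln((\beta+1)^2/k) + 2$ after replacing harmonic partial sums by logarithms requires matching constants carefully, and one has to handle the boundary between $j < \beta$ and $j > \beta$ (distinguishing integer versus non-integer $\beta$) to apply AM--GM cleanly. Nothing in the argument is conceptually deep---all per-term bounds flow from AM--GM plus standard $H_m$ estimates, and the sum bounds from integral comparisons---but landing precisely on the constants stated in the proposition will take some care.
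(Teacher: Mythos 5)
Your proposal is correct and follows essentially the same route as the paper's own proof: the same product form $\frac{\beta}{k}\prod_{j}|1-\beta/j|$, the same AM--GM step with harmonic-number estimates $H_{\lfloor\beta\rfloor}\leq\ln(\beta+1)+1$ and $H_{k-1}\geq\ln k$, and the same split of the sum at $\lceil e^{2/\beta}(\beta+1)^2\rceil$ using $(1+x/k)^k\leq e^x$ and $k^{-\beta}\leq k^{-1}$. The one loose end you share with the paper --- applying the Case-3 per-term bound to the head terms $k<\beta+1$, where its derivation does not strictly apply --- is not a deviation from the original argument.
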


With the bounds on $\binom{\beta}{k}$ in Proposition~\ref{le:bound_binom}, we are able to develop the Taylor series approximation to $\tr(\rho^\alpha)$.
\begin{proposition}\label{le:taylor_series}
For any constant $\alpha\in(0,1)\cup(1,+\infty)$ and $\xi\in(0,1)$, there exists an integer $K$ such that, for any quantum state $\rho$ with eigenvalue lower bound $\Lambda$,
\begin{align}
    \left|\tr(\rho^\alpha)-1-\sum_{k=1}^{K}\binom{\beta}{k}\tr\left(\rho(\rho-I)^k\right)\right|\leq\xi.\label{eq:taylor_approximation}
\end{align}
Particular, the choice of integer $K$ is shown below.
\begin{align}
K\in\left\{
    \begin{array}{ll}
        \max\{\Omega(\alpha^2), \Omega(\log(\Lambda\xi)/\log(1-\Lambda))\} & \text{if $\alpha\in(2,+\infty)$}, \\[0.5em]
        \Omega(\log(\Lambda\xi)/\log(1-\Lambda)) &  \text{if $\alpha\in(0,1)\cup(1,2]$}.
    \end{array}\right.
\end{align}
\end{proposition}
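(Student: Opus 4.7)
The plan is to reduce the operator statement to a pointwise statement on eigenvalues and then invoke Proposition~\ref{le:bound_binom} to control the Taylor tail. Diagonalising $\rho=\sum_i\lambda_i\ket{v_i}\!\bra{v_i}$, every nonzero eigenvalue lies in $[\Lambda,1]\subset(0,2)$, which is inside the radius of convergence of $x^{\beta}=\sum_{k\ge0}\binom{\beta}{k}(x-1)^{k}$. Thus
\begin{align}
\tr(\rho^{\alpha})-1-\sum_{k=1}^{K}\binom{\beta}{k}\tr\!\bigl(\rho(\rho-I)^{k}\bigr)
=\sum_{i:\,\lambda_i>0}\lambda_i\sum_{k=K+1}^{\infty}\binom{\beta}{k}(\lambda_i-1)^{k}.
\end{align}
Since $\sum_i\lambda_i=1$ and $|\lambda_i-1|\le 1-\Lambda$, the absolute error is at most $\sum_{k=K+1}^{\infty}\bigl|\binom{\beta}{k}\bigr|(1-\Lambda)^{k}$, so it suffices to choose $K$ making this geometric-type tail smaller than $\xi$.

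First I would split by the sign and magnitude of $\beta=\alpha-1$ exactly as in Proposition~\ref{le:bound_binom}. For $\beta\in(-1,0)$ the coefficients satisfy $|\binom{\beta}{k}|\le|\beta|$, so the tail is at most $|\beta|(1-\Lambda)^{K+1}/\Lambda$, and $(1-\Lambda)^{K+1}\le\Lambda\xi/|\beta|$ already gives $K\in\Omega(\log(\Lambda\xi)/\log(1-\Lambda))$ (absorbing the constant $|\beta|<1$). For $\beta\in(0,1]$, once $k\ge\Omega(1)$ one has $|\binom{\beta}{k}|\le 1/e$, and the same geometric-tail calculation delivers $K\in\Omega(\log(\Lambda\xi)/\log(1-\Lambda))$. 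For $\beta\in(1,\infty)$ Proposition~\ref{le:bound_binom} only guarantees $|\binom{\beta}{k}|\le 1$ once $k\ge\Omega((\beta+1)^{2})=\Omega(\alpha^{2})$; so I impose $K\ge c\alpha^{2}$ first, which makes every term in the tail obey $|\binom{\beta}{k}|\le 1$, and then require $(1-\Lambda)^{K+1}/\Lambda\le\xi$, yielding the two-sided $\max$ condition stated in the proposition.

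Combining the three regimes recovers exactly the stated dependence on $\alpha$ and $\Lambda$, and the smallest admissible $K$ is the maximum of the ``coefficient-decay threshold'' from Proposition~\ref{le:bound_binom} and the ``geometric precision threshold'' $\log(\Lambda\xi)/\log(1-\Lambda)$. The main obstacle is the case $\alpha>2$: here the binomial coefficients are not monotonically small from the start, so one cannot naively bound the tail by a geometric series. One must first show that past the index $\Omega((\beta+1)^{2})$ the coefficients are bounded by $1$ (which is precisely what part~3 of Proposition~\ref{le:bound_binom} provides), and only then compare to a geometric sum. A minor subtlety is ensuring that the zero eigenvalues of $\rho$ do not contribute — this is automatic because they appear only through the factor $\lambda_i$ in front of the tail, which vanishes for $\lambda_i=0$, so the lower bound $\Lambda$ only needs to hold on the support of $\rho$.
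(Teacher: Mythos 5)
Your proposal is correct and follows essentially the same route as the paper's own proof: diagonalise $\rho$, bound the Taylor tail by $\sum_{k>K}|\binom{\beta}{k}|(1-\Lambda)^k$ using $\sum_i\lambda_i=1$, invoke Proposition~\ref{le:bound_binom} to get $|\binom{\beta}{k}|\le 1$ past the threshold (which is what forces $K\in\Omega(\alpha^2)$ when $\alpha>2$), and finish with the geometric bound $(1-\Lambda)^{K+1}/\Lambda\le\xi$. Your version is in fact slightly more explicit than the paper's about the three regimes of $\beta$ and about why the zero eigenvalues drop out, but the underlying argument is identical.
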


Now we could give a Fourier series approximation of $R_{\alpha}(\rho)$.
\begin{lemma}\label{le:renyi_fourier}
Consider a quantum state $\rho\in\mathbb{C}^{2^n\times 2^n}$. Let $\Lambda\in(0,1)$ be a lower bound on all non-zero eigenvalues. For any $\alpha\in(0,1)\cup(1,+\infty)$, there exists an estimate $R_{\alpha}(\rho)_{est}$ of $R_{\alpha}(\rho)$ up to precision $\epsilon$. To be specific,
\begin{align}
    R_{\alpha}(\rho)_{est}&=\frac{1}{1-\alpha}\log F_{\alpha}(\rho),
\end{align}
and $F_{\alpha}(\rho)$ satisfies $|F_{\alpha}(\rho)-\tr(\rho^\alpha)|\leq \xi$, where
\begin{align}
  F_{\alpha}(\rho)= 1+ \sum_{l=0}^{\lfloor L\rfloor}\sum_{s=D_l}^{U_l}\left(\sum_{k=1}^{K}(-1)^kb_{l}^{(k)}\binom{\alpha-1}{k}\right)2^{-l}\binom{l}{s}\tr(\rho\cdot \cos(\rho t(s,l))). \label{eq:renyi_approximation}
\end{align}
In particular, the relation between $\epsilon$ and $\xi$ is given in Eq.~\eqref{eq:renyi_precision}, and definition of all $b_{l}^{(k)}$ are given in Eq.~\eqref{eq:blk}.
And the parameters of $F_{\alpha}(\rho)$ are given as follows. Coefficient $t(s,l)=\frac{(2s-l)\pi}{2}$, and $U_{l}=\min\{l,\lceil \frac{l}{2}\rceil+M_l\}$ and $D_l=\max\{0,\lfloor \frac{l}{2}\rfloor-M_l\}$. Moreover,
\begin{align}
    K=\Theta\left(\frac{\log(\Lambda\xi)}{\log(1-\Lambda)}+\alpha^2\right),\quad L=\ln\left(\frac{4\sum_{k=1}^{K}|\binom{\alpha-1}{k}|}{\xi}\right)\frac{1}{\Lambda^2},\quad M_l=\left\lceil\sqrt{\ln\left(\frac{4\sum_{k=1}^{K}|\binom{\alpha-1}{k}|}{\xi}\right)\frac{l}{2}}\right\rceil.
\end{align}
And the overall weights of $F_{\alpha}(\rho)$ are bounded by $\sum_{k=1}^{K}|\binom{\alpha-1}{k}|$, and bounds on $\sum_{k=1}^{K}|\binom{\alpha-1}{k}|$ are given in Table~\ref{tab:my_label_weights}.

\begin{table}[h]
    \centering
    \begin{tabular}{|c|c|}
    \hline
    $\alpha$ & Bound on $\sum_{k=1}^{K}|\binom{\alpha-1}{k}|$ \\[0.2em]
    \hline
        $(0,1)$  & $O(|\alpha-1|K)$\\[0.2em]
        \hline
         $(1,2]$ & $O(K)$ \\[0.2em]
         \hline
        $(2,+\infty)$ & $O\left( K+e^2(\alpha)^{2\alpha-2}\cdot\left[\ln(\lceil e^{\frac{2}{\alpha-1}}(\alpha)^2\rceil+1)+1\right]-e^{\frac{2}{\alpha-1}}(\alpha)^2 \right)$\\[0.2em]
        \hline
    \end{tabular}
    \caption{Upper bound on the overall weights.}
    \label{tab:my_label_weights}
\end{table}
\end{lemma}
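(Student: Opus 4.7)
The plan is to build $F_\alpha(\rho)$ from the Taylor series of Proposition~\ref{le:taylor_series} in exact parallel with the construction of $S(\rho)_{est}$ in Lemma~\ref{le:fourier}, then push the error through the outer logarithm to relate $\xi$ and $\epsilon$. First, I would start from the truncated Taylor approximation
\begin{align}
\tr(\rho^\alpha)\approx 1+\sum_{k=1}^{K}\binom{\alpha-1}{k}\tr\!\left(\rho(\rho-I)^k\right)=1+\sum_{k=1}^{K}(-1)^k\binom{\alpha-1}{k}\tr\!\left(\rho(I-\rho)^k\right),
\end{align}
whose error is at most $\xi/2$ provided $K$ is chosen as in Proposition~\ref{le:taylor_series}. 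This is structurally identical to the Taylor form used in Lemma~\ref{le:fourier}, with the coefficient $1/k$ replaced by $(-1)^k\binom{\alpha-1}{k}$. I would then apply the very same Fourier-conversion gadget as in the sketch of Lemma~\ref{le:fourier}: substitute $I-\rho=\tfrac{2}{\pi}\arcsin(\cos(\rho\pi/2))$, expand $\arcsin$ in its Taylor series, truncate at order $\lfloor L\rfloor$, and finally rewrite $\cos(\rho\pi/2)^l$ via the binomial expansion of $(e^{i\rho\pi/2}+e^{-i\rho\pi/2})/2$. This produces exactly the form~\eqref{eq:renyi_approximation}, with the coefficients $b_l^{(k)}$ inherited from~\eqref{eq:blk}, and with the Gaussian-tail truncation of the inner binomial sum at $D_l$ and $U_l$ exactly mirroring the von Neumann case.

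Second, I would control the additional error introduced by truncating $\arcsin$ and truncating the binomial tails. The natural way is to replace the bound $\sum_{k=1}^{K}1/k$ that appeared in Lemma~\ref{le:fourier} with $\sum_{k=1}^{K}|\binom{\alpha-1}{k}|$, since this is precisely the $\ell_1$ norm of the outer coefficients. Choosing $L$ and $M_l$ as stated in the lemma forces both the $\arcsin$ tail and the cosine-expansion tail to contribute at most $\xi/4$ each, summing with the Taylor error to give $|F_\alpha(\rho)-\tr(\rho^\alpha)|\leq\xi$. The bounds on $\sum_{k=1}^{K}|\binom{\alpha-1}{k}|$ in Table~\ref{tab:my_label_weights} then follow by a direct case split on $\alpha\in(0,1)$, $(1,2]$, $(2,\infty)$ using the three pointwise bounds in Proposition~\ref{le:bound_binom}.

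Third, I would convert the additive error on $\tr(\rho^\alpha)$ into an additive error on $R_\alpha(\rho)=\tfrac{1}{1-\alpha}\log\tr(\rho^\alpha)$. Using the mean value theorem for $\log$ on the interval between $F_\alpha(\rho)$ and $\tr(\rho^\alpha)$, and the trivial lower bound $\tr(\rho^\alpha)\geq\Lambda^{\alpha-1}$ when $\alpha>1$ (respectively $\tr(\rho^\alpha)\geq 1$ when $\alpha\in(0,1)$), I obtain
\begin{align}
|R_\alpha(\rho)_{est}-R_\alpha(\rho)|\leq\frac{1}{|1-\alpha|}\cdot\frac{\xi}{\min\{F_\alpha(\rho),\tr(\rho^\alpha)\}}.
\end{align}
Setting this $\leq\epsilon$ and solving for $\xi$ yields the relation alluded to in Eq.~\eqref{eq:renyi_precision}, which I would write explicitly as something like $\xi=|1-\alpha|\,\epsilon\,\Lambda^{\max\{\alpha-1,0\}}/2$ (with an extra factor of $2$ to absorb the perturbation of the denominator).

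The main obstacle is the third step rather than the first two: the Fourier conversion is mechanical once Lemma~\ref{le:fourier} is in hand, and the weight bounds are plain calculus. The delicate point is ensuring that $F_\alpha(\rho)$ itself stays bounded away from zero so that the logarithm remains Lipschitz, and keeping track of how the implicit constant in $K=\Theta(\log(\Lambda\xi)/\log(1-\Lambda)+\alpha^2)$ depends on $\alpha$ for $\alpha>2$, where the Taylor series of $x^{\alpha-1}$ converges more slowly and Proposition~\ref{le:bound_binom}(3) only gives a useful bound once $k\gtrsim(\alpha+1)^2$. These case distinctions are what force the $\alpha^2$ term in $K$ and the separate row in Table~\ref{tab:my_label_weights}.
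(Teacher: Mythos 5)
Your proposal is correct and follows essentially the same route as the paper: truncated Taylor series from Proposition~\ref{le:taylor_series}, the $\arcsin(\cos(\cdot))$ substitution and $b_l^{(k)}$ expansion inherited from Lemma~\ref{le:fourier}, truncation at $L$ and $M_l$ with the error budget split among the three truncations, and the weight bound $\sum_{k=1}^K|\binom{\alpha-1}{k}|$ from Proposition~\ref{le:bound_binom}. The one place you diverge is the error propagation through the logarithm: you lower-bound $\tr(\rho^\alpha)$ by $\Lambda^{\alpha-1}$ (for $\alpha>1$) and by $1$ (for $\alpha\in(0,1)$), whereas the paper uses $\tr(\rho^\alpha)\geq[\tr(\rho^2)]^{\alpha-1}$ and $\tr(\rho^\alpha)\geq\tr(\rho^2)$ to arrive at the specific form of Eq.~\eqref{eq:renyi_precision}; both bounds are valid and yield a workable $\xi$, but your version produces a slightly different explicit formula than the one the lemma cites, so you would need to either adopt the paper's purity-based bound or restate the $\xi$--$\epsilon$ relation accordingly.
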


Here, we discuss the relation between $\xi$ and $\epsilon$. Let $\widehat{\tr(\rho^\alpha)}$ be an estimate of $\tr(\rho^\alpha)$ up to error $\xi$, i.e., $|\widehat{\tr(\rho^\alpha})-\tr(\rho^\alpha)|\leq\xi$. Then the difference between the corresponding logarithms is given below.
\begin{align}
    \left|\frac{1}{1-\alpha}\log\widehat{\tr(\rho^\alpha)}-\frac{1}{1-\alpha}\log\tr(\rho^\alpha)\right|=\frac{1}{|1-\alpha|}\left|\log\frac{\widehat{\tr(\rho^\alpha)}}{\tr(\rho^\alpha)}\right|=\frac{1}{|1-\alpha|}\left|\log\left(1+\frac{\widehat{\tr(\rho^\alpha})-\tr(\rho^\alpha)}{\tr(\rho^\alpha)}\right)\right|.
\end{align}
Notice that $|\log(1+x)|\leq 2|x|$ for any $x\in[-0.5, 1]$. Then we can assume that $\frac{\widehat{\tr(\rho^\alpha})-\tr(\rho^\alpha)}{\tr(\rho^\alpha)}$ falls in the interval $[-0.5,1]$ and hence we have
\begin{align}
     \left|\frac{1}{1-\alpha}\log\tr(\rho^\alpha)-\frac{1}{1-\alpha}\log\widehat{\tr(\rho^\alpha)}\right| \leq \frac{2}{|1-\alpha|}\left|\frac{\widehat{\tr(\rho^\alpha})-\tr(\rho^\alpha)}{\tr(\rho^\alpha)}\right|\leq \frac{2\xi}{|1-\alpha||\tr(\rho^\alpha)|}.
\end{align}
Moreover, since $\tr(\rho^\alpha)\geq [\tr(\rho^2)]^{\alpha-1}$ for all $\alpha\in(0,1)\cup(2,+\infty)$, and $\tr(\rho^{\alpha})\geq\tr(\rho^2)$ for all $\alpha\in(1,2]$, we can determine $\xi$ upon receiving $\epsilon$. Explicitly, $\xi$ is given by
\begin{align}
    \xi=\left\{
    \begin{array}{ll}
      \frac{|1-\alpha|[\tr(\rho^2)]^{\alpha-1}}{2}\epsilon,  & \forall\alpha\in(0,1)\cup(2,+\infty), \\
        \frac{|1-\alpha|\tr(\rho^2)}{2}\epsilon, & \forall\alpha\in(1,2].
    \end{array}\right. \label{eq:renyi_precision}
\end{align}

\begin{remark}
The Swap test~\cite{ekert2002direct} can evaluate the term $\tr(\rho^2)$ efficiently. As a result, the 2-R\'enyi entropy can be obtained via Swap test as well. If let $r_\rho$ be the rank of state $\rho$, we can substitute $\tr(\rho^2)$ with $1/r_{\rho}$ due to the fact that $\tr(\rho^2)\geq 1/r_{\rho}$. In these cases, the estimation accuracy $\xi$ could be polynomially small if $\rho$ is low-rank or $\tr(\rho^2)$ is polynomially small.
\end{remark}

Now, we have provided estimates for the von Neumann and $\alpha$-R\'enyi entropy in Lemma~\ref{le:fourier} \& \ref{le:renyi_fourier}. Especially, these estimates can be obtained by evaluating the Fourier series in Eq.~\eqref{eq:entropy estimation} \& \eqref{eq:renyi_approximation} on quantum computers. To achieve this purpose, we proceed to devise quantum circuits to estimate quantity $\tr(\rho\cos(\rho\cdot t(s,l)))$.

\section{Quantum circuits}\label{sec:quantum_circuit}
In this section, we first show a scheme to estimate the term $\tr(\rho\cos(\rho t))$. We also demonstrate the validity and estimate the cost of primitive single/two-qubit gates. Then we discuss compressing the circuit width. In the end, we discuss a crucial subroutine that can simulate the exponentiation of the Swap operator. 

\subsection{Circuit scheme}
For simplicity, we first consider estimating $\tr(\rho\cos(\rho t))$ with small $t$. Based on the circuit of iterative quantum phase estimation \cite{kitaev1995quantum}, the circuit for this purpose is depicted in Figure~\ref{figure:circuit_exponentiation}. Please note that we denote the top first qubit of the circuit as the measure register. The first state $\rho$ is prepared in the main register, and other copies are prepared on the ancillary registers.
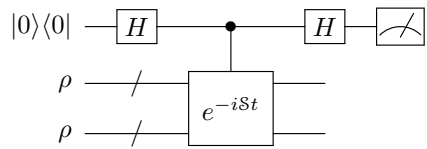
\begin{figure}[htb]
\[ 
\Qcircuit @C=1.2em @R=1em {
\lstick{\op{0}{0}} & \gate{H} & \ctrl{1} & \gate{H} & \meter\\
\lstick{\rho} &/\qw & \multigate{1}{e^{-i\mathcal{S} t}} & \qw &\\
\lstick{\rho} &/\qw & \ghost{e^{-i\mathcal{S} t}} & \qw &
}
\]
\caption{\footnotesize For a short time $t$, we first prepare a ground state $\op{0}{0}$ in the measure register, and prepare states $\rho$ in the main register the ancillary register, respectively. Subsequently, perform the controlled unitary operator $e^{-i\mathcal{S}t}$ on state $\rho\otimes\rho$. At the end of the circuit, we measure along the eigenbasis of Pauli $Z$, which would immediately lead to an estimate for $\tr(\rho\cos(\rho t))$ up to precision $O(t^2)$.}
\label{figure:circuit_exponentiation}
\end{figure}

In Figure~\ref{figure:circuit_exponentiation}, qubit $\op{0}{0}$ and two copies of state $\rho$ are input into the circuit. Then two Hadamarad gates are applied to the measure register, sandwiching a controlled operation operation c-$e^{-i\mathcal{S} t}$, i.e., the exponentiation of the Swap operator, where $\mathcal{S}$ denotes the Swap operator. In the end of the circuit, the measurement occurs on the measure register along the eigenbasis of the Pauli matrix $Z$. Particularly, the measurement outcomes lead to the value of $\tr(\rho\cos(\rho t))$. More precisely, $\tr(\rho\cos(\rho t))\approx\Pr[0]-\Pr[1]$, where $\Pr[0/1]$ denotes the probability of observing outcome $0/1$, respectively. And the estimation accuracy is shown in the result below.
\begin{proposition}\label{th:main_circuit}
Let $V$ denote the unitary corresponding to the circuit in Figure~\ref{figure:circuit_exponentiation}. For any quantum state $\rho$ and a $small$ parameter $t\in(-1,1)$, the measurement outcome is close to $\tr(\rho\cos(\rho t))$. Explicitly,
\begin{align}
    \left|\tr(\rho\cos(\rho t))-\tr\left(\left(V(\op{0}{0}\otimes\rho^{\otimes2})V^{\dagger}\right)Z_0\right)\right|\leq 2t^2.
\end{align}
Here, $Z_0$ equals to $Z \otimes I \otimes I$, which indicates measuring the measure register along Pauli $Z$'s eigenbasis.
\end{proposition}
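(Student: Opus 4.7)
The plan is to recognise the circuit $V$ as a Hadamard test applied to the unitary $U = e^{-i\mathcal{S}t}$ acting on the two input copies of $\rho$, evaluate the resulting real-part trace in closed form using $\mathcal{S}^{2}=I$, and finally control the gap to $\tr(\rho\cos(\rho t))$ by a short Taylor estimate.

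First I would establish the Hadamard-test identity
\begin{equation*}
\tr\!\big(V(|0\rangle\langle 0|\otimes\rho^{\otimes 2})V^{\dagger}\,Z_{0}\big) \;=\; \operatorname{Re}\tr\!\big(e^{-i\mathcal{S}t}(\rho\otimes\rho)\big).
\end{equation*}
This comes from tracking the state through the two Hadamards and the controlled unitary: after the first $H$ the control qubit is in $|+\rangle\langle +|$, and after $c\text{-}U$ the joint state contains the four cross terms $|a\rangle\langle b|\otimes U^{a}(\rho\otimes\rho)(U^{\dagger})^{b}$ with $a,b\in\{0,1\}$. Using $HZH=X$, measuring $Z_{0}$ after the final Hadamard is equivalent to evaluating $\langle X_{0}\rangle$ on this pre-measurement state, which collapses to the real part of the trace above. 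Next I would compute the right-hand side explicitly. Because $\mathcal{S}^{2}=I$, grouping the Taylor series of $e^{-i\mathcal{S}t}$ by parity gives the closed form $e^{-i\mathcal{S}t} = \cos(t)\,I - i\sin(t)\,\mathcal{S}$. Combining this with $\tr(\rho\otimes\rho)=1$ and the standard swap identity $\tr(\mathcal{S}(\rho\otimes\rho)) = \tr(\rho^{2})$ yields $\tr(e^{-i\mathcal{S}t}(\rho\otimes\rho)) = \cos(t) - i\sin(t)\,\tr(\rho^{2})$, whose real part is exactly $\cos(t)$.

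The last step is to bound $|\cos(t) - \tr(\rho\cos(\rho t))|$. Diagonalising $\rho = \sum_{j}\lambda_{j}|\psi_{j}\rangle\langle\psi_{j}|$ and expanding both cosines as even-power Taylor series, the difference telescopes to $\sum_{k\geq 1}\frac{(-1)^{k} t^{2k}}{(2k)!}\big(1 - \tr(\rho^{2k+1})\big)$. Because $0\leq\tr(\rho^{2k+1})\leq 1$ for every $k\geq 1$, the absolute value is at most $\cosh(t) - 1$, and a routine elementary estimate gives $\cosh(t)-1 \leq 2t^{2}$ for $|t|<1$, which is the claimed bound. The main bookkeeping hurdle I anticipate is the Hadamard-test reduction in the first step, namely tracking the four off-diagonal ancilla blocks produced by the controlled unitary and distilling them cleanly into the real part of a single trace; once that identity is in hand, everything afterwards is algebra together with the one-line Taylor estimate.
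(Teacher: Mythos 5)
Your proof is correct, and it reaches the stated bound by a route that differs from the paper's in its final step. Both arguments share the same skeleton: the Hadamard-test identity $\tr(V(\op{0}{0}\otimes\rho^{\otimes2})V^{\dagger}Z_0)=\operatorname{Re}\tr(e^{-i\mathcal{S}t}\rho^{\otimes2})$ and the closed form $e^{-i\mathcal{S}t}=\cos(t)I-i\sin(t)\mathcal{S}$. Where you diverge is afterwards: the paper works at the operator level, comparing $\tr_{anc}(e^{-i\mathcal{S}t}\rho^{\otimes2})=(\cos(t)I-i\sin(t)\rho)\rho$ with $e^{-i\rho t}\rho$ in trace norm and only then taking traces, whereas you evaluate the measurement statistic in closed form --- using that $\tr(\mathcal{S}\rho^{\otimes2})=\tr(\rho^2)$ is real, the outcome is \emph{exactly} $\cos(t)$ --- and then bound the scalar gap $|\cos(t)-\tr(\rho\cos(\rho t))|\le\cosh(t)-1\le 2t^2$ via the even Taylor expansion and $0\le\tr(\rho^{2k+1})\le 1$. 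Your version is cleaner and in fact sharper for this single proposition (it gives roughly $0.55\,t^2$), and it exposes the instructive fact that one application of c-$e^{-i\mathcal{S}t}$ yields a $\rho$-independent real part. What the paper's operator-level bound buys, and what your scalar argument does not directly supply, is the trace-norm estimate $\|\tr_{anc}(e^{-i\mathcal{S}t}\rho^{\otimes2})-e^{-i\rho t}\rho\|_{tr}\le 2t^2$ that Proposition~\ref{th:main_cost} reuses to accumulate errors over the $Q$ sequential segments of the long-time circuit; if you wanted your approach to feed into that later argument, you would still need to prove the operator-level statement separately.
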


\begin{proof}
To demonstrate the validity, we focus on the state of the measure register and the main register. Notice that, after the controlled operation, the initial state evolves into the following form.
\begin{align}
    \frac{1}{2}\left[\op{0}{0}\otimes\rho^{\otimes 2}+\op{1}{1}\otimes e^{-i\mathcal{S}t}\rho^{\otimes 2}e^{i\mathcal{S}t}\right]+\frac{1}{2}\left[\op{0}{1}\otimes\rho^{\otimes 2}e^{i\mathcal{S}t}+\op{1}{0}\otimes e^{-i\mathcal{S}t}\rho^{\otimes 2}\right].\label{eq:state_evolved}
\end{align}

Next, trace out the appended copy of $\rho$. For instance, a part of Eq.~\eqref{eq:state_evolved} is traced as follows.
\begin{align}
    |1\rangle\langle 0|\otimes \tr_{anc}\left(e^{-i \mathcal{S}t}\rho^{\otimes 2}\right)=|1\rangle\langle 0|\otimes \tr_{anc}\left(\left(\cos(t) I-i\sin(t)\mathcal{S}\right) \rho^{\otimes 2}\right)=\op{1}{0}\otimes(\cos(t)I-i\sin(t)\rho)\rho.\label{eq:01}
\end{align}
Notation $\tr_{anc}$ means tracing out the appended copy of $\rho$. Here, we have used facts that $e^{-i\mathcal{S}t}=\cos(t)I-i\sin(t)\mathcal{S}$ and $\tr_{anc}(\mathcal{S}\rho\otimes \sigma)=\rho\sigma$. A similar result could be derived for $\op{0}{1}\otimes\rho^{\otimes2} e^{i\mathcal{S}t}$.

We bound the difference between $(\cos(t)I-i\sin(t)\rho)\rho$ and $e^{-i\rho t}\rho$.
\begin{align}
    {\rm difference}&=\left\|\left[e^{-i\rho t}-(\cos(t)I-i\sin(t)\rho)\right]\rho\right\|_{tr}\label{eq:difference}\\
    &=\left\|(1-\cos(t))\rho-i(t-\sin(t))\rho^2+\sum_{k\geq 2}\frac{(-i\rho t)^k}{k!}\rho\right\|_{tr}\nonumber\\
    &\leq 2\sin^2(t/2)+|t-\sin(t)|+\frac{\sqrt{2}t^2}{2}\nonumber\\
    &\leq 2t^2,\nonumber
\end{align}
where we have used inequalities $|x-\sin(x)|\leq x^2/2$ and $|\sin(x)|\leq x$, and the inequality $\|\sum_{k\geq2}{(-i\rho t)^k}/{k!}\|_{tr}\leq {\sqrt{2}t^2}/{2}$ (The proof can be found in Appendix~\ref{sec:exponent_inequality}). Again, a similar result can be found for $\rho^{\otimes2} e^{i\mathcal{S}t}$ and $\rho e^{i\rho t}$. 

Note that the measurement outcome is 
\begin{align}
    \Pr[0]-\Pr[1]&=\frac{\tr\left(e^{-i\mathcal{S}t}\rho^{\otimes2}\right)+\tr\left(\rho^{\otimes2}e^{i\mathcal{S}t}\right)}{2}=\frac{\tr\left(\tr_{anc}(e^{-i\mathcal{S}t}\rho^{\otimes2})\right)+\tr\left(\tr_{anc}(\rho^{\otimes2}e^{i\mathcal{S}t})\right)}{2}.
\end{align}
Recall we have shown that $\left\|\tr_{anc}(e^{-i\mathcal{S}t}\rho^{\otimes2})-e^{-i\rho t}\rho\right\|_{tr}\leq 2t^2$ and $\left\|\tr_{anc}(\rho^{\otimes2}e^{i\mathcal{S}t})-\rho e^{i\rho t}\right\|_{tr}\leq 2t^2$.
Besides, $\tr(\rho\cos(\rho t))=[\tr(e^{-i\rho t}\rho)  +\tr(\rho e^{i\rho t})]/2$. Immediately, we derive the result
\begin{align}
    \left|\Pr[0]-\Pr[1]-\tr(\rho\cos(\rho t))\right|\leq 2t^2.
\end{align}
\end{proof}

Now we have shown that the circuit in Figure~\ref{figure:circuit_exponentiation} can be used to estimate the term $\tr(\rho\cos(\rho t))$, especially when $t$ is small.
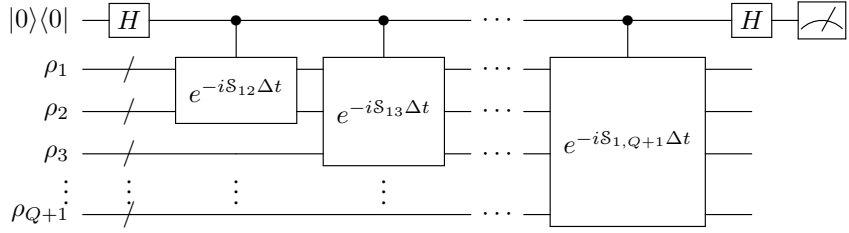
\begin{figure}[htb]
\[ 
\Qcircuit @C=1em @R=0.7em {
\lstick{\op{0}{0}} & \gate{H} & \ctrl{1} & \ctrl{1} &\qw & \cdots & & \ctrl{1} & \gate{H} & \meter\\
\lstick{\rho_{1}} &/\qw & \multigate{1}{e^{-i\mathcal{S}_{12}\Delta t}} & \multigate{2}{e^{-i\mathcal{S}_{13}\Delta t}} &\qw& \cdots & & \multigate{4}{e^{-i\mathcal{S}_{1,Q+1}\Delta t}} & \qw &\\
\lstick{\rho_{2}} &/\qw & \ghost{e^{-i\mathcal{S}_{12}\Delta t}}  & \ghost{e^{-i\mathcal{S}_{13}\Delta t}}&\qw & \cdots & & \ghost{e^{-i\mathcal{S}_{1,Q+1}\Delta t}} & \qw &\\
\lstick{\rho_{3}} &/\qw & \qw  & \ghost{e^{-i\mathcal{S}_{13}\Delta t}}&\qw & \cdots & & \ghost{e^{-i\mathcal{S}_{1,Q+1}\Delta t}}& \qw \\
\lstick{\vdots} &\vdots & \vdots & \vdots& &  & &   & &  & &  \\
\lstick{\rho_{Q+1}} &/\qw & \qw  & \qw&\qw &\cdots &  & \ghost{e^{-i\mathcal{S}_{1,Q+1}\Delta t}}& \qw
}
\]
\caption{\footnotesize 
For general time $t$, the circuit could be inductively constructed. The operator $e^{-i\mathcal{S}\Delta t}$ is sequentially applied on the main register and different ancillary registers, conditional on the measure register. Here, we append $Q$ ancillary states and use $Q$ times of $e^{-i\mathcal{S}\Delta t}$. For clear, we label states on different register by $1,2,3,Q+1$, and the script of the swap operator indicates the registers that swap operator acts on. 
}
\label{figure:circuit_instance}
\end{figure}
Regarding a large $t$, we use a circuit similar to that in Figure~\ref{figure:circuit_exponentiation}. Particularly, we divide the parameter $t$ into several small pieces {$\Delta t$} and run c-$e^{-i\mathcal{S}\Delta t}$ sequentially. The corresponding circuit is depicted in Figure~\ref{figure:circuit_instance}, where we run $Q$ controlled operations, and the parameter $\Delta t$ is small.

Based on the result in Proposition~\ref{th:main_circuit}, we can readily derive the result for estimating $\tr(\rho\cos(\rho t))$ with a large $t\in\mathbb{R}$.
\begin{proposition}\label{th:main_cost}
For any quantum state $\rho\in\mathbb{C}^{2^n\times2^n}$ and $t\in\mathbb{R}$, there is a quantum circuit that can estimate the quantity $\tr(\rho\cos(\rho t))$ with precision $\epsilon$. The number of needed copies of state is $O(t^2/\epsilon)$.
\end{proposition}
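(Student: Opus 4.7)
The plan is to leverage the short-time circuit of Proposition~\ref{th:main_circuit} by slicing a general time $t$ into $Q$ equal pieces of length $\Delta t = t/Q$, and running the iterative circuit of Figure~\ref{figure:circuit_instance}: a Hadamard on the measure register, followed by $Q$ controlled swap-exponentials c-$e^{-i\mathcal{S}_{1,k+1}\Delta t}$, each acting between the main register and a fresh ancillary copy $\rho_{k+1}$, and concluded with a Hadamard and a $Z$-measurement. Because every ancilla is used only once and is then traced out, the action on the measure register and main register reduces to a composition of $Q$ controlled channels, each of which, by the argument inside the proof of Proposition~\ref{th:main_circuit}, approximates the controlled ideal evolution c-$e^{-i\rho\Delta t}$.

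First, I would formalize the inductive step. Let $\mathcal{E}_k$ denote the actual controlled map obtained after partially tracing the $k$-th ancilla, and let $\mathcal{U}_k$ denote the ideal controlled-$e^{-i\rho\Delta t}$. Specializing Eq.~\eqref{eq:01} and the bound~\eqref{eq:difference} to parameter $\Delta t$ yields $\bigl\| (\mathcal{E}_k - \mathcal{U}_k)(\sigma) \bigr\|_{tr} \le 2(\Delta t)^2$ whenever $\sigma$ is a block-diagonal state of the measure + main register whose main-register marginal is $\rho$ (which is preserved under the ideal evolution). Using the triangle inequality for the trace norm, monotonicity under the freshly prepended ancilla, and the fact that controlled-$e^{-i\rho\Delta t}$ preserves the relevant form, I would chain the $Q$ estimates into
\begin{equation}
\bigl\| \mathcal{E}_Q\circ\cdots\circ\mathcal{E}_1 (\sigma_0) - \mathcal{U}_Q\circ\cdots\circ\mathcal{U}_1 (\sigma_0) \bigr\|_{tr} \le 2Q (\Delta t)^2 = \frac{2 t^2}{Q},
\end{equation}
where $\sigma_0 = H\lvert 0\rangle\langle 0\rvert H\otimes\rho$. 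Since $\mathcal{U}_Q\circ\cdots\circ\mathcal{U}_1$ implements c-$e^{-i\rho t}$ exactly, the final Hadamard and $Z$-measurement on the ideal side produce the expectation $\tr(\rho\cos(\rho t))$, by the same calculation that closes the proof of Proposition~\ref{th:main_circuit}. Hence the actual measurement expectation $\Pr[0]-\Pr[1]$ differs from $\tr(\rho\cos(\rho t))$ by at most $2t^2/Q$.

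Choosing $Q = \lceil 2 t^2/\epsilon \rceil$ makes the circuit bias at most $\epsilon/2$, and each run of the circuit consumes $Q+1 = O(t^2/\epsilon)$ copies of $\rho$ (one main, $Q$ ancillary). To convert the single-shot $\pm 1$ outcomes into an estimator of $\tr(\rho\cos(\rho t))$ within additive error $\epsilon$, I would invoke a Hoeffding/Chernoff bound on the empirical mean of the bounded $Z$-outcomes; the constant-factor precision loss from allocating $\epsilon/2$ to bias and $\epsilon/2$ to sampling variance is absorbed into the big-$O$. The simulation of c-$e^{-i\mathcal{S}\Delta t}$ itself is handled by the Swap-exponentiation subroutine discussed later in the section, so the quantum cost reduces to the copy count claimed.

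The main obstacle is the error accumulation step: controlling the composition of $Q$ approximate controlled channels so that the error grows only linearly in $Q$, rather than exponentially. This requires making precise that, along the ideal trajectory, the state entering each slice still has main-register marginal exactly $\rho$, so that the per-slice bound $2(\Delta t)^2$ from~\eqref{eq:difference} can be reapplied at every step; once that invariant is nailed down, the telescoping via the triangle inequality is routine and delivers the advertised $O(t^2/\epsilon)$ copy complexity.
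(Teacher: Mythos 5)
Your proposal is correct and follows essentially the same route as the paper's proof: slice $t$ into $Q$ pieces of length $\Delta t = t/Q$, apply the per-slice error bound $2(\Delta t)^2$ from Proposition~\ref{th:main_circuit}, accumulate linearly via the triangle inequality to get total error $2t^2/Q$, and set $Q=\lceil 2t^2/\epsilon\rceil$ so that $Q+1=O(t^2/\epsilon)$ copies suffice. Your added care about the invariant needed to re-apply the per-slice bound, and the remark on Hoeffding sampling, go slightly beyond the paper's (terser) argument but do not change the approach; note only that the paper counts $O(t^2/\epsilon)$ copies per circuit run and defers the measurement-repetition overhead to the later algorithm analysis.
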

\begin{proof}
Suppose we divide the parameter $t$ into $Q$ pieces and the parameter in the Swap operator becomes $\Delta t=t/Q$. In the circuit, we need $Q$ ancillary register and the state in each ancillary register is $\rho$. For each interval of length $\Delta t$, the resulting error in the current state of the measure register and main register is at most $2\Delta t^2$ by Proposition \ref{th:main_circuit}. Easily, we can deduce the accumulating error, i.e., $Q \times 2\Delta t^2\leq 2t^2/Q$. To suppress the overall accumulating errors, we set $Q=\lceil\frac{2t^2}{\epsilon}\rceil$. Consequently, the final error is at most $\epsilon$. 

Moreover, the number of used copies is $(Q+1)=O(t^2/\epsilon)$.
\end{proof}

Proposition~\ref{th:main_cost} has shown that we can use the circuit in Figure~\ref{figure:circuit_instance} to estimate the Fourier series. While there are some remaining issues. One is that the large circuit width may be a huge burden in practice. Another is to simulate the exponentiation of the Swap operator. The solutions to overcome these issues are discussed in the following sections.

\subsection{Circuit width circumvent}\label{sec:qubit_reset}
In Figure~\ref{figure:circuit_instance}, there are $Q+1$ copies of $\rho$ prepared at the beginning, while the interaction only occurs between two copies at a time. For instance, copy $\rho_{1}$ only interacts with the ancillary register being state $\rho_2$. Once the controlled operation c-$e^{-i\mathcal{S}_{12}\Delta t}$ is completed, the occupied ancillary register is relieved. At that time, copy $\rho_3$ is employed for the next interaction. Hence, interactions occur alternatively between the measure register and main register and different ancillary registers.

Notice that the relieved ancillary register will no longer affect the state of the rest registers. Thus, we can measure the relieved register and reuse it for preparing a new state by the qubit reset technique. Using qubit reset means that we can measure subsets of the qubits and reinitialize them~\cite{egger2018pulsed}. In the past decade, many experimental methods have been developed to actively reset the qubits on superconducting qubits~\cite{reed2010fast,riste2012initialization,geerlings2013demonstrating,govia2015unitary,egger2018pulsed,magnard2018fast}. Recently, the qubit reset also applies to design quantum algorithms~\cite{Huggins2019a,Liu2019a,foss2021holographic,Yirka2020,Rattew2020} with reduced circuit width. Particularly, \cite{Yirka2020} uses qubit reset to devise quantum circuits for estimating $\tr(\rho^{ k})$ with $k\in\mathbb{N}$, which can contribute to the problem of quantum R\'enyi entropy \cite{renyi1961measures} estimation. 

Here, we use the qubit reset to compression the width of the circuit in Figure~\ref{figure:circuit_instance}. In this case, we only need one ancillary register. Specifically, we prepare the state $\rho_1\otimes\rho_2$ on the main register and ancillary register in the beginning. Then the interaction occurs between them. Once the interaction ends, the ancillary register is measured. Subsequently, the ancillary register is readily reset to state $\rho_{3}$. And then, the interaction occurs again. The same procedure of measuring and the reset repeats $Q$ times in all. The corresponding circuit using qubit reset can be found in Figure~\ref{figure:circuit_instance_reset}.

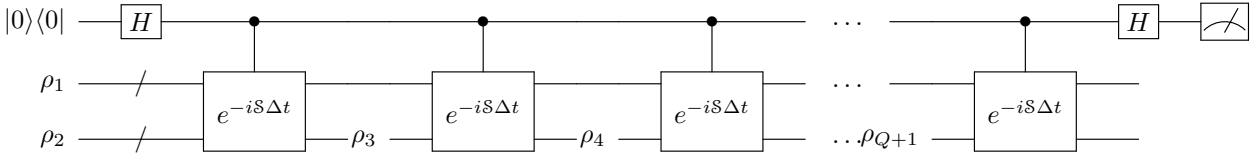
\begin{figure*}[hbt]
\[ 
\Qcircuit @C=1.6em @R=1.2em {
\lstick{\op{0}{0}} & \gate{H} & \ctrl{1} & \qw & \qw & \ctrl{1} & \qw & \qw & \ctrl{1} & \qw & \ldots &  & \qw & \ctrl{1} &\gate{H} & \meter\\
\lstick{\rho_{1}} &/\qw & \multigate{1}{e^{-i\mathcal{S}\Delta t}} & \qw & \qw & \multigate{1}{e^{-i\mathcal{S}\Delta t}} & \qw & \qw & \multigate{1}{e^{-i\mathcal{S}\Delta t}} & \qw & \ldots & & \qw& \multigate{1}{e^{-i\mathcal{S}\Delta t}}&\qw\\
\lstick{\rho_{2}} &/\qw & \ghost{e^{-i\mathcal{S}\Delta t}} & \qw  & \lstick{\rho_{3}} & \ghost{e^{-i\mathcal{S}\Delta t}}  & \qw & \lstick{ \rho_{4}} & \ghost{e^{-i\mathcal{S}\Delta t}} & \qw & \ldots & & \lstick{\rho_{Q+1}} & \ghost{e^{-i\mathcal{S}\Delta t}} & \qw
}
\]
\caption{\footnotesize A quantum circuit for estimating $\tr(\rho\cos(\rho t))$ using qubit reset. The break and a state $\rho$ in the wire means implementing qubit reset.}
\label{figure:circuit_instance_reset}
\end{figure*}

\subsection{Subroutine: Swap operator exponentiation}\label{sec:swap}
In this section, we devise a circuit to simulate the exponentiation of the Swap operator by the technique for simulating a linear combination of unitaries in \cite{Berry2015}. Note that the unitary $e^{-i\mathcal{S}\Delta t}$ can be written as a linear combination of unitaries, i.e., $e^{-i\mathcal{S}\Delta t}=\cos(\Delta t)I_{2n}-i\sin(\Delta t)\mathcal{S}$. The index of identity means the number of qubits that the identity acts on. To break down the exponentiation of the Swap operator, we need to build up the module $W$ first, which is shown in Figure~\ref{figure:circuit_W}. The first two qubits in state $\ket{00}$ are newly added ancillary qubits.

\begin{figure}[htb]
\[ 
\Qcircuit @C=0.6em @R=0.8em {
\lstick{\op{0}{0}} & \gate{R_{1}} & \ctrlo{1} & \qw & \qw & \qw \\
\lstick{\op{0}{0}} & \qw & \gate{R_{2}} & \multigate{2}{{\rm select(\mathcal{S})}} & \gate{R_{2}^{\dagger}} & \qw \\
\lstick{\rho} &/\qw & \qw & \ghost{{\rm select(\mathcal{S})}} & \qw & \qw \\
\lstick{\rho} &/\qw & \qw & \ghost{{\rm select(\mathcal{S})}} & \qw & \qw^{\quad\quad W}
\gategroup{1}{2}{4}{5}{2.1em}{--}
}
\]
\caption{\footnotesize Quantum circuit for implementing the module $W$.}
\label{figure:circuit_W}
\end{figure}
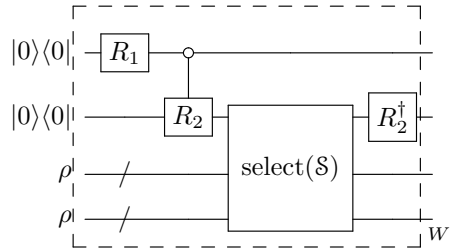

First, the $R_1$ gate is a rotation gate, the oc-$R_2$ gate means that act the rotation gate $R_2$ on the target qubit when the control qubit is in the state $\ket{0}$. The two single qubit rotations $R_1$ and $R_2$ are defined as follows: 

\begin{align}
    &R_{1}\ket{0}=\frac{\alpha}{2}\ket{0}+\sqrt{1-\frac{\alpha^2}{4}}\ket{1},\label{eq:rot_1}\\
    &R_{2}\ket{0}=\sqrt{\frac{\cos(\Delta t)}{\alpha}}\ket{0}+\sqrt{\frac{|\sin(\Delta t)|}{\alpha}}\ket{1},\label{eq:rot_2}
\end{align}
where $\alpha=\cos(\Delta t)+|\sin(\Delta t)|<2$. Here, we assume time $\Delta t$ is small enough such that $\cos(\Delta t)>0$.

Second, the select($\mathcal{S}$) gate implements the operation $(-i\cdot{\rm sgn})\mathcal{S}$ on $\rho\otimes\rho$ conditionally, which is defined as follows:
\begin{align}
    {\rm select}(\mathcal{S})=\op{0}{0}\otimes I_{2n}+\op{1}{1}\otimes (-i\cdot{\rm sgn})\mathcal{S},
\end{align}
where ${\rm sgn}$ denotes the sign of $\Delta t$. The detailed structure can be found in Appendix \ref{sec:circuit_decomposition}.

Third, we define a circuit module $W$ as shown in Figure~\ref{figure:circuit_W}. At this stage, the module can be written as:
\begin{align}
W=R_{2}^{\dagger}{\rm select}(\mathcal{S})({\rm oc-}R_{2})R_{1}. \label{eq:W_def}
\end{align}

Let $P=\op{00}{00}$ be the operator that projects onto the subspace spanned by $\ket{00}$, where $\ket{00}$ are the ancillary qubits in Figure~\ref{figure:circuit_W}. Then, define a unitary operator

\begin{align}
    A=-W(I_{2}-2P)W^{\dagger}(I_{2}-2P)W. \label{eq:circuit_A}
\end{align}
Here, the notation $I_2-2P$ denotes the operator that reflects along the vectors that are orthogonal to the ancillary qubits $\ket{00}$. As a result, the unitary $A$ can simulate $e^{-i\mathcal{S}\Delta t}$.

\begin{proposition}\label{le:swap}
For arbitrary parameter $\Delta t\in(-1,1)$, define two rotations $R_1$ and $R_2$ as in Eqs.~\eqref{eq:rot_1}-\eqref{eq:rot_2}. Define a circuit module $W$ as in Figure \ref{figure:circuit_W} and a unitary $A=-W(I_{2}-2P)W^{\dagger}(I_{2}-2P)W$, where $P=\op{00}{00}$, and $I_{2}$ denotes the identity acting on $\ket{00}$. Then the unitary $e^{-i\mathcal{S}\Delta t}$ can be simulated in the sense that
\begin{align}
    P\otimes e^{-i\mathcal{S}\Delta t}=PAP.
\end{align}
\end{proposition}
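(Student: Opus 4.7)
My plan is to verify the proposition in two stages: first show that a single application of $W$ produces the target unitary $e^{-i\mathcal{S}\Delta t}$ on the $P$-subspace with amplitude exactly $1/2$, and then invoke oblivious amplitude amplification to conclude that the product $A=-W(I_2-2P)W^\dagger(I_2-2P)W$ boosts this amplitude to $1$.

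\textbf{Step 1: Track $W$ on the good subspace.} I would compute $W(|00\rangle\otimes|\psi\rangle)$ gate-by-gate for an arbitrary pure state $|\psi\rangle$ on the two $\rho$-registers (this suffices by linearity and extension to mixed states). Applying $R_1$ on the first ancilla gives a superposition whose $|0\rangle$-branch has weight $\alpha/2$. On this branch the open-controlled $R_2$ fires and, using the identity $R_2|0\rangle=\sqrt{\cos(\Delta t)/\alpha}\,|0\rangle+\sqrt{|\sin(\Delta t)|/\alpha}\,|1\rangle$, produces $\tfrac{\sqrt{\alpha\cos\Delta t}}{2}|00\rangle|\psi\rangle+\tfrac{\sqrt{\alpha|\sin\Delta t|}}{2}|01\rangle|\psi\rangle$, plus an orthogonal $|1\rangle$-branch of amplitude $\sqrt{1-\alpha^2/4}$ on the first ancilla. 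Next select$(\mathcal{S})$ multiplies the $|01\rangle$-term by $(-i\,\mathrm{sgn}(\Delta t))\mathcal{S}$. Finally, $R_2^\dagger$ on the second ancilla "unprepares" the $\{|0\rangle,|1\rangle\}$-superposition into the computational basis.

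\textbf{Step 2: Collect the $|00\rangle$-component.} After $R_2^\dagger$, I would collect only the terms whose first two qubits are $|00\rangle$. The $|0\rangle$-branch of the first ancilla contributes both a $|00\rangle$-coefficient $\tfrac{1}{2}\cos(\Delta t)$ from the $\sqrt{\cos}$-amplitude and a $|00\rangle$-coefficient $\tfrac{1}{2}(-i\,\mathrm{sgn}(\Delta t))|\sin(\Delta t)|\,\mathcal{S}=-\tfrac{i}{2}\sin(\Delta t)\mathcal{S}$ from the $\sqrt{|\sin|}$-amplitude, while the $|1\rangle$-branch of the first ancilla contributes nothing to the $|00\rangle$-sector. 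Summing,
\begin{equation}
PW\,(|00\rangle\otimes|\psi\rangle)=\tfrac{1}{2}|00\rangle\otimes\bigl(\cos(\Delta t)\,I-i\sin(\Delta t)\,\mathcal{S}\bigr)|\psi\rangle=\tfrac{1}{2}|00\rangle\otimes e^{-i\mathcal{S}\Delta t}|\psi\rangle,
\end{equation}
using the closed form $e^{-i\mathcal{S}\Delta t}=\cos(\Delta t)I-i\sin(\Delta t)\mathcal{S}$ that follows from $\mathcal{S}^2=I$. Thus $PWP=\tfrac{1}{2}P\otimes e^{-i\mathcal{S}\Delta t}$.

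\textbf{Step 3: Oblivious amplitude amplification.} I would now appeal to the standard LCU amplification argument: if $W|00,\psi\rangle=\sin\theta\,|00\rangle\otimes U|\psi\rangle+\cos\theta\,|\Phi^\perp\rangle$ with $|\Phi^\perp\rangle$ supported entirely outside the $|00\rangle$-sector and $U$ unitary, then within the two-dimensional invariant subspace spanned by $|00,\psi\rangle$ (mapped by $W$) and its orthogonal counterpart, the operator $-W(I_2-2P)W^\dagger(I_2-2P)W$ performs a rotation by angle $3\theta$. Here $\sin\theta=1/2$, i.e.\ $\theta=\pi/6$, so $3\theta=\pi/2$ and the $|00\rangle$-amplitude becomes $\sin(3\theta)=1$. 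Hence $A(|00\rangle\otimes|\psi\rangle)=|00\rangle\otimes e^{-i\mathcal{S}\Delta t}|\psi\rangle$ for every $|\psi\rangle$, which is precisely $PAP=P\otimes e^{-i\mathcal{S}\Delta t}$.

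The main obstacle I expect is bookkeeping in Step 1: I need to pick a consistent unitary completion of $R_2$, handle the sign conventions of $\mathrm{sgn}(\Delta t)$ carefully so that $-i\,\mathrm{sgn}(\Delta t)\cdot|\sin(\Delta t)|$ cleanly reassembles into $-i\sin(\Delta t)$, and verify that the $|01\rangle$, $|10\rangle$, $|11\rangle$ components that appear after $R_2^\dagger$ really lie entirely in $(I-P)$-sector with no residual $|00\rangle$-contribution. The assumption $\cos(\Delta t)>0$ (valid since $|\Delta t|<1$) guarantees that the square roots in $R_2$ are real and that $\alpha<2$, so $R_1$ in Eq.~\eqref{eq:rot_1} is well-defined. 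Once these bookkeeping items are resolved, the amplification step is a textbook application and the proposition follows.
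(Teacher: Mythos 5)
Your proof is correct, and Steps 1--2 are essentially the paper's own key computation, carried out in more detail: the paper simply asserts $\bra{00}W\ket{00}=\tfrac{1}{2}\left(\cos(\Delta t)I-i\sin(\Delta t)\mathcal{S}\right)=\tfrac{1}{2}e^{-i\mathcal{S}\Delta t}$, i.e.\ $PWP=\tfrac{1}{2}P\otimes e^{-i\mathcal{S}\Delta t}$, which is exactly what your gate-by-gate tracking establishes. Where you genuinely diverge is the final step. The paper never invokes amplitude amplification as a black box; it expands the product directly, using only $W^{\dagger}W=I$ and $P^{2}=P$, to get $PAP=3PWP-4PWPW^{\dagger}PWP$, then substitutes $PW^{\dagger}P=(PWP)^{\dagger}$ to find $PWPW^{\dagger}PWP=\tfrac{1}{4}PWP$ and hence $PAP=2PWP=P\otimes e^{-i\mathcal{S}\Delta t}$ --- a fully self-contained three-line computation. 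Your route instead appeals to the oblivious amplitude amplification lemma of Berry et al.\ with $\sin\theta=1/2$. That is a legitimate and standard argument, but note that the oblivious version of the $3\theta$-rotation claim is not automatic: the two-dimensional invariant-subspace argument only applies because $PWP$ is proportional to a unitary (which you have established), and strictly speaking you are importing a nontrivial external lemma where the paper needs none. In exchange, your argument delivers a slightly stronger conclusion --- since $\cos(3\theta)=0$, the state $A(\ket{00}\otimes\ket{\psi})$ equals $\ket{00}\otimes e^{-i\mathcal{S}\Delta t}\ket{\psi}$ exactly, with no residual component outside the $\ket{00}$ sector, whereas the paper's identity only constrains the $P$-block of $A$. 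For a self-contained write-up I would recommend replacing Step 3 with the paper's direct algebra, keeping your explicit verification of $PWP=\tfrac{1}{2}P\otimes e^{-i\mathcal{S}\Delta t}$, which is the part the paper leaves to the reader.
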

The proof is provided in Appendix~\ref{app:swap}.

\begin{figure*}[hbt]
\[ 
\Qcircuit @C=0.2em @R=0.7em {
\lstick{\op{0}{0}}  & \gate{H} & \ctrl{1}  & \qw & \ctrl{2} & \qw & \qw & \qw & \ctrl{1} & \ctrl{1} & \qw & \qw & \ctrl{2} & \qw & \ctrl{1} & \ctrl{1} & \ctrl{1} & \ctrl{1} & \qw  & \ctrl{2} & \qw & \gate{H} & \meter \\
\lstick{\op{0}{0}}  & \qw      & \gate{R_{1}} & \ctrlo{1} & \qw & \qw & \qw & \gate{X} &\gate{Z} & \ctrl{1} & \gate{X} & \qw &\qw & \ctrlo{1} & \gate{R_{1}^{\dagger}} & \gate{Z}  & \ctrlo{1} & \gate{R_{1}} & \ctrlo{1} & \qw & \qw & \qw & \qw \\
\lstick{\op{0}{0}}  & \qw      & \qw & \gate{R_{2}} & \multigate{2}{{\rm select(\mathcal{S})}} & \qw &  \gate{R_{2}^{\dagger}} & \qw & \qw &\gate{Z} & \qw & \gate{R_{2}} & \multigate{2}{{\rm select(\mathcal{S})}^{\dagger}} & \gate{R_{2}^{\dagger}} & \qw & \qw & \gate{Z} & \qw &\gate{R_{2}} & \multigate{2}{{\rm select(\mathcal{S})}} & \gate{R_{2}^{\dagger}} & \qw  &\qw \\
\lstick{\rho} &/\qw & \qw & \qw & \ghost{{\rm select(\mathcal{S})}} & \qw & \qw & \qw & \qw & \qw & \qw & \qw & \ghost{{\rm select(\mathcal{S})}^{\dagger}} & \qw & \qw & \qw & \qw & \qw & \qw & \ghost{{\rm select(\mathcal{S})}} & \qw & \qw & \qw \\
\lstick{\rho} &/\qw & \qw  & \qw& \ghost{{\rm select(\mathcal{S})}} & \qw & \qw & \qw & \qw^{ controlled-W}
\gategroup{1}{3}{5}{7}{1.5em}{.} & \qw & \qw & \qw & \ghost{{\rm select(\mathcal{S})}^{\dagger}} & \qw & \qw^{ controlled-W^\dagger}\gategroup{1}{12}{5}{15}{1.5em}{.} & \qw  & \qw & \qw & \qw^{ controlled-W} & \ghost{{\rm select(\mathcal{S})}} & \qw & \qw^{}\gategroup{1}{18}{5}{21}{1.5em}{.} & \qw^{ controlled-A}\gategroup{1}{3}{5}{21}{2.2em}{--}
}
\]
\caption{\footnotesize 
This figure depicts the resulting circuit by substituting c-$e^{-i\mathcal{S}\Delta t}$ with the circuit of controlled-$A$ (dashed box) in Figure~\ref{figure:circuit_exponentiation}. The dotted circuit is the controlled-$W$ circuit, in which the c-$R_1$ and oc-$R_2$ are the 1-controlled $R_1$ gate (apply $R_1$ on the target qubit if the control qubit in state $\ket{1}$) and 0-controlled $R_2$ gate (apply $R_2$ on the target qubit if the control qubit in state $\ket{0}$), respectively. The definitions of $R_1$ \& $R_2$ can be found in Eqs.~\eqref{eq:rot_1}-\eqref{eq:rot_2}.  The circuits between dotted boxes are known as reflectors. Denote that all elements in the circuit can be broken down into single/two-qubits gates, please refer to Appendix~\ref{sec:circuit_decomposition} for details.
}
\label{figure:circuit_A}
\end{figure*}
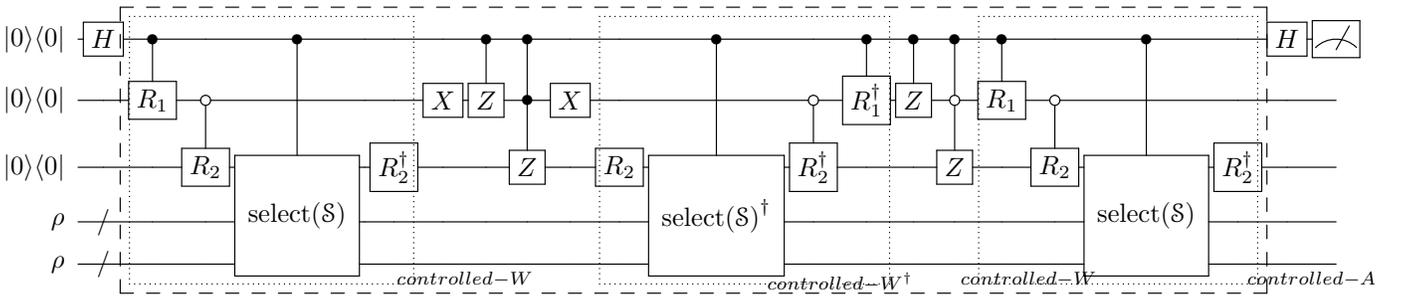

Using the circuit of $A$ to substitute $e^{-i\mathcal{S}\Delta t}$ would lead to the desired quantum circuits for estimating the Fourier series. We provide one example in Figure~\ref{figure:circuit_A}, where the circuit of $A$ is used once. We also estimate the number of needed primitive gates and qubits in the following.
\begin{proposition}\label{th:term}
For any quantum state $\rho\in\mathbb{C}^{2^n\times 2^n}$, and time $t\in\mathbb{R}$, there is a quantum circuit to estimate the quantity $\tr(\rho\cos(\rho t))$ up to precision $\epsilon$. The total amount of single/two-qubit gates is $O(nt^2/\epsilon)$.
\end{proposition}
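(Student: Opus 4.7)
The plan is to bound the gate count of the circuit already assembled in the preceding results. Proposition~\ref{th:main_cost} guarantees that, to achieve precision $\epsilon$ on $\tr(\rho\cos(\rho t))$, it suffices to divide $t$ into $Q=\lceil 2t^2/\epsilon\rceil$ intervals of length $\Delta t = t/Q$ and apply $Q$ controlled copies of $e^{-i\mathcal{S}\Delta t}$, sandwiched between two Hadamards, as in Figure~\ref{figure:circuit_instance} (or equivalently, with qubit reset, Figure~\ref{figure:circuit_instance_reset}). Thus the whole circuit is the composition of a constant-cost front/back end with $Q$ identical blocks, and the task reduces to counting the primitive gates needed for one controlled copy of $e^{-i\mathcal{S}\Delta t}$ and multiplying by $Q$.

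First I would invoke Proposition~\ref{le:swap}: one controlled $e^{-i\mathcal{S}\Delta t}$ on the main and ancillary registers can be realized exactly as a controlled version of the three-block unitary $A=-W(I_2-2P)W^\dagger(I_2-2P)W$. Within each $W$ the only potentially expensive piece is the \mbox{select}$(\mathcal{S})$ gate, i.e.\ a controlled SWAP between two $n$-qubit registers together with a global phase: this decomposes into $n$ controlled SWAPs on qubit pairs, each of which is a constant number of CNOTs and single-qubit gates, for a total of $O(n)$ primitive gates. The rotations $R_1,R_2$ and their controlled variants, together with the reflectors $I_2-2P$ on the two ancilla qubits (a product of a handful of Pauli, Hadamard and Toffoli-type gates, all of which admit a constant-size single/two-qubit decomposition), contribute only $O(1)$ additional gates. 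Therefore each controlled-$A$ block, and hence each controlled-$e^{-i\mathcal{S}\Delta t}$, has gate count $O(n)$.

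Combining the two ingredients, the overall gate count of the circuit of Figure~\ref{figure:circuit_A} extended to $Q$ blocks is $Q\cdot O(n)+O(1)=O(nt^2/\epsilon)$, which is the stated bound. A small caveat to handle carefully is the decomposition of the reflector $I_2-2P$ and of the $\mbox{select}(\mathcal{S})$ into a truly primitive single/two-qubit gate set; I would simply point to the explicit decomposition in Appendix~\ref{sec:circuit_decomposition} (as the excerpt already does) to certify that no hidden $n$-dependence appears besides the one SWAP layer.

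The main obstacle is essentially bookkeeping: making sure that the $O(n)$ cost of a single controlled $e^{-i\mathcal{S}\Delta t}$ does not hide a $\mathrm{poly}\log n$ factor coming from the multi-controlled gates inside $\mbox{select}(\mathcal{S})$ or from the reflectors, and that the approximation error of Proposition~\ref{th:main_circuit} (which is $2(\Delta t)^2$ per block) still telescopes to $\epsilon$ after substituting the exact simulation of $e^{-i\mathcal{S}\Delta t}$ by Proposition~\ref{le:swap}. Since Proposition~\ref{le:swap} gives $PAP=P\otimes e^{-i\mathcal{S}\Delta t}$ without extra error, the only remaining approximation is the $O(\Delta t^2)$ from Proposition~\ref{th:main_circuit}, which has already been absorbed into the choice $Q=\lceil 2t^2/\epsilon\rceil$. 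Hence the final error and gate count are both as claimed.
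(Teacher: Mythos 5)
Your proposal is correct and follows exactly the route the paper takes: the paper's own proof is a one-line appeal to Proposition~\ref{th:main_cost} ($Q=O(t^2/\epsilon)$ blocks), Proposition~\ref{le:swap} (exact simulation of each controlled $e^{-i\mathcal{S}\Delta t}$ by the LCU circuit $A$), and the Appendix~\ref{sec:circuit_decomposition} decomposition giving $O(n)$ primitive gates per block. You have simply spelled out the multiplication and the bookkeeping that the paper leaves implicit.
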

\begin{proof}
The validity follows immediately from Proposition~\ref{th:main_cost} and Proposition~\ref{le:swap} and the gate decomposition of controlled $e^{-i\mathcal{S}\Delta t}$ in Appendix~\ref{sec:circuit_decomposition}.
\end{proof}

By Proposition~\ref{th:term}, we can estimate $\tr(\rho\cos(\rho t))$ using $O(t^2/\epsilon)$ primitive gates. In consequence, we can estimate the Fourier series by primitive gates as well. 


\section{Quantum algorithms}\label{sec:quantum_algorithm}
In this section, we present the quantum algorithms for estimating the von Neumann and $\alpha$-R\'enyi entropy. The key idea is to run the devised quantum circuits to evaluate the Fourier series in Lemma \ref{le:fourier} (for von Neumann entropy) and Lemma \ref{le:renyi_fourier} (for R\'enyi entropy). To be more specific, we construct an unbiased estimator by evaluating the Fourier series via the circuits and classical post-processing the measurement outcomes. In particular, in post-processing, we use the importance sampling technique. Moreover, we analyze the correctness and cost of our algorithms. 


\begin{algorithm}[htb]
\footnotesize
\caption{Quantum algorithm for von Neumann entropy estimation}
\label{alg:entropy_estimation}
\begin{algorithmic}[1]\label{algorithm: algorithm 1}
    \REQUIRE Constants $\epsilon,\delta,\Lambda\in(0,1)$, and copies of state $\rho\in\mathbb{C}^{2^n\times2^n}$.
    \ENSURE  Estimate of the von Neumann entropy $S(\rho)$.
\STATE Compute coefficients $L$, $K$, $M_l$, and $b_{l}^{(k)}$ as given in Lemma~\ref{le:fourier}.
\STATE Compute $\|\mathbf{f}\|_{\ell_1}$, as given in Eq.~\eqref{eq:weights}.
\STATE Set estimation error $\varepsilon=\epsilon/\|\mathbf{f}\|_{\ell_1}$.
\STATE Set integer $N=\sum_{l=0}^{\lfloor L\rfloor}(2M_l+1)$.
\STATE Define a distribution as in Eq.~\eqref{eq:random_variable}.
\STATE Set integer $B=\#{\rm Samples}$ as in Eq.~\eqref{eq:samples}.
\STATE Sample $B$ pairs of $(s_1,l_1)$, $\ldots$, $(s_B,l_B)$. 
\STATE Set $j=1$ and ${\rm sum}=0$.
\WHILE{$j\leq B$}
\STATE Set $Q=\lceil (2s_j-l_j)^2\pi^2/4\varepsilon\rceil$.
\STATE Prepare $Q+1$ copies of $\rho$.
\STATE Estimate $\tr(\rho\cos(\rho (2s_j-lj))\pi/2)$ with precision $\varepsilon$ and probability $1-\delta/2N$ via quantum circuits, given in Sec.~\ref{sec:quantum_circuit}.
\STATE Store the obtained estimate ${\rm est}_{j}$.
\STATE Update ${\rm sum}\leftarrow {\rm sum}+ {\rm est}_{j}$ and $j\leftarrow j+1$.
\ENDWHILE
\RETURN{$\|\mathbf{f}\|_{\ell_1}\times{\rm sum}/B$.}
\end{algorithmic}
\end{algorithm}
\paragraph{Quantum algorithm for von Neumann entropy estimation}
The workflow for estimating $S(\rho)$ is depicted in Algorithm~\ref{alg:entropy_estimation}. First, we set a constant $\Lambda$ as a lower bound on all non-zero eigenvalues of the input state $\rho$. Then, upon receiving the required precision $\epsilon$ and failure probability $\delta$, we determine the Fourier series according to Lemma \ref{le:fourier}. After that, we construct an unbiased estimator by the importance sampling technique. Particularly, we randomly select each term of the Fourier series, where the probability is proportional to the corresponding weight. The average of the selected terms in expectation is proportional to the target quantity. And the proportional factor is easy to calculate. Finally, we evaluate all selected terms via quantum circuits and post-processing the measurement outcomes to reveal the desired estimates. 

\begin{theorem}\label{th:main_result_sampling}
Consider a quantum state $\rho\in\mathbb{C}^{2^n\times2^n}$. Let $\Lambda$ be the lower bound on all non-zero eigenvalues of $\rho$. Suppose we have free access to copies of $\rho$, then Algorithm~\ref{alg:entropy_estimation} outputs an estimate of the von Neumann entropy $S(\rho)$ up to precision $\epsilon$, succeeding with probability at least $1-\delta$. In addition, the total amount of the needed copies of $\rho$ and single/two-qubit gates are, in the worst case, $\widetilde{O}\left({1}/{\epsilon^5\Lambda^2}\right)$ and $\widetilde{O}(n/\epsilon^3\Lambda^2)$, respectively.\footnote{The notation $\widetilde{O}$ hides the logarithmic factors.}
\end{theorem}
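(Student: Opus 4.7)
The plan is to combine three ingredients: the Fourier-series approximation of Lemma~\ref{le:fourier}, the per-term circuit estimator of Propositions~\ref{th:main_cost}--\ref{th:term}, and a Hoeffding-type concentration argument on the importance-sampling estimator constructed in Algorithm~\ref{alg:entropy_estimation}. The guiding idea is to partition the total error budget $\epsilon$ into three pieces---series-truncation bias, Monte Carlo variance, and per-term circuit error---each of which can be independently controlled at the advertised cost.

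First I would invoke Lemma~\ref{le:fourier} with target accuracy $\epsilon/2$, fixing the parameters $K,L,M_l,b_l^{(k)}$ and thereby fixing the $\ell_1$-weight $\|\mathbf{f}\|_{\ell_1}=O(\log K)=\widetilde{O}(1)$, the maximum frequency $|t(s,l)|\le \pi M_l=\widetilde{O}(1/\Lambda)$, and the number of distinct indices $N=\sum_l(2M_l+1)$. Since the coefficients $b_l^{(k)}$ are nonnegative by the inductive definition in Eq.~\eqref{eq:blk}, $S(\rho)_{est}$ can be rewritten as $\sum_{l,s}w_{l,s}\tr(\rho\cos(\rho\,t(s,l)))$ with nonnegative weights $w_{l,s}=\sum_{k=1}^{K}b_l^{(k)}\binom{l}{s}/(k2^l)$ summing to $\|\mathbf{f}\|_{\ell_1}$. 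This gives a well-formed distribution $p_{l,s}=w_{l,s}/\|\mathbf{f}\|_{\ell_1}$ on the index set.

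Second, I would set up the per-sample random variable $X_j=\|\mathbf{f}\|_{\ell_1}\,\widehat{\tau}_j$, where $(s_j,l_j)\sim p$ and $\widehat{\tau}_j$ is the circuit estimate of $\tau(s_j,l_j)\coloneqq \tr(\rho\cos(\rho\,t(s_j,l_j)))$. Tuning the per-term additive precision to $\varepsilon=\epsilon/(2\|\mathbf{f}\|_{\ell_1})$ and invoking Proposition~\ref{th:main_cost} together with $O(\log(N/\delta)/\varepsilon^2)$ repetitions of the single-shot $Z$-measurement makes $|\widehat{\tau}_j-\tau(s_j,l_j)|\le\varepsilon$ hold with probability at least $1-\delta/(2N)$. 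Because $|X_j|\le\|\mathbf{f}\|_{\ell_1}$, Hoeffding's inequality applied to the empirical mean of $B=\Theta\!\left(\|\mathbf{f}\|_{\ell_1}^2\log(1/\delta)/\epsilon^2\right)=\widetilde{O}(1/\epsilon^2)$ i.i.d.\ samples produces a deviation of at most $\epsilon/4$ from $\mathbb{E}[X_j]=S(\rho)_{est}$ (up to the per-term bias) with probability $\ge 1-\delta/2$. A union bound over the $B$ circuit estimates keeps the overall failure probability below $\delta$, and the triangle inequality sums the three error sources to $\epsilon$.

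Third, I would tally resources using Propositions~\ref{th:main_cost} and~\ref{th:term}. Per-term estimation uses $\widetilde{O}(t^2/\varepsilon^3)=\widetilde{O}(1/(\Lambda^2\epsilon^3))$ copies and $\widetilde{O}(nt^2/\varepsilon^3)=\widetilde{O}(n/(\Lambda^2\epsilon^3))$ primitive gates after substituting $t^2=\widetilde{O}(1/\Lambda^2)$ and $\varepsilon=\widetilde{O}(\epsilon)$. Multiplying by $B=\widetilde{O}(1/\epsilon^2)$ samples then yields the copy total $\widetilde{O}(1/(\epsilon^5\Lambda^2))$; the gate bound follows analogously, noting that the circuit template is reused across samples so the bottleneck reduces to the per-term circuit depth times $B$. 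The main obstacle will be the clean allocation of the $\epsilon$ and $\delta$ budgets: the per-term additive precision $\varepsilon=\epsilon/(2\|\mathbf{f}\|_{\ell_1})$ must be small enough that weighting by $\|\mathbf{f}\|_{\ell_1}$ preserves the bias bound, while neither $B$ nor the per-sample shot count individually blows up, and it is exactly this coupling between the Monte Carlo $1/\epsilon^2$, the per-term statistical $1/\varepsilon^2$, and the Trotter $1/\varepsilon$ from Proposition~\ref{th:main_cost} that produces the final $1/\epsilon^5$ scaling in the copy complexity.
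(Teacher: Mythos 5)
Your proposal follows essentially the same route as the paper's proof: rewrite the Fourier series of Lemma~\ref{le:fourier} as $\|\mathbf{f}\|_{\ell_1}$ times an expectation under the importance-sampling distribution, control the Monte Carlo error with $B=\widetilde{O}(\|\mathbf{f}\|_{\ell_1}^2/\epsilon^2)$ samples via Hoeffding, estimate each sampled term to precision $\varepsilon=\epsilon/\|\mathbf{f}\|_{\ell_1}$ with $\widetilde{O}(1/\varepsilon^2)$ shots each consuming $O(t^2/\varepsilon)$ copies with $t=O(M_L)=\widetilde{O}(1/\Lambda)$, and tally gates per sampled circuit rather than per shot, which reproduces the $\widetilde{O}(1/\epsilon^5\Lambda^2)$ copy and $\widetilde{O}(n/\epsilon^3\Lambda^2)$ gate bounds. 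The error-budget bookkeeping and union bound are handled at the same level of rigor as in the paper, so the argument is correct and matches.
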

\begin{proof}
\textbf{Correctness analysis}.

We rephrase the Fourier series in Lemma~\ref{le:fourier} as follows.
\begin{align}
    F(\rho)=\sum_{l=0}^{\lfloor L\rfloor}\sum_{s=D_l}^{U_l}f(s,l)\tr\left(\rho \cos\left(\rho\frac{(2s-l)\pi}{2}\right)\right).
\end{align}
Coefficients $f(s,l)$ are given by
\begin{align}
    f(s,l)=\left(\sum_{k=1}^{K}\frac{b_{l}^{(k)}}{k}\right)\frac{\binom{l}{s}}{2^{l}}, \quad\forall s,l. \label{eq:weights}
\end{align}
Let $\mathbf{f}$ be a vector that consists of $f(s,l)$. Then the $\ell_1$-norm of $\mathbf{f}$ is bounded by $O(\log(K))$ i.e., $\|\mathbf{f}\|_{\ell_1}\in O\left(\log(K)\right)$ (cf. Lemma~\ref{le:fourier}). 

Next, define an importance sampling as follows:
\begin{align}
    &\mathbf{R}=\tr\left(\rho \cos\left(\rho\frac{(2s-l)\pi}{2}\right)\right) \quad \text{with prob. $\frac{f(s,l)}{ \|\mathbf{f}\|_{\ell_1}}$}.\label{eq:random_variable}
\end{align}
The random variable $\mathbf{R}$ indicates that each Fourier term associated with $(s,l)$ is sampled with probability proportional to its weight $f(s,l)$. Then, the Fourier series could be rewritten as an expectation of $\mathbf{R}$. 
\begin{align}
    F(\rho)=\left\|\mathbf{f}\right\|_{\ell_1}\cdot\mathbf{E}\left[\mathbf{R}\right].
\end{align}
Hence, we derive an unbiased estimator $F(\rho)$ for $S(\rho)$.

\textbf{Cost analysis}.

Note that the sample mean could estimate the expectation, and the estimation accuracy replies on the number of samples. Specifically, by Chebyshev's inequality, $O({\bf Var}/\epsilon^2)$ samples are sufficient to derive an estimate of the expectation with precision $\epsilon$ and high probability, where ${\bf Var}$ denotes the variance, and $\epsilon$ is the precision. Meanwhile, the probability could be boosted to $1-\delta$ at the cost of an additional multiplicative factor $O(\log(1/\delta))$ according to Chernoff bounds. Alternatively, by Hoeffding's inequality, we only need $O(\log(1/\delta)/\epsilon^2)$ samples to derive an estimate with precision $\epsilon$ and probability larger than $1-\delta$.

Regarding random variable $\mathbf{R}$, the variance is less than $1$. To estimate $F(\rho)$ with precision $\epsilon$, we set the estimation precision for $\mathbf{E}[\mathbf{R}]$ as $\varepsilon=\epsilon/\left\|\mathbf{f}\right\|_{\ell_1}$ and set failure probability as $\delta/2$. As a result, the number of needed samples is 
\begin{align}
    \#\textrm{Samples}&= O\left(\frac{1}{\varepsilon^2}\log\left(\frac{2}{\delta}\right)\right)=O\left(\frac{\|\mathbf{f}\|_{\ell_1}^2}{\epsilon^2}\log\left(\frac{2}{\delta}\right)\right).\label{eq:samples}
\end{align}
It means that there are at most $\#{\rm Samples}$ terms needing to estimate via quantum circuits. 

On the other hand, notice that the Fourier series $F(\rho)$ consists of $N=\sum_{l=0}^{L}(2M_l+1)$ terms in all. Hence, it suffices to estimate each term with probability $1-\delta/2N$. In this way, the overall failure probability is at most $\delta$ by union bound.

When estimating the Fourier series, we need to measure the resultant state after evolving the input state $\rho$ by the circuits (e.g., please refer to Figures~\ref{figure:circuit_instance} \& \ref{figure:circuit_instance_reset} \& \ref{figure:circuit_A}). Note that the measurement outcome is evaluated nondeterministically, which implies that the estimation could fail. To suppress the failure probability to $\delta$, we suffice to evaluate each term with a failure probability at most $\delta/2N$. In consequence, for each term, the number of needed measurements is
\begin{align}
  \#{\rm Measurements}=  O\left(\frac{\log(2N/\delta)}{\varepsilon^2}\right).\label{eq:measurements}
\end{align}
Immediately, the total number of measurements for estimating the Fourier series is at most
\begin{align}
C_{m}&=\#{\rm Sample}\times \#{\rm Measurements}=O\left(\frac{\|\mathbf{f}\|_{\ell_1}^4}{\epsilon^4}\log\left(\frac{2}{\delta}\right)\log\left(\frac{N}{\delta}\right)\right)=\widetilde{O}\left(\frac{1}{\epsilon^4}\right).
\end{align}


Now, we consider the cost on the number of state $\rho$. Recall that, when estimating $\tr(\rho\cos(\rho t))$, performing once measurement costs $\widetilde{O}(t^2/\epsilon)$ copies of $\rho$ by Proposition~\ref{th:main_cost}. For the Fourier series $F(\rho)$, the largest time is  $O(M_L)=O\left(\ln\left(\frac{\ln(K)}{\epsilon}\right)\frac{1}{\Lambda}\right)$. At the same time, we have to measure $C_m$ times in the entropy estimation. Then the number of overall copies is at most
\begin{align}
     C_{\rho}=C_{m}\times \widetilde{O}(M_L^2/\varepsilon) =C_{m}\times \widetilde{O}\left(\ln^2\left(\frac{\ln(K)}{\epsilon}\right)\frac{1}{\varepsilon\Lambda^2}\right)
     =\widetilde{O}\left(\frac{1}{\epsilon^5\Lambda^2}\right).
\end{align}

At last, consider the costs on the number of quantum gates. By Proposition~\ref{th:term}, for $\tr(\rho\cos(\rho t))$, it costs $O(nt^2/\epsilon)$ primitive single/two-qubit gates to construct the circuit. Meanwhile, we consider the largest evolution time $O(M_L)$. Then the total amount of gate for the entropy estimation, in the worst case, is
\begin{align}
    C_{g}=\# {\rm Sample} \times \widetilde{O}(nM_L^2/\varepsilon)=\widetilde{O}\left(\frac{n}{\epsilon^3\Lambda^2}\right).
\end{align}
\end{proof}

\paragraph{Quantum algorithm for quantum R\'enyi entropy estimation}
The quantum algorithm for the R\'enyi entropy is similar to that of the von Neumann entropy estimation. The main difference is that we evaluate the Fourier series approximation of $\tr(\rho^\alpha)$. Now, we show the workflow in Algorithm \ref{alg:renyi_entropy}, which derives an unbiased estimator for $\tr(\rho^\alpha)$ and an estimate for $R_{\alpha}(\rho)$. For clarity, we use $\epsilon$ and $\xi$ to denote the estimation precision of $R_{\alpha}(\rho)$ and $\tr(\rho^\alpha)$, respectively.

\begin{algorithm}[htb]
\footnotesize
\caption{Quantum algorithm for $\alpha$-R\'enyi entropy estimation}
\label{alg:renyi_entropy}
\begin{algorithmic}[1]
    \REQUIRE Constants $\epsilon,\delta,\Lambda\in(0,1)$, and copies of state $\rho\in\mathbb{C}^{2^n\times2^n}$, and $\alpha\in(0,1)\cup(1,\infty)$.
    \ENSURE Estimate of the $\alpha$-R\'enyi entropy $R_{\alpha}(\rho)$.
\STATE Set estimation precision $\xi$ for estimating $\tr(\rho^\alpha)$, as given in Eq. \eqref{eq:renyi_precision}.
\STATE Compute coefficients $L$, $K$, $M_l$, and $b_{l}^{(k)}$ as given in Lemma~\ref{le:renyi_fourier}.
\STATE Compute coefficients $f(s,l)=\left(\sum_{k=1}^{K}b_{l}^{(k)}(-1)^k\binom{\beta}{k}\right)\frac{\binom{l}{s}}{2^{l}}, \quad\forall s,l$.
\STATE Compute the $\ell_1$ norm of $\mathbf{f}$, which is the vector consisting of all $f(s,l)$. 
\STATE Set a parameter $\varepsilon=\xi/\|\mathbf{f}\|_{\ell_1}$.
\STATE Set integer $N=\sum_{l=0}^{\lfloor L\rfloor}(2M_l+1)$.
\STATE Define a distribution: $\mathbf{R}=\tr\left(\rho \cos\left(\rho\frac{(2s-l)\pi}{2}\right)\right) \quad \text{with prob. $\frac{|f(s,l)|}{ \|\mathbf{f}\|_{\ell_1}}$}.$
\STATE Set integer $B=\#{\rm Samples}=O\left(\frac{\|\mathbf{f}\|_{\ell_1}^2}{\xi^2}\log\left(\frac{2}{\delta}\right)\right)$.
\STATE Sample $B$ pairs of $(s_1,l_1)$, $\ldots$, $(s_B,l_B)$ via the distribution.
\STATE Set $j=1$ and ${\rm sum}=0$.
\WHILE{$j\leq B$}
\STATE Set $Q=\lceil (2s_j-l_j)^2\pi^2/4\varepsilon\rceil$.
\STATE Prepare $Q+1$ copies of $\rho$.
\STATE Estimate $\tr(\rho\cos(\rho (2s_j-lj))\pi/2)$ with precision $\varepsilon$ and probability $1-\delta/2N$ via quantum circuits, given in Sec.~\ref{sec:quantum_circuit}.
\STATE Store the obtained estimate ${\rm est}_{j}$.
\STATE Update ${\rm sum}\leftarrow {\rm sum}+ {\rm est}_{j}$ and $j\leftarrow j+1$.
\ENDWHILE
\RETURN{${\log(1+\|\mathbf{f}\|_{\ell_1}\times {\rm sum}/B)}/{(1-\alpha)}$.}
\end{algorithmic}
\end{algorithm}


In the workflow, first, upon receiving the inputs, we find a Fourier series that can approximate the trace of the state's power function $\tr(\rho^\alpha)$ by Lemma \ref{le:renyi_fourier}. Then we proceed to evaluate the Fourier series by the quantum circuits and the classical post-processing. Note that we also employ the importance sampling to construct an unbiased estimator. Afterwards, we could readily derive the estimate of $R_{\alpha}(\rho)$. 

As the analysis of Algorithm~\ref{alg:renyi_entropy} is similar to Algorithm~\ref{alg:entropy_estimation}, we provide the results in Theorem~\ref{th:main_renyi_sampling} and defer the details to Appendix~\ref{app:renyi_algorithm}.



\begin{theorem}\label{th:main_renyi_sampling}
Consider a quantum state $\rho\in\mathbb{C}^{2^n\times2^n}$. Let $\Lambda$ be the lower bound on all non-zero eigenvalues of $\rho$. Suppose we have access to copies of $\rho$, then Algorithm~\ref{alg:renyi_entropy} outputs an estimate of $\alpha$-R\'enyi entropy $R_{\alpha}(\rho)$ up to precision $\epsilon$, succeeding with probability at least $1-\delta$. Furthermore, the total amount of the needed copies of $\rho$ and single/two-qubit gates, in the worst case, are shown in the Table~\ref{tab:renyi}.
\begin{table}[htb]
    \centering
    \begin{tabular}{|c|c|c|}
    \hline
        $\alpha$ & Copy cost & Gate cost \\
        \hline
        $(0,1)\cup(2,+\infty)$ & $\widetilde{O}\left(\frac{\left(\sum_{k=1}^{K}|\binom{\alpha-1}{k}|\right)^5}{|1-\alpha|^5[\tr(\rho^2)]^{5(\alpha-1)}(\epsilon)^5\Lambda^2}\right)$ & $\widetilde{O}\left(\frac{n\left(\sum_{k=1}^{K}|\binom{\alpha-1}{k}|\right)^3}{|1-\alpha|^3[\tr(\rho^2)]^{3(\alpha-1)}(\epsilon)^3\Lambda^2}\right)$ \\[0.3cm]
        \hline
        $(1,2]$ & $\widetilde{O}\left(\frac{\left(\sum_{k=1}^{K}|\binom{\alpha-1}{k}|\right)^5}{|1-\alpha|^5[\tr(\rho^2)]^5(\epsilon)^5\Lambda^2}\right)$ & $\widetilde{O}\left(\frac{3\left(\sum_{k=1}^{K}|\binom{\alpha-1}{k}|\right)^3}{|1-\alpha|^3[\tr(\rho^2)]^3(\epsilon)^3\Lambda^2}\right)$ \\[0.3cm]
        \hline
    \end{tabular}
    \caption{Cost estimation of Algorithm~\ref{alg:renyi_entropy}}
    \label{tab:renyi}
\end{table}
\end{theorem}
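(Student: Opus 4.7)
The plan is to mirror the structure of the proof of Theorem~\ref{th:main_result_sampling}, adapted for the Fourier series of $\tr(\rho^\alpha)$ from Lemma~\ref{le:renyi_fourier} and for the error-propagation relation in Eq.~\eqref{eq:renyi_precision}. First I would rewrite
\begin{align}
F_\alpha(\rho) - 1 = \sum_{l=0}^{\lfloor L\rfloor}\sum_{s=D_l}^{U_l} f(s,l)\,\tr\!\left(\rho\cos\!\left(\rho\tfrac{(2s-l)\pi}{2}\right)\right),
\end{align}
with $f(s,l) = \bigl(\sum_{k=1}^{K}(-1)^k b_l^{(k)}\binom{\alpha-1}{k}\bigr)\binom{l}{s}/2^l$. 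Unlike the von Neumann case, the $f(s,l)$ are signed because of the factor $(-1)^k$, so I would sample the pair $(s,l)$ with probability $|f(s,l)|/\|\mathbf{f}\|_{\ell_1}$ and define the random variable $\mathbf{R} = \operatorname{sgn}(f(s,l))\cdot\tr(\rho\cos(\rho t(s,l)))$, which lies in $[-1,1]$ and satisfies $\|\mathbf{f}\|_{\ell_1}\mathbf{E}[\mathbf{R}] = F_\alpha(\rho)-1$. This gives an unbiased estimator whose $\ell_1$-norm is bounded by $\sum_{k=1}^{K}|\binom{\alpha-1}{k}|$ as in Lemma~\ref{le:renyi_fourier}.

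Next I would carry out the concentration analysis. By Hoeffding's inequality, drawing $B = O(\|\mathbf{f}\|_{\ell_1}^2\log(1/\delta)/\xi^2)$ i.i.d. samples produces a sample mean within $\xi/\|\mathbf{f}\|_{\ell_1}$ of $\mathbf{E}[\mathbf{R}]$ with probability at least $1-\delta/2$. For each sampled $(s,l)$ I would run the circuit of Proposition~\ref{th:term} with precision $\varepsilon = \xi/\|\mathbf{f}\|_{\ell_1}$ and per-term failure $\delta/(2N)$, where $N = \sum_{l}(2M_l+1)$ is the total number of Fourier terms; a union bound over the $B$ samples then keeps the overall failure probability below $\delta$. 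Combining these gives $\widetilde{O}(\|\mathbf{f}\|_{\ell_1}^4/\xi^4)$ measurements in total, each consuming $\widetilde{O}(M_L^2/\varepsilon)$ copies of $\rho$ (since the largest evolution time is $t(s,l)=O(M_L)=\widetilde{O}(1/\Lambda)$) and $\widetilde{O}(nM_L^2/\varepsilon)$ primitive gates by Propositions~\ref{th:main_cost} and~\ref{th:term}. Aggregating yields $C_\rho = \widetilde{O}(\|\mathbf{f}\|_{\ell_1}^5/(\xi^5\Lambda^2))$ and $C_g = \widetilde{O}(n\|\mathbf{f}\|_{\ell_1}^3/(\xi^3\Lambda^2))$.

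Finally, I would translate the precision $\xi$ for $\tr(\rho^\alpha)$ into the precision $\epsilon$ for $R_\alpha(\rho)$ using Eq.~\eqref{eq:renyi_precision}: substituting $\xi = |1-\alpha|[\tr(\rho^2)]^{\alpha-1}\epsilon/2$ for $\alpha\in(0,1)\cup(2,+\infty)$ and $\xi = |1-\alpha|\tr(\rho^2)\epsilon/2$ for $\alpha\in(1,2]$ into $C_\rho$ and $C_g$ reproduces exactly the two rows of Table~\ref{tab:renyi}. Correctness then follows from combining the Fourier approximation error $|F_\alpha(\rho)-\tr(\rho^\alpha)|\le\xi$ of Lemma~\ref{le:renyi_fourier} with the sampling-plus-measurement error $\xi$ via the triangle inequality, and then invoking the logarithmic Lipschitz bound that justifies Eq.~\eqref{eq:renyi_precision}.

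The main obstacle I anticipate is the careful bookkeeping around signed coefficients and the justification that the resulting estimator $\widehat{\tr(\rho^\alpha)}$ is close enough to $\tr(\rho^\alpha)$ in relative terms so that $(\widehat{\tr(\rho^\alpha)} - \tr(\rho^\alpha))/\tr(\rho^\alpha) \in [-1/2,1]$, the interval on which $|\log(1+x)|\le 2|x|$ is used to derive Eq.~\eqref{eq:renyi_precision}. This condition is implicit in the theorem's statement and must be checked (or absorbed into constants) before the final substitution, because otherwise the bound on $|R_\alpha(\rho)_{est} - R_\alpha(\rho)|$ could blow up; the remaining arithmetic of plugging the weight bounds from Table~\ref{tab:my_label_weights} into $C_\rho$ and $C_g$ is routine.
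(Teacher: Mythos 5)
Your proposal follows essentially the same route as the paper's proof: importance sampling over the signed Fourier coefficients of $F_\alpha(\rho)$, Hoeffding plus a union bound over the $N$ terms to get $\widetilde{O}(\|\mathbf{f}\|_{\ell_1}^4/\xi^4)$ measurements, the per-run copy and gate costs from Propositions~\ref{th:main_cost} and~\ref{th:term}, and finally the substitution of Eq.~\eqref{eq:renyi_precision} to convert $\xi$ into $\epsilon$. Your explicit inclusion of the $\operatorname{sgn}(f(s,l))$ factor in the estimator and your flagging of the $[-1/2,1]$ relative-error condition are in fact slightly more careful than the paper's own write-up, but they do not constitute a different argument.
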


\begin{remark}
From results in Theorems \ref{th:main_result_sampling} \& \ref{th:main_renyi_sampling}, we can easily see that the costs of our algorithms scale polynomially in $1/\Lambda$, where $\Lambda$ is the lower bound on all non-zero eigenvalues. If $\Lambda$ is polynomially tiny, i.e., $\Lambda=\Omega(1/poly(n))$, the number of needed quantum states and primitive gates could be polynomially large. Hence, our algorithms can apply to large states once the states satisfy the aforementioned assumption. For example, it may happen when a large state has a low rank.
\end{remark}


\section{Numerical simulation}\label{sec:experiment}
In this section, we conduct the numerical simulation to demonstrate the effectiveness and correctness of our algorithms. Specifically, we estimate $S(\rho)$ and $R_{2}(\rho)$ for several randomly generated single-qubit state $\rho$. 
And, we estimate the entropy of a single-qubit state under the effect of depolarizing noises and amplitude damping noises.
All simulation experiments are operated on {\textit{Paddle Quantum}} platform. 


\subsection{Effectiveness and correctness}\label{subsec:Correctness}
To begin with, we generate a single-qubit mixed state at random and denote it by $\rho= 
    \begin{pmatrix}
        0.48786 & 0.0094\\
        0.0094 & 0.51214
    \end{pmatrix}$.
As the eigenvalues of $\rho$ are larger than $0.35$, we set the eigenvalue lower bound as $\Lambda=0.35$ (as long as smaller than the minimum non-zero eigenvalue of the state $\rho$). In the experiment, we set the error tolerance $\epsilon=0.2, 0.4$, respectively. The results are depicted in Figure~\ref{fig:Entropy estimation by quantum circuit}.
\begin{figure}[htb]
\centering
\subfigure[ \ Estimated von Neumann entropy. \ ]{\includegraphics[width=0.47\textwidth]{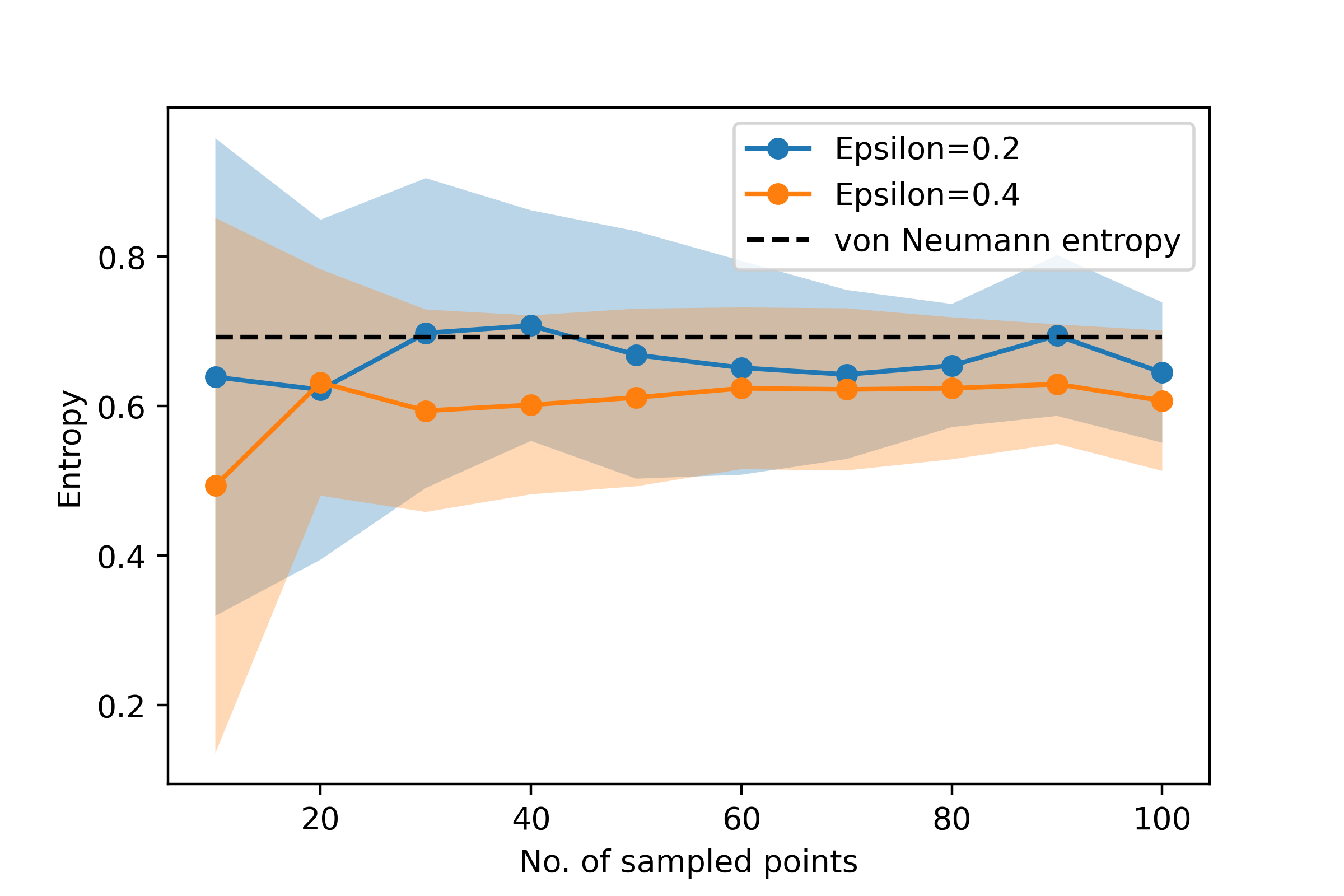}}
\subfigure[ \ Estimated 2-R\'enyi entropy. \  ]{\includegraphics[width=0.47\textwidth]{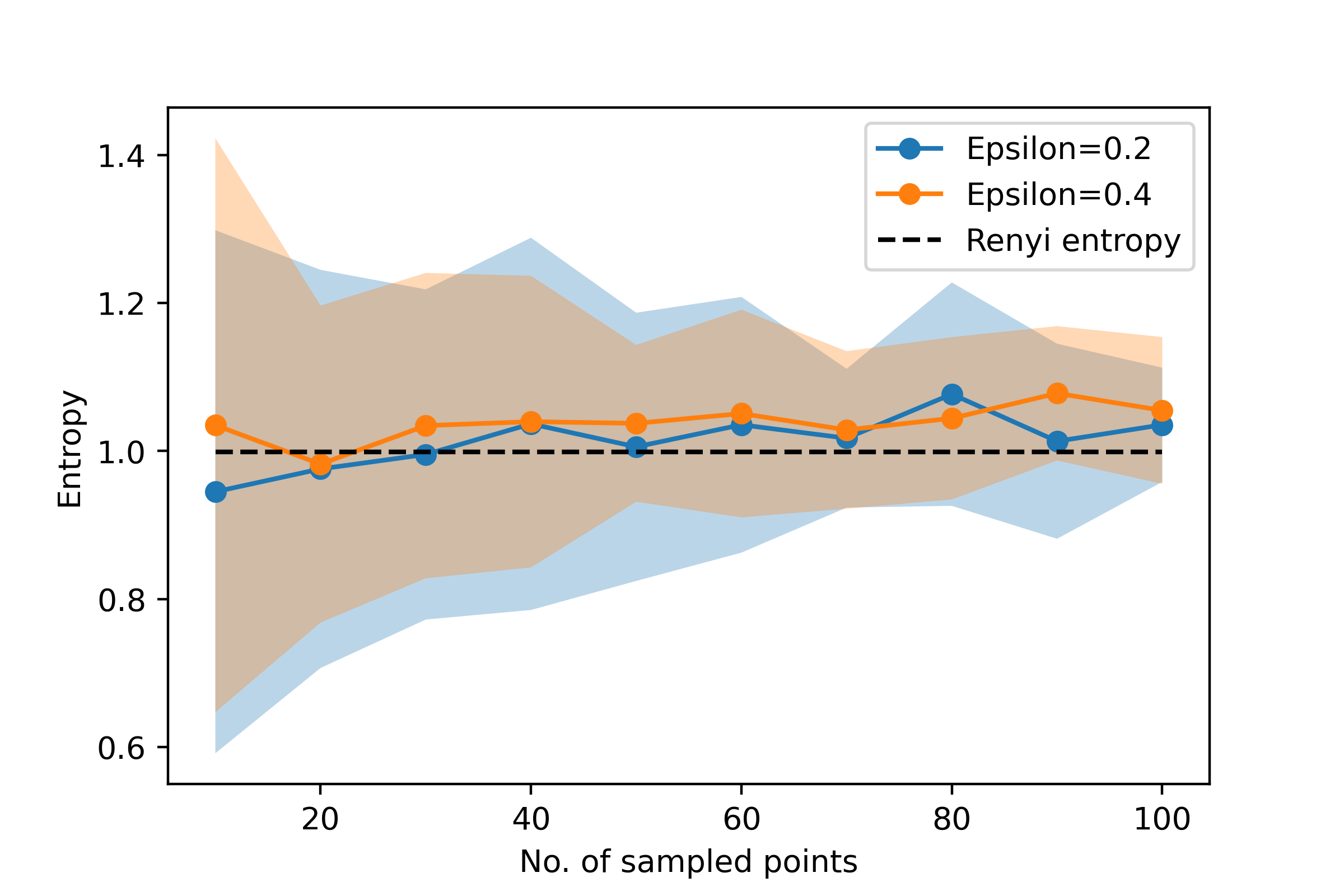}}
\caption{\footnotesize In (a) and (b), the black dashed line represents the actual entropy of quantum state $\rho$. The blue and orange curves are average entropy over 20 repeats for $\epsilon$ equal to 0.2 and 0.4, respectively. The shadowed area stands for standard deviation.}
\label{fig:Entropy estimation by quantum circuit}
\end{figure}

In Figure~\ref{fig:Entropy estimation by quantum circuit}, the coloured curves represent the estimates of the entropy with the error tolerance $0.2,0.4$. The shadowed areas represent standard deviation. Clearly, both blue and orange curves fluctuate around the black dashed line,
but the blue curve ($\epsilon=0.2$) is closer.
In addition to that, the shadowed areas converge as the number of sampled points increases, meaning the estimation gets precise. Hence, we could conclude that our method could estimate the entropy precisely with a large number of sampled points and small error tolerance $\epsilon$.

Next, we show the effectiveness with more quantum states. We randomly generate four more single-qubit mixed states to match the lower bound $\Lambda$ equal to 0.35 (These density matrices are displayed in Appendix \ref{sec:quantum state}). We set the number of sampled points as $100$ and error tolerance $\epsilon=0.2$. The corresponding results are illustrated in Figure~\ref{fig:histgram}.


\begin{figure}[htb]
\centering
\subfigure[ \ Estimated von Neumann entropy. \ ]{\includegraphics[width=0.45\textwidth]{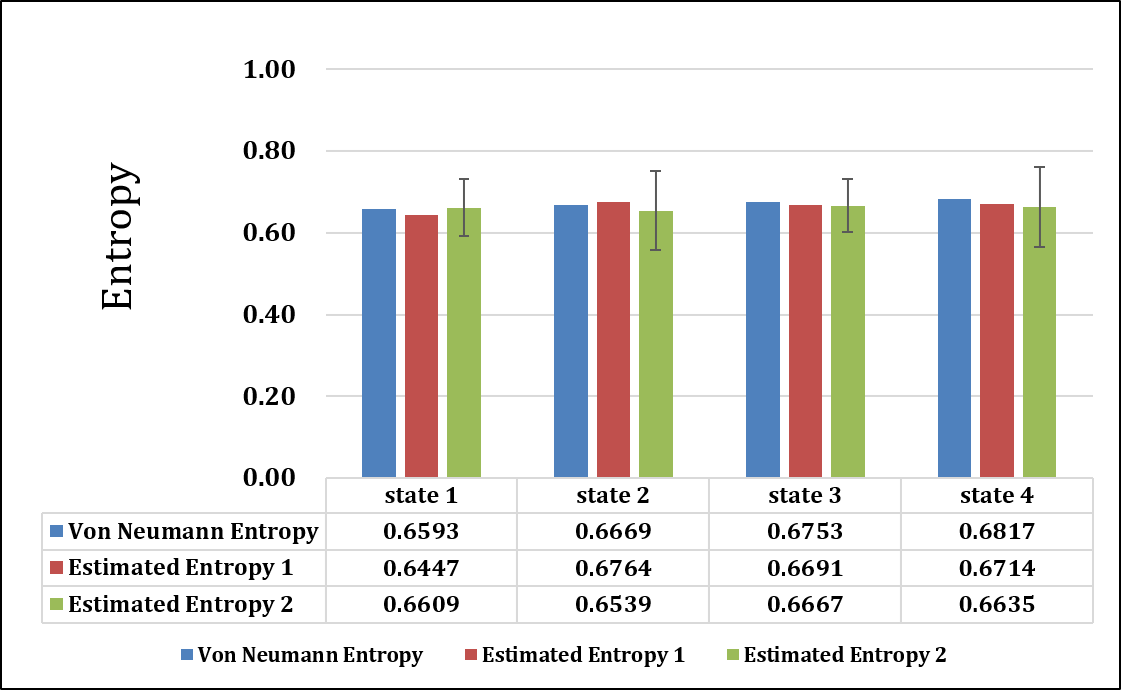}}
\subfigure[ \ Estimated R\'enyi entropy. \  ]{\includegraphics[width=0.45\textwidth]{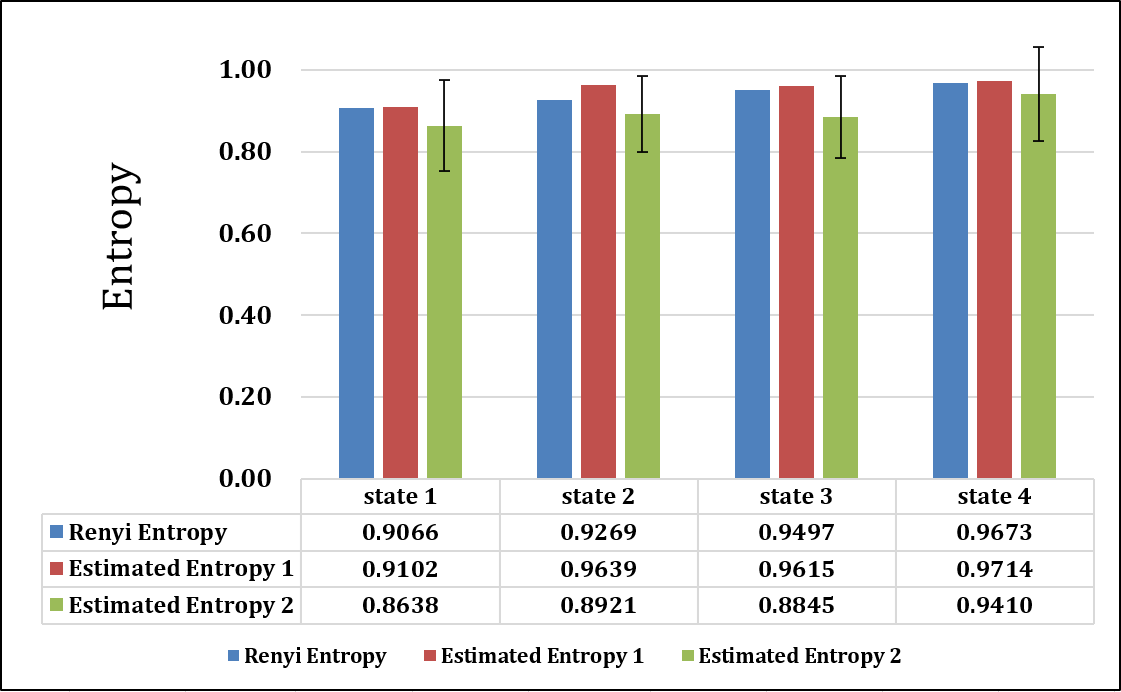}}
\caption{\footnotesize The results for 4 randomly generated states (available in Appendix \ref{sec:quantum state}). In (a) and (b), the blue bar is the real quantum entropy, the Estimated Entropy 1 stands for the entropy corresponding to the Fourier series approximation, and the Estimated Entropy 2 is the average entropy (100 sample points, repeat 20 times) calculated by our approach. In addition, the error bar represents the standard deviation. }
\label{fig:histgram}
\end{figure}


In Figure \ref{fig:histgram}, the blue bar is the true quantum entropy corresponding to $S(\rho)$ and $R_{2}(\rho)$. The red bar is the value calculated by the Fourier series (cf. Lemmas \ref{le:fourier} \& \ref{le:renyi_fourier}). 
The green bar denotes the estimate of the entropy by our approach. 
Clearly, the red and green bars are very close to the blue bar for different states. Thus, the experimental results show that our approach could find high-precision estimates for generated states, implying the validity of the Fourier series and our algorithms.

\subsection{Robustness}
We also study the performance of our approach under the effect of noises. Specifically, we consider a single qubit quantum state $\rho$ and single-qubit amplitude damping channel $\mathcal{N}^{amp}_p(\rho)$ and depolarizing channel $\mathcal{N}^{depl}_p(\rho)$. The noisy quantum channels are given by
\begin{align}
        \mathcal{N}^{amp}_p(\rho) &:= D_0\rho D_0^\dagger +  D_1\rho D_1^\dagger,\\
        \mathcal{N}^{depl}_p(\rho) &:= E_0\rho E_0^\dagger +  E_1\rho E_1^\dagger+E_2\rho E_2^\dagger +  E_3\rho E_3^\dagger,
\end{align} 
where 
\begin{align*}
D_{0}=\begin{bmatrix} 1 & 0 \\
		0 & \sqrt{1-p} \end{bmatrix}, D_1=\begin{bmatrix} 0 & \sqrt{p} \\
		0 & 0 \end{bmatrix},
		E_0=\sqrt{1-p}I,
	E_1=\sqrt{\frac{p}{3}}X,
	E_2=\sqrt{\frac{p}{3}}Y,
	E_3=\sqrt{\frac{p}{3}}Z.
\end{align*}
Here, the parameter $p\in [0,1]$ is the noise level. Notation $I$ stands for identity operator, and $X, Y, Z$ are Pauli matrices. 

In the experiment, the input the quantum state is $\rho = 
    \begin{pmatrix}
        0.5398 & -0.1217\\
        -0.1217 & 0.4602
    \end{pmatrix}$.
We set the noise level $p$ as 0, 0.02, 0.04, 0.06, 0.08, 0.10, 0.12, 0.15, respectively. And we set the error tolerance $\epsilon=0.2$ and eigenvalue lower bound $\Lambda=0.35$ and take 100 sample points. 
The results are shown in Figure \ref{fig:Entropy respect to noises}.

\begin{figure}[htb]
\centering
\subfigure[ \ $S(\rho)$ with amplitude damping noises. \ ]{\includegraphics[width=0.47\textwidth]{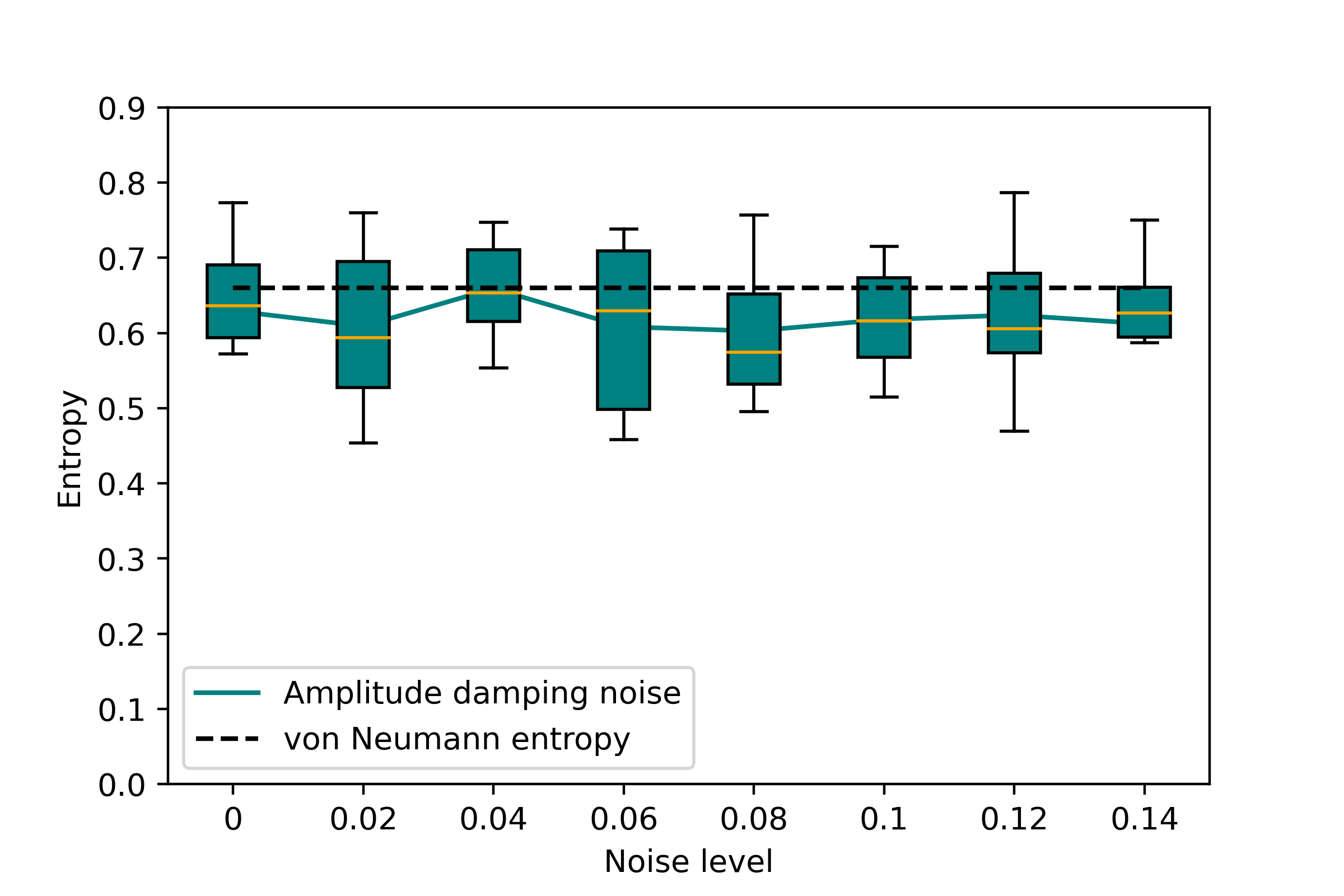}}
\subfigure[ \ $S(\rho)$ with depolarizing noises. \  ]{\includegraphics[width=0.47\textwidth]{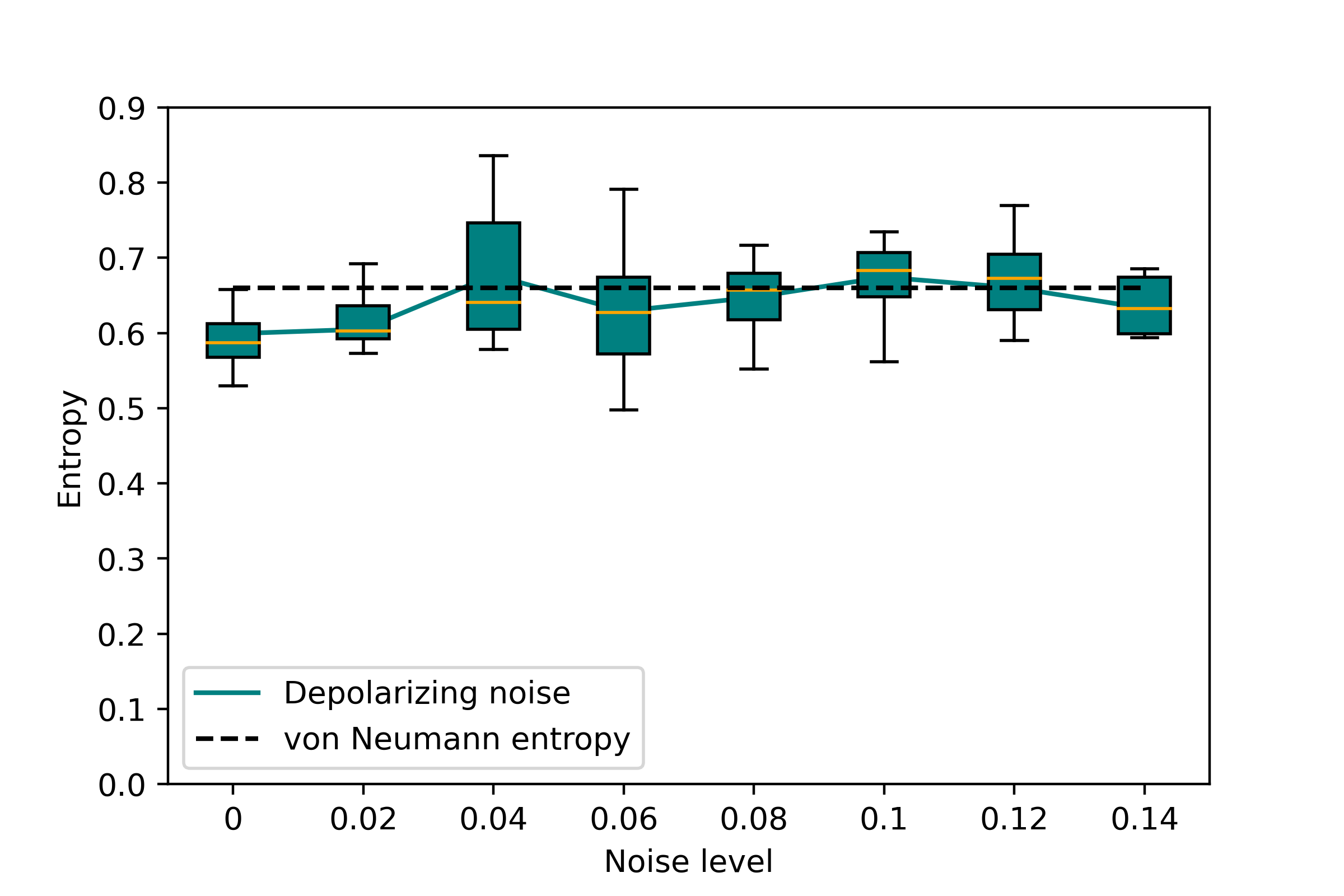}}
\subfigure[ \ $R_{2}(\rho)$ with amplitude damping noises. \ ]{\includegraphics[width=0.47\textwidth]{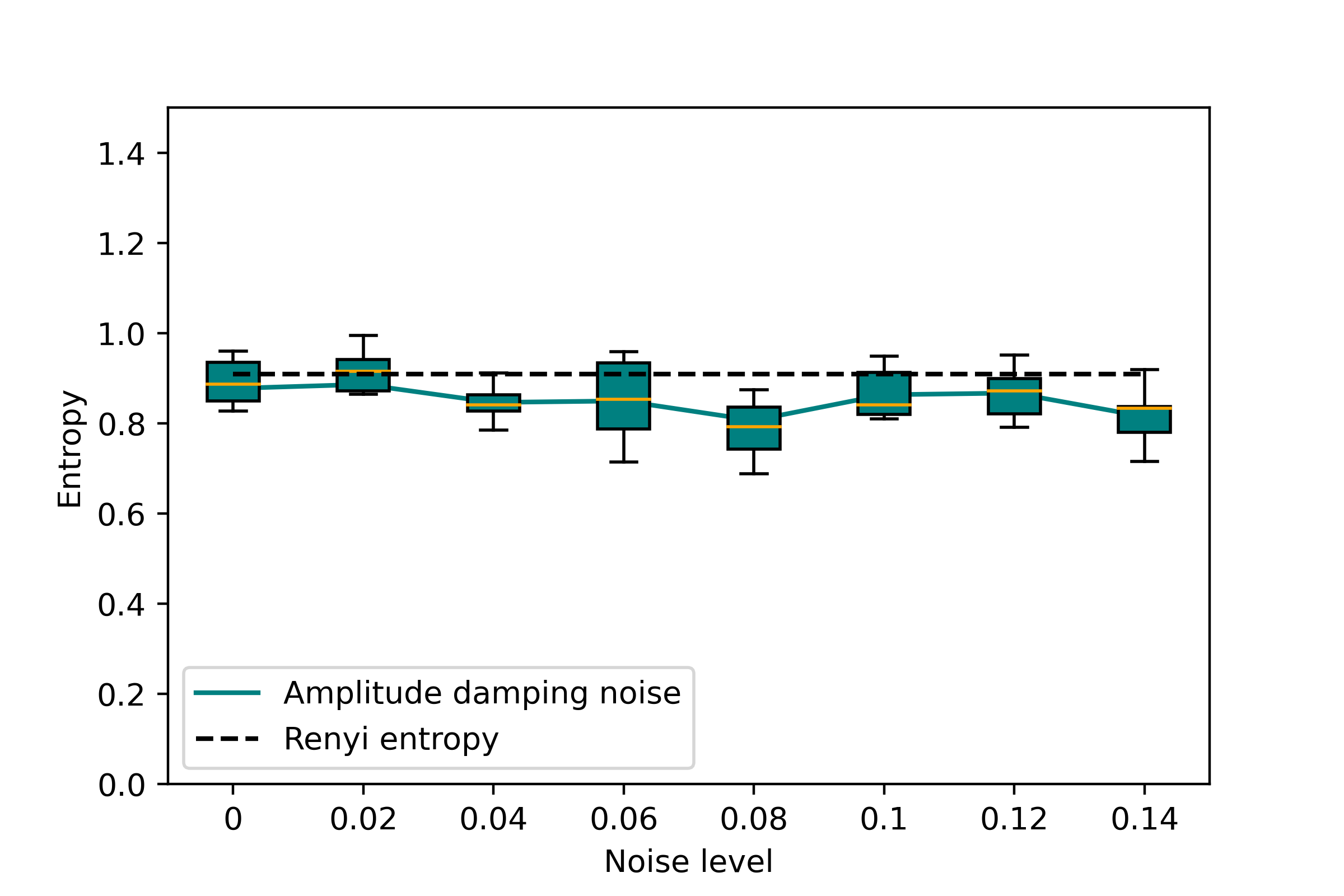}}
\subfigure[ \ $R_{2}(\rho)$ with depolarizing noises. \  ]{\includegraphics[width=0.47\textwidth]{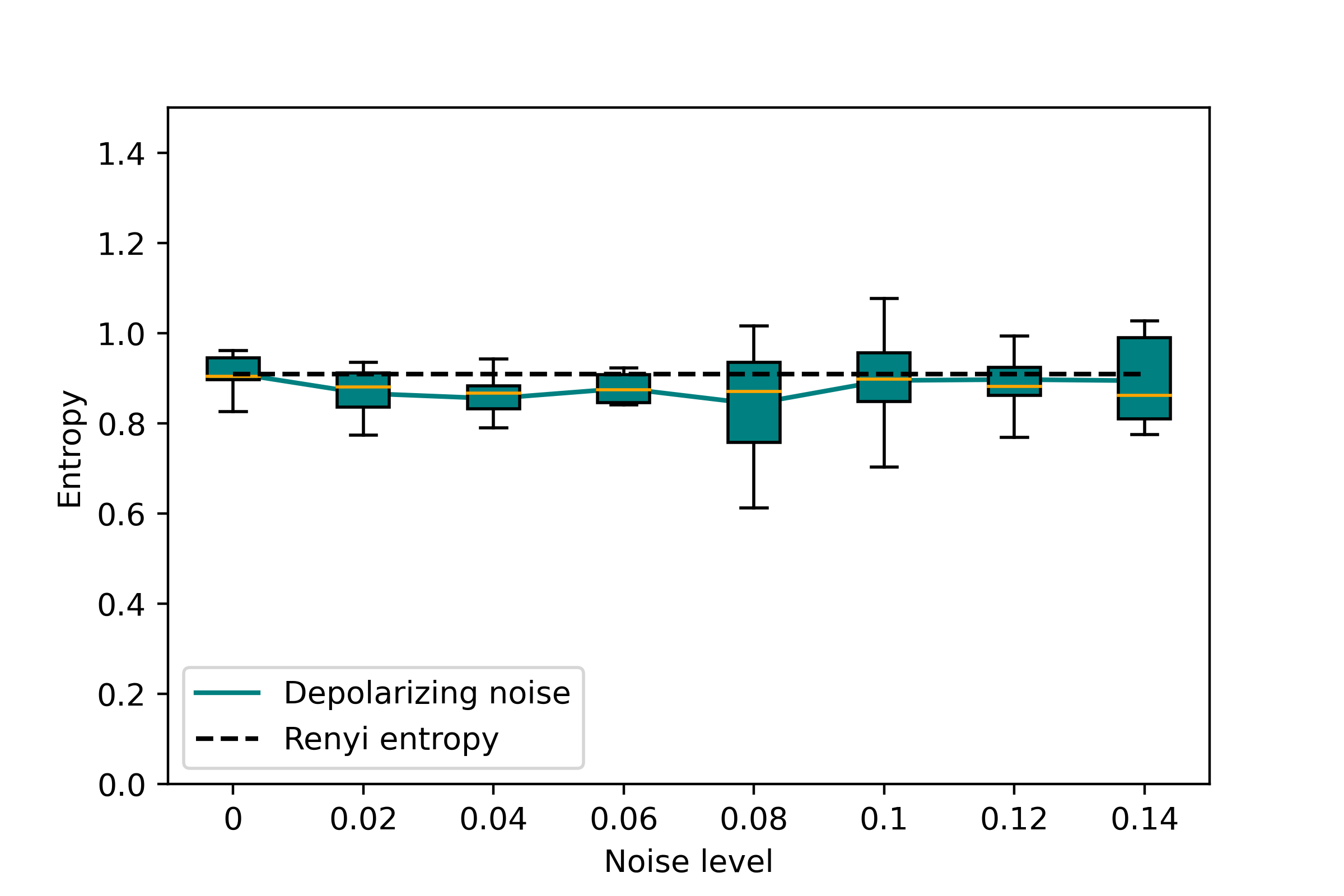}}
\label{fig:ad noise}
\caption{\footnotesize Figures (a) and (b) represent results for von Neumann entropy, and (c) and (d) represent the results for 2-R\'enyi entropy. The green curves link the average estimated entropy at different noise levels. The black dashed line represents the actual von Neumann entropy of quantum state $\rho$.  
}
\label{fig:Entropy respect to noises}
\end{figure}


In Figure \ref{fig:Entropy respect to noises}, (a) and (b) ((c) and (d)) are the box plots of $S(\rho)$ ($R_{2}(\rho)$) for amplitude damping noisy state and depolarizing noisy state, respectively. The black dashed line represents the true value of the entropy. The green curve represents the estimates of the entropy. 
As shown, all plots fluctuate around the black dashed line. Hence, we can confidently announce that our algorithm is robust to intermediate amplitude damping and depolarizing noises. 

In addition, the green curve in (b) and (d) is closer to the black dashed line than (a) and (c). Thus it implies that our algorithm performs better under depolarizing noises than amplitude damping noises for these chosen states.


\section{Discussions}\label{sec:Discussion}

\subsection{Comparison to literature}

The following discusses the differences between our results and previous related works~\cite{Acharya2020,Gilyen2019,Chowdhury2020,Luongo2020,Yirka2020,Gur2021}.




First, algorithms of \cite{Gilyen2019,Chowdhury2020,Luongo2020,Gur2021} require an oracle for preparing the purification of the input state. In other words, one has to find such quantum circuits for preparing the purification before performing the entropy estimation task. However, implementing the oracle for an unknown quantum state via quantum circuits is not easy in general. In contrast, our algorithm no longer require the oracle for implementing the circuits and only require access to the copies of the states, making our algorithms more practical than that of~\cite{Gilyen2019,Chowdhury2020,Luongo2020,Gur2021}. 

Second, when estimating the entropy of various states, one has to compile the circuit of the oracle for each input state to implement the quantum circuits of~\cite{Gilyen2019,Chowdhury2020,Luongo2020,Gur2021}, which means that one has to design different quantum circuits for various input states. This may cost additional resources for quantum compiling that maps the logical qubits of the algorithm to the physical qubits of the real quantum devices. In contrast, our algorithms can avoid this issue by using the same quantum circuits to estimate the entropy of various states. This is because the circuits in our algorithm are determined by the parameters $\Lambda$ and $\epsilon$, where $\Lambda$ is the lower bound of all non-zero eigenvalues of the input state, and $\epsilon$ is the precision (or called error tolerance). For a given precision and a given set of quantum states, our circuits can estimate the entropy of all states as long as the lower bound $\Lambda$ is set small enough. As a result, our approach would be more practical when dealing with multiple different quantum states.

Third, compared to \cite{Acharya2020,Gilyen2019,Luongo2020,Gur2021}, our algorithms could be more resource-efficient when the large state's minimal nonzero eigenvalue is polynomially small. The costs in previous works \cite{Gilyen2019,Luongo2020,Gur2021} are characterized in terms of the times of querying the oracle, depending on the dimension of the state. Meanwhile, 
the copy cost of \cite{Acharya2020} is exponentially large. In comparison, by Theorem \ref{th:main_result_sampling} \& \ref{th:main_renyi_sampling}, the copy costs of our algorithms scale polynomially with respect to the number of qubits $n$ when parameter $\Lambda=\Omega(1/poly(n))$.

Fourth, the approach in \cite{Yirka2020} can estimate the quantum R\'enyi entropy $R_{\alpha}(\alpha)$ when the parameter $\alpha$ is integer. In comparison, our approach is more general, i.e., our approach can apply to the case when $\alpha$ is an integer or non-integer.

\subsection{Applications}
Given the importance of quantum entropies, our algorithms will have various applications in science and engineering. 
Here, we discuss the applications in low-rank quantum states, quantum Gibbs state preparation and entanglement entropy estimation.
\paragraph{State preparation quality assessment.} 
The class of low-rank quantum states is particularly significant in physical experiments, forming a realistic model of quantum states prepared in the lab \cite{2015}. For instance, one important task is to prepare pure quantum states. In general, the prepared state tends to have rapidly decaying eigenvalues such that a low-rank state can well approximate it. If the low-rank state's quantum entropy  (disorder) can be quantified, we can assess the quality of the state preparation. Note that the generated state has a small number of significant eigenvalues, and the rests are close to zero. In this case, the generated state's eigenvalues with exponential scaling in the qubit counts can be ignored. Thus it is reasonable to some extent to assume that the minimal non-zero eigenvalue of the low-rank state is polynomially small in the worst case. Hence our algorithms can be applied to assess the quality of state preparation for this case.

\paragraph{Gibbs state preparation.}
Quantum Gibbs states or thermal states are of significant importance to quantum simulation~\cite{childs2018toward}, quantum machine learning~\cite{kieferova2017tomography,biamonte2017quantum}, and quantum optimizaiton~\cite{somma2008quantum}, etc. Several recent works \cite{Chowdhury2020,Wang2020} have proposed to use variational quantum algorithms to prepare the quantum Gibbs states. The core idea of these works is to train a parameterized quantum circuit (PQC) to generate a parameterized state $\rho(\bm\theta)$ matching the global minimum of the system's free energy. Specifically, the task is to minimize the loss function $L(\bm\theta)=\tr(H\rho(\bm\theta))-\beta^{-1}S(\rho(\bm\theta))$ by a gradient-descent method, where the notation $H$ denotes the system's Hamiltonian, and $\beta$ denotes the inverse temperature. One feasible scheme for estimating the gradient is the difference method, which demands efficient loss evaluation. As the loss evaluation involves the von Neumann entropy estimation, our algorithm could be employed for the gradient estimation. Besides, the PQC can be considered as a method to generate quantum state, which guarantees that we have the access to the target state freely. 


\paragraph{Entanglement entropy estimation}
Quantum entanglement~\cite{Horodecki2009a} plays a crucial role in quantum physics while
it is the central resource in many quantum information
applications such as teleportation, super-dense coding, and quantum key distribution~\cite{Ekert1991}. Thus, developing methods for entanglement quantification will be of great importance to the study in these fields. Notice that the entropy of entanglement~\cite{Bennett1996b} is one of the most important entanglement measures for quantifying bipartite entanglement since the rate of the transformation between any bipartite pure state $\ket{\psi}_{AB}$ and two-qubit singlet is given by its entropy of entanglement, i.e.,  von Neumann entropy of the subsystem $S(\tr_{B}(\op{\psi}{\psi}_{AB}))$. Hence, our approach can be employed to measure the entanglement via the von Neumann entropy estimation.

\subsection{Concluding remarks}\label{sec:conclusion}
This paper has provided quantum algorithms for estimating the von Neumann entropy and $\alpha$-R\'enyi entropy. The key of our algorithms is the quantum circuits that can efficiently evaluate the terms of the Fourier series and classically post-processing the measurement outcomes. Especially, quantum circuits are composed of primitive single/two-qubit gates. The design of quantum circuits synthesizes several quantum tools, such as iterative quantum phase estimation, the exponentiation of quantum state, qubit reset, and the linear combination of unitaries. 
The notable property of our algorithm is that the circuits do not use oracles, which makes our circuit more friendly to near-term quantum devices than the oracle-based quantum algorithms. Besides, our circuits are only determined by parameters $\Lambda$ and $\epsilon$, allowing the same quantum circuits to estimate multiple different quantum states. Furthermore, when the input state's minimal non-zero eigenvalue is polynomially scaling in the qubit counts, i.e., $\Lambda=\Omega(1/poly(n))$,
our algorithms only use polynomially many copies and thus have potential applications to large states processing. Given these merits, our algorithm may be expected to complete the entropy estimation task on the near and intermediate future quantum computing devices. 

It will be meaningful to further devise practical and efficient quantum algorithms to solve problems related to quantum entropies. These problems could be in the areas of optimization (combinatorial optimization problems~\cite{somma2008quantum}, semi-definite programming~\cite{Brandao2017}), quantum machine learning, many-body physics, and quantum chemistry. We also hope our results would shed light in the applications of near-term quantum devices in the study of condensed matter physics, high-energy physics, and gravity and black holes theory~\cite{dong2016gravity,bekenstein2020black}.


\vspace{0.5cm}

\textbf{Acknowledgments.---}
Y. W. acknowledges support from the Baidu-UTS AI Meets Quantum project and the Australian Research Council (Grant No: DP180100691). This work was done when B. Z. was a research interns at Baidu Research.

\bibliography{Reference}

\begin{thebibliography}{57}%
\makeatletter
\providecommand \@ifxundefined [1]{%
 \@ifx{#1\undefined}
}%
\providecommand \@ifnum [1]{%
 \ifnum #1\expandafter \@firstoftwo
 \else \expandafter \@secondoftwo
 \fi
}%
\providecommand \@ifx [1]{%
 \ifx #1\expandafter \@firstoftwo
 \else \expandafter \@secondoftwo
 \fi
}%
\providecommand \natexlab [1]{#1}%
\providecommand \enquote  [1]{``#1''}%
\providecommand \bibnamefont  [1]{#1}%
\providecommand \bibfnamefont [1]{#1}%
\providecommand \citenamefont [1]{#1}%
\providecommand \href@noop [0]{\@secondoftwo}%
\providecommand \href [0]{\begingroup \@sanitize@url \@href}%
\providecommand \@href[1]{\@@startlink{#1}\@@href}%
\providecommand \@@href[1]{\endgroup#1\@@endlink}%
\providecommand \@sanitize@url [0]{\catcode `\\12\catcode `\$12\catcode
  `\&12\catcode `\#12\catcode `\^12\catcode `\_12\catcode `\%12\relax}%
\providecommand \@@startlink[1]{}%
\providecommand \@@endlink[0]{}%
\providecommand \url  [0]{\begingroup\@sanitize@url \@url }%
\providecommand \@url [1]{\endgroup\@href {#1}{\urlprefix }}%
\providecommand \urlprefix  [0]{URL }%
\providecommand \Eprint [0]{\href }%
\providecommand \doibase [0]{http://dx.doi.org/}%
\providecommand \selectlanguage [0]{\@gobble}%
\providecommand \bibinfo  [0]{\@secondoftwo}%
\providecommand \bibfield  [0]{\@secondoftwo}%
\providecommand \translation [1]{[#1]}%
\providecommand \BibitemOpen [0]{}%
\providecommand \bibitemStop [0]{}%
\providecommand \bibitemNoStop [0]{.\EOS\space}%
\providecommand \EOS [0]{\spacefactor3000\relax}%
\providecommand \BibitemShut  [1]{\csname bibitem#1\endcsname}%
\let\auto@bib@innerbib\@empty
\bibitem [{\citenamefont {Bein}(2006)}]{BEIN2006101}%
  \BibitemOpen
  \bibfield  {author} {\bibinfo {author} {\bibfnamefont {B.}~\bibnamefont
  {Bein}},\ }\href {\doibase https://doi.org/10.1016/j.bpa.2005.07.009}
  {\bibfield  {journal} {\bibinfo  {journal} {Best Practice \& Research
  Clinical Anaesthesiology}\ }\textbf {\bibinfo {volume} {20}},\ \bibinfo
  {pages} {101} (\bibinfo {year} {2006})}\BibitemShut {NoStop}%
\bibitem [{\citenamefont {Shannon}(2001)}]{shannon2001mathematical}%
  \BibitemOpen
  \bibfield  {author} {\bibinfo {author} {\bibfnamefont {C.~E.}\ \bibnamefont
  {Shannon}},\ }\href
  {https://dl.acm.org/doi/abs/10.1145/584091.584093?casa_token=UpmfTqioddIAAAAA%3ALF1FhLwgBo2aKJTinH-EAH9rcAWaoLoCtETV_emy8nbhtkBtqsoejGml9ula6qh36_njSVLnuxS0R2I}
  {\bibfield  {journal} {\bibinfo  {journal} {ACM SIGMOBILE mobile computing
  and communications review}\ }\textbf {\bibinfo {volume} {5}},\ \bibinfo
  {pages} {3} (\bibinfo {year} {2001})}\BibitemShut {NoStop}%
\bibitem [{\citenamefont {R{\'e}nyi}(1961)}]{renyi1961measures}%
  \BibitemOpen
  \bibfield  {author} {\bibinfo {author} {\bibfnamefont {A.}~\bibnamefont
  {R{\'e}nyi}},\ }in\ \href
  {https://static.renyi.hu/renyi_cikkek/1961_on_measures_of_entropy_and_information.pdf}
  {\emph {\bibinfo {booktitle} {Proceedings of the Fourth Berkeley Symposium on
  Mathematical Statistics and Probability, Volume 1: Contributions to the
  Theory of Statistics}}}\ (\bibinfo {organization} {University of California
  Press},\ \bibinfo {year} {1961})\ pp.\ \bibinfo {pages}
  {547--561}\BibitemShut {NoStop}%
\bibitem [{\citenamefont {von Neumann}(1932)}]{von1932mathematische}%
  \BibitemOpen
  \bibfield  {author} {\bibinfo {author} {\bibfnamefont {J.}~\bibnamefont {von
  Neumann}},\ }\href@noop {} {\enquote {\bibinfo {title} {Mathematische
  grundlagen der quantenmechanik springer},}\ } (\bibinfo {year}
  {1932})\BibitemShut {NoStop}%
\bibitem [{\citenamefont {Petz}(1986)}]{Petz1986a}%
  \BibitemOpen
  \bibfield  {author} {\bibinfo {author} {\bibfnamefont {D.}~\bibnamefont
  {Petz}},\ }\href {\doibase 10.1016/0034-4877(86)90067-4} {\bibfield
  {journal} {\bibinfo  {journal} {Reports on Mathematical Physics}\ }\textbf
  {\bibinfo {volume} {23}},\ \bibinfo {pages} {57} (\bibinfo {year}
  {1986})}\BibitemShut {NoStop}%
\bibitem [{\citenamefont {Bennett}\ and\ \citenamefont
  {Shor}(1998)}]{bennett1998quantum}%
  \BibitemOpen
  \bibfield  {author} {\bibinfo {author} {\bibfnamefont {C.~H.}\ \bibnamefont
  {Bennett}}\ and\ \bibinfo {author} {\bibfnamefont {P.~W.}\ \bibnamefont
  {Shor}},\ }\href {https://ieeexplore.ieee.org/abstract/document/720553/}
  {\bibfield  {journal} {\bibinfo  {journal} {IEEE transactions on information
  theory}\ }\textbf {\bibinfo {volume} {44}},\ \bibinfo {pages} {2724}
  (\bibinfo {year} {1998})}\BibitemShut {NoStop}%
\bibitem [{\citenamefont {Aspuru-Guzik}\ \emph {et~al.}(2005)\citenamefont
  {Aspuru-Guzik}, \citenamefont {Dutoi}, \citenamefont {Love},\ and\
  \citenamefont {Head-Gordon}}]{aspuru2005simulated}%
  \BibitemOpen
  \bibfield  {author} {\bibinfo {author} {\bibfnamefont {A.}~\bibnamefont
  {Aspuru-Guzik}}, \bibinfo {author} {\bibfnamefont {A.~D.}\ \bibnamefont
  {Dutoi}}, \bibinfo {author} {\bibfnamefont {P.~J.}\ \bibnamefont {Love}}, \
  and\ \bibinfo {author} {\bibfnamefont {M.}~\bibnamefont {Head-Gordon}},\
  }\href {https://science.sciencemag.org/content/309/5741/1704} {\bibfield
  {journal} {\bibinfo  {journal} {Science}\ }\textbf {\bibinfo {volume}
  {309}},\ \bibinfo {pages} {1704} (\bibinfo {year} {2005})}\BibitemShut
  {NoStop}%
\bibitem [{\citenamefont {Laflorencie}(2016)}]{laflorencie2016quantum}%
  \BibitemOpen
  \bibfield  {author} {\bibinfo {author} {\bibfnamefont {N.}~\bibnamefont
  {Laflorencie}},\ }\href {https://arxiv.org/pdf/1512.03388.pdf} {\bibfield
  {journal} {\bibinfo  {journal} {Physics Reports}\ }\textbf {\bibinfo {volume}
  {646}},\ \bibinfo {pages} {1} (\bibinfo {year} {2016})}\BibitemShut {NoStop}%
\bibitem [{\citenamefont {Peschanski}\ and\ \citenamefont
  {Seki}(2019)}]{Peschanski2019}%
  \BibitemOpen
  \bibfield  {author} {\bibinfo {author} {\bibfnamefont {R.}~\bibnamefont
  {Peschanski}}\ and\ \bibinfo {author} {\bibfnamefont {S.}~\bibnamefont
  {Seki}},\ }\href {\doibase 10.1103/PhysRevD.100.076012} {\bibfield  {journal}
  {\bibinfo  {journal} {Physical Review D}\ }\textbf {\bibinfo {volume} {100}}
  (\bibinfo {year} {2019}),\ 10.1103/PhysRevD.100.076012},\ \Eprint
  {http://arxiv.org/abs/1906.09696} {arXiv:1906.09696} \BibitemShut {NoStop}%
\bibitem [{\citenamefont {Schumacher}(1995)}]{schumacher1995quantum}%
  \BibitemOpen
  \bibfield  {author} {\bibinfo {author} {\bibfnamefont {B.}~\bibnamefont
  {Schumacher}},\ }\href
  {https://journals.aps.org/pra/abstract/10.1103/PhysRevA.51.2738} {\bibfield
  {journal} {\bibinfo  {journal} {Physical Review A}\ }\textbf {\bibinfo
  {volume} {51}},\ \bibinfo {pages} {2738} (\bibinfo {year}
  {1995})}\BibitemShut {NoStop}%
\bibitem [{\citenamefont {Chowdhury}\ \emph {et~al.}(2020)\citenamefont
  {Chowdhury}, \citenamefont {Low},\ and\ \citenamefont
  {Wiebe}}]{Chowdhury2020}%
  \BibitemOpen
  \bibfield  {author} {\bibinfo {author} {\bibfnamefont {A.~N.}\ \bibnamefont
  {Chowdhury}}, \bibinfo {author} {\bibfnamefont {G.~H.}\ \bibnamefont {Low}},
  \ and\ \bibinfo {author} {\bibfnamefont {N.}~\bibnamefont {Wiebe}},\ }\href
  {http://arxiv.org/abs/2002.00055} {\bibfield  {journal} {\bibinfo  {journal}
  {arXiv}\ ,\ \bibinfo {pages} {1}} (\bibinfo {year} {2020})},\ \Eprint
  {http://arxiv.org/abs/2002.00055} {arXiv:2002.00055} \BibitemShut {NoStop}%
\bibitem [{\citenamefont {Yuan}\ \emph {et~al.}(2019)\citenamefont {Yuan},
  \citenamefont {Endo}, \citenamefont {Zhao}, \citenamefont {Li},\ and\
  \citenamefont {Benjamin}}]{Yuan2018a}%
  \BibitemOpen
  \bibfield  {author} {\bibinfo {author} {\bibfnamefont {X.}~\bibnamefont
  {Yuan}}, \bibinfo {author} {\bibfnamefont {S.}~\bibnamefont {Endo}}, \bibinfo
  {author} {\bibfnamefont {Q.}~\bibnamefont {Zhao}}, \bibinfo {author}
  {\bibfnamefont {Y.}~\bibnamefont {Li}}, \ and\ \bibinfo {author}
  {\bibfnamefont {S.~C.}\ \bibnamefont {Benjamin}},\ }\href {\doibase
  10.22331/q-2019-10-07-191} {\bibfield  {journal} {\bibinfo  {journal}
  {Quantum}\ }\textbf {\bibinfo {volume} {3}},\ \bibinfo {pages} {191}
  (\bibinfo {year} {2019})},\ \Eprint {http://arxiv.org/abs/1812.08767}
  {arXiv:1812.08767} \BibitemShut {NoStop}%
\bibitem [{\citenamefont {Wu}\ and\ \citenamefont {Hsieh}(2019)}]{Wu2019b}%
  \BibitemOpen
  \bibfield  {author} {\bibinfo {author} {\bibfnamefont {J.}~\bibnamefont
  {Wu}}\ and\ \bibinfo {author} {\bibfnamefont {T.~H.}\ \bibnamefont {Hsieh}},\
  }\href {\doibase 10.1103/PhysRevLett.123.220502} {\bibfield  {journal}
  {\bibinfo  {journal} {Physical Review Letters}\ }\textbf {\bibinfo {volume}
  {123}},\ \bibinfo {pages} {220502} (\bibinfo {year} {2019})},\ \Eprint
  {http://arxiv.org/abs/1811.11756} {arXiv:1811.11756} \BibitemShut {NoStop}%
\bibitem [{\citenamefont {Wang}\ \emph {et~al.}(2020)\citenamefont {Wang},
  \citenamefont {Li},\ and\ \citenamefont {Wang}}]{Wang2020}%
  \BibitemOpen
  \bibfield  {author} {\bibinfo {author} {\bibfnamefont {Y.}~\bibnamefont
  {Wang}}, \bibinfo {author} {\bibfnamefont {G.}~\bibnamefont {Li}}, \ and\
  \bibinfo {author} {\bibfnamefont {X.}~\bibnamefont {Wang}},\ }\href {\doibase
  10.1103/PhysRevApplied.16.054035} {\bibfield  {journal} {\bibinfo  {journal}
  {Physical Review Applied}\ }\textbf {\bibinfo {volume} {16}},\ \bibinfo
  {pages} {054035} (\bibinfo {year} {2020})},\ \Eprint
  {http://arxiv.org/abs/2005.08797} {arXiv:2005.08797} \BibitemShut {NoStop}%
\bibitem [{\citenamefont {Wiebe}\ \emph {et~al.}(2014)\citenamefont {Wiebe},
  \citenamefont {Granade}, \citenamefont {Ferrie},\ and\ \citenamefont
  {Cory}}]{Wiebe2013}%
  \BibitemOpen
  \bibfield  {author} {\bibinfo {author} {\bibfnamefont {N.}~\bibnamefont
  {Wiebe}}, \bibinfo {author} {\bibfnamefont {C.}~\bibnamefont {Granade}},
  \bibinfo {author} {\bibfnamefont {C.}~\bibnamefont {Ferrie}}, \ and\ \bibinfo
  {author} {\bibfnamefont {D.~G.}\ \bibnamefont {Cory}},\ }\href {\doibase
  10.1103/PhysRevLett.112.190501} {\bibfield  {journal} {\bibinfo  {journal}
  {Physical Review Letters}\ }\textbf {\bibinfo {volume} {112}},\ \bibinfo
  {pages} {190501} (\bibinfo {year} {2014})}\BibitemShut {NoStop}%
\bibitem [{\citenamefont {Anshu}\ \emph {et~al.}(2021)\citenamefont {Anshu},
  \citenamefont {Arunachalam}, \citenamefont {Kuwahara},\ and\ \citenamefont
  {Soleimanifar}}]{Anshu2021}%
  \BibitemOpen
  \bibfield  {author} {\bibinfo {author} {\bibfnamefont {A.}~\bibnamefont
  {Anshu}}, \bibinfo {author} {\bibfnamefont {S.}~\bibnamefont {Arunachalam}},
  \bibinfo {author} {\bibfnamefont {T.}~\bibnamefont {Kuwahara}}, \ and\
  \bibinfo {author} {\bibfnamefont {M.}~\bibnamefont {Soleimanifar}},\ }\href
  {\doibase 10.1038/s41567-021-01232-0} {\bibfield  {journal} {\bibinfo
  {journal} {Nature Physics}\ }\textbf {\bibinfo {volume} {17}},\ \bibinfo
  {pages} {931} (\bibinfo {year} {2021})}\BibitemShut {NoStop}%
\bibitem [{\citenamefont {Bairey}\ \emph {et~al.}(2019)\citenamefont {Bairey},
  \citenamefont {Arad},\ and\ \citenamefont {Lindner}}]{Bairey2018}%
  \BibitemOpen
  \bibfield  {author} {\bibinfo {author} {\bibfnamefont {E.}~\bibnamefont
  {Bairey}}, \bibinfo {author} {\bibfnamefont {I.}~\bibnamefont {Arad}}, \ and\
  \bibinfo {author} {\bibfnamefont {N.~H.}\ \bibnamefont {Lindner}},\ }\href
  {\doibase 10.1103/PhysRevLett.122.020504} {\bibfield  {journal} {\bibinfo
  {journal} {Physical Review Letters}\ }\textbf {\bibinfo {volume} {122}},\
  \bibinfo {pages} {020504} (\bibinfo {year} {2019})},\ \Eprint
  {http://arxiv.org/abs/1807.04564} {arXiv:1807.04564} \BibitemShut {NoStop}%
\bibitem [{\citenamefont {Wang}\ \emph {et~al.}(2021)\citenamefont {Wang},
  \citenamefont {Li},\ and\ \citenamefont {Wang}}]{wang2021hybrid}%
  \BibitemOpen
  \bibfield  {author} {\bibinfo {author} {\bibfnamefont {Y.}~\bibnamefont
  {Wang}}, \bibinfo {author} {\bibfnamefont {G.}~\bibnamefont {Li}}, \ and\
  \bibinfo {author} {\bibfnamefont {X.}~\bibnamefont {Wang}},\ }\href
  {http://arxiv.org/abs/2103.01061} {\bibfield  {journal} {\bibinfo  {journal}
  {arXiv preprint arXiv:2103.01061}\ } (\bibinfo {year} {2021})}\BibitemShut
  {NoStop}%
\bibitem [{\citenamefont {Montanaro}\ and\ \citenamefont
  {de~Wolf}(2016)}]{Montanaro2016}%
  \BibitemOpen
  \bibfield  {author} {\bibinfo {author} {\bibfnamefont {A.}~\bibnamefont
  {Montanaro}}\ and\ \bibinfo {author} {\bibfnamefont {R.}~\bibnamefont
  {de~Wolf}},\ }\href {\doibase 10.4086/toc.gs.2016.007} {\bibfield  {journal}
  {\bibinfo  {journal} {Theory of Computing}\ }\textbf {\bibinfo {volume}
  {2016}},\ \bibinfo {pages} {1} (\bibinfo {year} {2016})},\ \Eprint
  {http://arxiv.org/abs/arXiv:1310.2035v4} {arXiv:arXiv:1310.2035v4}
  \BibitemShut {NoStop}%
\bibitem [{\citenamefont {Acharya}\ \emph {et~al.}(2020)\citenamefont
  {Acharya}, \citenamefont {Issa}, \citenamefont {Shende},\ and\ \citenamefont
  {Wagner}}]{Acharya2020}%
  \BibitemOpen
  \bibfield  {author} {\bibinfo {author} {\bibfnamefont {J.}~\bibnamefont
  {Acharya}}, \bibinfo {author} {\bibfnamefont {I.}~\bibnamefont {Issa}},
  \bibinfo {author} {\bibfnamefont {N.~V.}\ \bibnamefont {Shende}}, \ and\
  \bibinfo {author} {\bibfnamefont {A.~B.}\ \bibnamefont {Wagner}},\ }\href
  {\doibase 10.1109/JSAIT.2020.3015235} {\bibfield  {journal} {\bibinfo
  {journal} {IEEE Journal on Selected Areas in Information Theory}\ }\textbf
  {\bibinfo {volume} {1}},\ \bibinfo {pages} {454} (\bibinfo {year}
  {2020})}\BibitemShut {NoStop}%
\bibitem [{\citenamefont {Kontopoulou}\ \emph {et~al.}(2018)\citenamefont
  {Kontopoulou}, \citenamefont {Grama}, \citenamefont {Szpankowski},\ and\
  \citenamefont {Drineas}}]{Kontopoulou2018}%
  \BibitemOpen
  \bibfield  {author} {\bibinfo {author} {\bibfnamefont {E.-M.}\ \bibnamefont
  {Kontopoulou}}, \bibinfo {author} {\bibfnamefont {A.}~\bibnamefont {Grama}},
  \bibinfo {author} {\bibfnamefont {W.}~\bibnamefont {Szpankowski}}, \ and\
  \bibinfo {author} {\bibfnamefont {P.}~\bibnamefont {Drineas}},\ }in\ \href
  {\doibase 10.1109/ISIT.2018.8437723} {\emph {\bibinfo {booktitle} {2018 IEEE
  International Symposium on Information Theory (ISIT)}}},\ Vol.\ \bibinfo
  {volume} {2018-June}\ (\bibinfo  {publisher} {IEEE},\ \bibinfo {year}
  {2018})\ pp.\ \bibinfo {pages} {2486--2490},\ \Eprint
  {http://arxiv.org/abs/1801.01072} {arXiv:1801.01072} \BibitemShut {NoStop}%
\bibitem [{\citenamefont {Hastings}\ \emph {et~al.}(2010)\citenamefont
  {Hastings}, \citenamefont {Gonz{\'{a}}lez}, \citenamefont {Kallin},\ and\
  \citenamefont {Melko}}]{Hastings2010}%
  \BibitemOpen
  \bibfield  {author} {\bibinfo {author} {\bibfnamefont {M.~B.}\ \bibnamefont
  {Hastings}}, \bibinfo {author} {\bibfnamefont {I.}~\bibnamefont
  {Gonz{\'{a}}lez}}, \bibinfo {author} {\bibfnamefont {A.~B.}\ \bibnamefont
  {Kallin}}, \ and\ \bibinfo {author} {\bibfnamefont {R.~G.}\ \bibnamefont
  {Melko}},\ }\href {\doibase 10.1103/PhysRevLett.104.157201} {\bibfield
  {journal} {\bibinfo  {journal} {Physical Review Letters}\ }\textbf {\bibinfo
  {volume} {104}},\ \bibinfo {pages} {157201} (\bibinfo {year} {2010})},\
  \Eprint {http://arxiv.org/abs/arXiv:1001.2335v2} {arXiv:arXiv:1001.2335v2}
  \BibitemShut {NoStop}%
\bibitem [{\citenamefont {Li}\ and\ \citenamefont {Wu}(2019)}]{Li2019a}%
  \BibitemOpen
  \bibfield  {author} {\bibinfo {author} {\bibfnamefont {T.}~\bibnamefont
  {Li}}\ and\ \bibinfo {author} {\bibfnamefont {X.}~\bibnamefont {Wu}},\ }\href
  {\doibase 10.1109/TIT.2018.2883306} {\bibfield  {journal} {\bibinfo
  {journal} {IEEE Transactions on Information Theory}\ }\textbf {\bibinfo
  {volume} {65}},\ \bibinfo {pages} {2899} (\bibinfo {year} {2019})},\ \Eprint
  {http://arxiv.org/abs/1710.06025} {arXiv:1710.06025} \BibitemShut {NoStop}%
\bibitem [{\citenamefont {Gily{\'{e}}n}\ and\ \citenamefont
  {Li}(2019)}]{Gilyen2019}%
  \BibitemOpen
  \bibfield  {author} {\bibinfo {author} {\bibfnamefont {A.}~\bibnamefont
  {Gily{\'{e}}n}}\ and\ \bibinfo {author} {\bibfnamefont {T.}~\bibnamefont
  {Li}},\ }\href {\doibase 10.4230/LIPIcs.ITCS.2020.25} {\bibfield  {journal}
  {\bibinfo  {journal} {Leibniz International Proceedings in Informatics,
  LIPIcs}\ }\textbf {\bibinfo {volume} {151}},\ \bibinfo {pages} {1} (\bibinfo
  {year} {2019})},\ \Eprint {http://arxiv.org/abs/1902.00814}
  {arXiv:1902.00814} \BibitemShut {NoStop}%
\bibitem [{\citenamefont {Subramanian}\ and\ \citenamefont
  {Hsieh}(2021)}]{Subramanian2019}%
  \BibitemOpen
  \bibfield  {author} {\bibinfo {author} {\bibfnamefont {S.}~\bibnamefont
  {Subramanian}}\ and\ \bibinfo {author} {\bibfnamefont {M.-H.}\ \bibnamefont
  {Hsieh}},\ }\href {\doibase 10.1103/physreva.104.022428} {\bibfield
  {journal} {\bibinfo  {journal} {Physical Review A}\ }\textbf {\bibinfo
  {volume} {104}} (\bibinfo {year} {2021}),\
  10.1103/physreva.104.022428}\BibitemShut {NoStop}%
\bibitem [{\citenamefont {Luongo}\ and\ \citenamefont
  {Shao}(2020)}]{Luongo2020}%
  \BibitemOpen
  \bibfield  {author} {\bibinfo {author} {\bibfnamefont {A.}~\bibnamefont
  {Luongo}}\ and\ \bibinfo {author} {\bibfnamefont {C.}~\bibnamefont {Shao}},\
  }\href {http://arxiv.org/abs/2011.06475} {\  (\bibinfo {year} {2020})},\
  \Eprint {http://arxiv.org/abs/2011.06475} {arXiv:2011.06475} \BibitemShut
  {NoStop}%
\bibitem [{\citenamefont {Gur}\ \emph {et~al.}(2021)\citenamefont {Gur},
  \citenamefont {Hsieh},\ and\ \citenamefont {Subramanian}}]{Gur2021}%
  \BibitemOpen
  \bibfield  {author} {\bibinfo {author} {\bibfnamefont {T.}~\bibnamefont
  {Gur}}, \bibinfo {author} {\bibfnamefont {M.-H.}\ \bibnamefont {Hsieh}}, \
  and\ \bibinfo {author} {\bibfnamefont {S.}~\bibnamefont {Subramanian}},\
  }\href {http://arxiv.org/abs/2111.11139} {\ ,\ \bibinfo {pages} {1} (\bibinfo
  {year} {2021})},\ \Eprint {http://arxiv.org/abs/2111.11139}
  {arXiv:2111.11139} \BibitemShut {NoStop}%
\bibitem [{\citenamefont {Yirka}\ and\ \citenamefont
  {Subasi}(2020)}]{Yirka2020}%
  \BibitemOpen
  \bibfield  {author} {\bibinfo {author} {\bibfnamefont {J.}~\bibnamefont
  {Yirka}}\ and\ \bibinfo {author} {\bibfnamefont {Y.}~\bibnamefont {Subasi}},\
  }\href {\doibase 10.22331/q-2021-09-02-535} {\  (\bibinfo {year} {2020}),\
  10.22331/q-2021-09-02-535},\ \Eprint {http://arxiv.org/abs/2010.03080}
  {arXiv:2010.03080} \BibitemShut {NoStop}%
\bibitem [{\citenamefont {Preskill}(2018)}]{Preskill2018}%
  \BibitemOpen
  \bibfield  {author} {\bibinfo {author} {\bibfnamefont {J.}~\bibnamefont
  {Preskill}},\ }\href {\doibase 10.22331/q-2018-08-06-79} {\bibfield
  {journal} {\bibinfo  {journal} {Quantum}\ }\textbf {\bibinfo {volume} {2}},\
  \bibinfo {pages} {79} (\bibinfo {year} {2018})},\ \Eprint
  {http://arxiv.org/abs/1801.00862} {arXiv:1801.00862} \BibitemShut {NoStop}%
\bibitem [{\citenamefont {Kitaev}(1995)}]{kitaev1995quantum}%
  \BibitemOpen
  \bibfield  {author} {\bibinfo {author} {\bibfnamefont {A.~Y.}\ \bibnamefont
  {Kitaev}},\ }\href {https://arxiv.org/pdf/quant-ph/9511026.pdf} {\bibfield
  {journal} {\bibinfo  {journal} {arXiv preprint quant-ph/9511026}\ } (\bibinfo
  {year} {1995})}\BibitemShut {NoStop}%
\bibitem [{\citenamefont {Lloyd}\ \emph {et~al.}(2014)\citenamefont {Lloyd},
  \citenamefont {Mohseni},\ and\ \citenamefont {Rebentrost}}]{Lloyd2014}%
  \BibitemOpen
  \bibfield  {author} {\bibinfo {author} {\bibfnamefont {S.}~\bibnamefont
  {Lloyd}}, \bibinfo {author} {\bibfnamefont {M.}~\bibnamefont {Mohseni}}, \
  and\ \bibinfo {author} {\bibfnamefont {P.}~\bibnamefont {Rebentrost}},\
  }\href {\doibase 10.1038/nphys3029} {\bibfield  {journal} {\bibinfo
  {journal} {Nature Physics}\ }\textbf {\bibinfo {volume} {10}},\ \bibinfo
  {pages} {631} (\bibinfo {year} {2014})},\ \Eprint
  {http://arxiv.org/abs/arXiv:1307.0401v2} {arXiv:arXiv:1307.0401v2}
  \BibitemShut {NoStop}%
\bibitem [{\citenamefont {Berry}\ \emph {et~al.}(2015)\citenamefont {Berry},
  \citenamefont {Childs},\ and\ \citenamefont {Kothari}}]{Berry2015}%
  \BibitemOpen
  \bibfield  {author} {\bibinfo {author} {\bibfnamefont {D.~W.}\ \bibnamefont
  {Berry}}, \bibinfo {author} {\bibfnamefont {A.~M.}\ \bibnamefont {Childs}}, \
  and\ \bibinfo {author} {\bibfnamefont {R.}~\bibnamefont {Kothari}},\ }in\
  \href {\doibase 10.1109/FOCS.2015.54} {\emph {\bibinfo {booktitle} {2015 IEEE
  56th Annual Symposium on Foundations of Computer Science}}},\ Vol.\ \bibinfo
  {volume} {2015-Decem}\ (\bibinfo  {publisher} {IEEE},\ \bibinfo {year}
  {2015})\ pp.\ \bibinfo {pages} {792--809},\ \Eprint
  {http://arxiv.org/abs/1501.01715} {arXiv:1501.01715} \BibitemShut {NoStop}%
\bibitem [{\citenamefont {Egger}\ \emph {et~al.}(2018)\citenamefont {Egger},
  \citenamefont {Werninghaus}, \citenamefont {Ganzhorn}, \citenamefont {Salis},
  \citenamefont {Fuhrer}, \citenamefont {Mueller},\ and\ \citenamefont
  {Filipp}}]{egger2018pulsed}%
  \BibitemOpen
  \bibfield  {author} {\bibinfo {author} {\bibfnamefont {D.~J.}\ \bibnamefont
  {Egger}}, \bibinfo {author} {\bibfnamefont {M.}~\bibnamefont {Werninghaus}},
  \bibinfo {author} {\bibfnamefont {M.}~\bibnamefont {Ganzhorn}}, \bibinfo
  {author} {\bibfnamefont {G.}~\bibnamefont {Salis}}, \bibinfo {author}
  {\bibfnamefont {A.}~\bibnamefont {Fuhrer}}, \bibinfo {author} {\bibfnamefont
  {P.}~\bibnamefont {Mueller}}, \ and\ \bibinfo {author} {\bibfnamefont
  {S.}~\bibnamefont {Filipp}},\ }\href {https://arxiv.org/abs/1802.08980}
  {\bibfield  {journal} {\bibinfo  {journal} {Physical Review Applied}\
  }\textbf {\bibinfo {volume} {10}},\ \bibinfo {pages} {044030} (\bibinfo
  {year} {2018})}\BibitemShut {NoStop}%
\bibitem [{\citenamefont {Gily{\'e}n}\ \emph {et~al.}(2019)\citenamefont
  {Gily{\'e}n}, \citenamefont {Su}, \citenamefont {Low},\ and\ \citenamefont
  {Wiebe}}]{gilyen2019quantum}%
  \BibitemOpen
  \bibfield  {author} {\bibinfo {author} {\bibfnamefont {A.}~\bibnamefont
  {Gily{\'e}n}}, \bibinfo {author} {\bibfnamefont {Y.}~\bibnamefont {Su}},
  \bibinfo {author} {\bibfnamefont {G.~H.}\ \bibnamefont {Low}}, \ and\
  \bibinfo {author} {\bibfnamefont {N.}~\bibnamefont {Wiebe}},\ }in\ \href
  {https://arxiv.org/pdf/1806.01838.pdf} {\emph {\bibinfo {booktitle}
  {Proceedings of the 51st Annual ACM SIGACT Symposium on Theory of
  Computing}}}\ (\bibinfo {year} {2019})\ pp.\ \bibinfo {pages}
  {193--204}\BibitemShut {NoStop}%
\bibitem [{\citenamefont {Van~Apeldoorn}\ \emph {et~al.}(2017)\citenamefont
  {Van~Apeldoorn}, \citenamefont {Gily{\'e}n}, \citenamefont {Gribling},\ and\
  \citenamefont {de~Wolf}}]{van2017quantum}%
  \BibitemOpen
  \bibfield  {author} {\bibinfo {author} {\bibfnamefont {J.}~\bibnamefont
  {Van~Apeldoorn}}, \bibinfo {author} {\bibfnamefont {A.}~\bibnamefont
  {Gily{\'e}n}}, \bibinfo {author} {\bibfnamefont {S.}~\bibnamefont
  {Gribling}}, \ and\ \bibinfo {author} {\bibfnamefont {R.}~\bibnamefont
  {de~Wolf}},\ }in\ \href
  {https://quantum-journal.org/papers/q-2020-02-14-230/pdf/} {\emph {\bibinfo
  {booktitle} {2017 IEEE 58th Annual Symposium on Foundations of Computer
  Science (FOCS)}}}\ (\bibinfo {organization} {IEEE},\ \bibinfo {year} {2017})\
  pp.\ \bibinfo {pages} {403--414}\BibitemShut {NoStop}%
\bibitem [{\citenamefont {Ekert}\ \emph {et~al.}(2002)\citenamefont {Ekert},
  \citenamefont {Alves}, \citenamefont {Oi}, \citenamefont {Horodecki},
  \citenamefont {Horodecki},\ and\ \citenamefont {Kwek}}]{ekert2002direct}%
  \BibitemOpen
  \bibfield  {author} {\bibinfo {author} {\bibfnamefont {A.~K.}\ \bibnamefont
  {Ekert}}, \bibinfo {author} {\bibfnamefont {C.~M.}\ \bibnamefont {Alves}},
  \bibinfo {author} {\bibfnamefont {D.~K.}\ \bibnamefont {Oi}}, \bibinfo
  {author} {\bibfnamefont {M.}~\bibnamefont {Horodecki}}, \bibinfo {author}
  {\bibfnamefont {P.}~\bibnamefont {Horodecki}}, \ and\ \bibinfo {author}
  {\bibfnamefont {L.~C.}\ \bibnamefont {Kwek}},\ }\href
  {https://journals.aps.org/prl/abstract/10.1103/PhysRevLett.88.217901}
  {\bibfield  {journal} {\bibinfo  {journal} {Physical review letters}\
  }\textbf {\bibinfo {volume} {88}},\ \bibinfo {pages} {217901} (\bibinfo
  {year} {2002})}\BibitemShut {NoStop}%
\bibitem [{\citenamefont {Reed}\ \emph {et~al.}(2010)\citenamefont {Reed},
  \citenamefont {Johnson}, \citenamefont {Houck}, \citenamefont {DiCarlo},
  \citenamefont {Chow}, \citenamefont {Schuster}, \citenamefont {Frunzio},\
  and\ \citenamefont {Schoelkopf}}]{reed2010fast}%
  \BibitemOpen
  \bibfield  {author} {\bibinfo {author} {\bibfnamefont {M.~D.}\ \bibnamefont
  {Reed}}, \bibinfo {author} {\bibfnamefont {B.~R.}\ \bibnamefont {Johnson}},
  \bibinfo {author} {\bibfnamefont {A.~A.}\ \bibnamefont {Houck}}, \bibinfo
  {author} {\bibfnamefont {L.}~\bibnamefont {DiCarlo}}, \bibinfo {author}
  {\bibfnamefont {J.~M.}\ \bibnamefont {Chow}}, \bibinfo {author}
  {\bibfnamefont {D.~I.}\ \bibnamefont {Schuster}}, \bibinfo {author}
  {\bibfnamefont {L.}~\bibnamefont {Frunzio}}, \ and\ \bibinfo {author}
  {\bibfnamefont {R.~J.}\ \bibnamefont {Schoelkopf}},\ }\href
  {https://arxiv.org/abs/1003.0142} {\bibfield  {journal} {\bibinfo  {journal}
  {Applied Physics Letters}\ }\textbf {\bibinfo {volume} {96}},\ \bibinfo
  {pages} {203110} (\bibinfo {year} {2010})}\BibitemShut {NoStop}%
\bibitem [{\citenamefont {Rist{\`e}}\ \emph {et~al.}(2012)\citenamefont
  {Rist{\`e}}, \citenamefont {van Leeuwen}, \citenamefont {Ku}, \citenamefont
  {Lehnert},\ and\ \citenamefont {DiCarlo}}]{riste2012initialization}%
  \BibitemOpen
  \bibfield  {author} {\bibinfo {author} {\bibfnamefont {D.}~\bibnamefont
  {Rist{\`e}}}, \bibinfo {author} {\bibfnamefont {J.~G.}\ \bibnamefont {van
  Leeuwen}}, \bibinfo {author} {\bibfnamefont {H.-S.}\ \bibnamefont {Ku}},
  \bibinfo {author} {\bibfnamefont {K.~W.}\ \bibnamefont {Lehnert}}, \ and\
  \bibinfo {author} {\bibfnamefont {L.}~\bibnamefont {DiCarlo}},\ }\href
  {https://arxiv.org/pdf/1204.2479.pdf} {\bibfield  {journal} {\bibinfo
  {journal} {Physical review letters}\ }\textbf {\bibinfo {volume} {109}},\
  \bibinfo {pages} {050507} (\bibinfo {year} {2012})}\BibitemShut {NoStop}%
\bibitem [{\citenamefont {Geerlings}\ \emph {et~al.}(2013)\citenamefont
  {Geerlings}, \citenamefont {Leghtas}, \citenamefont {Pop}, \citenamefont
  {Shankar}, \citenamefont {Frunzio}, \citenamefont {Schoelkopf}, \citenamefont
  {Mirrahimi},\ and\ \citenamefont {Devoret}}]{geerlings2013demonstrating}%
  \BibitemOpen
  \bibfield  {author} {\bibinfo {author} {\bibfnamefont {K.}~\bibnamefont
  {Geerlings}}, \bibinfo {author} {\bibfnamefont {Z.}~\bibnamefont {Leghtas}},
  \bibinfo {author} {\bibfnamefont {I.~M.}\ \bibnamefont {Pop}}, \bibinfo
  {author} {\bibfnamefont {S.}~\bibnamefont {Shankar}}, \bibinfo {author}
  {\bibfnamefont {L.}~\bibnamefont {Frunzio}}, \bibinfo {author} {\bibfnamefont
  {R.~J.}\ \bibnamefont {Schoelkopf}}, \bibinfo {author} {\bibfnamefont
  {M.}~\bibnamefont {Mirrahimi}}, \ and\ \bibinfo {author} {\bibfnamefont
  {M.~H.}\ \bibnamefont {Devoret}},\ }\href
  {https://journals.aps.org/prl/abstract/10.1103/PhysRevLett.110.120501}
  {\bibfield  {journal} {\bibinfo  {journal} {Physical review letters}\
  }\textbf {\bibinfo {volume} {110}},\ \bibinfo {pages} {120501} (\bibinfo
  {year} {2013})}\BibitemShut {NoStop}%
\bibitem [{\citenamefont {Govia}\ and\ \citenamefont
  {Wilhelm}(2015)}]{govia2015unitary}%
  \BibitemOpen
  \bibfield  {author} {\bibinfo {author} {\bibfnamefont {L.~C.}\ \bibnamefont
  {Govia}}\ and\ \bibinfo {author} {\bibfnamefont {F.~K.}\ \bibnamefont
  {Wilhelm}},\ }\href
  {https://journals.aps.org/prapplied/abstract/10.1103/PhysRevApplied.4.054001}
  {\bibfield  {journal} {\bibinfo  {journal} {Physical Review Applied}\
  }\textbf {\bibinfo {volume} {4}},\ \bibinfo {pages} {054001} (\bibinfo {year}
  {2015})}\BibitemShut {NoStop}%
\bibitem [{\citenamefont {Magnard}\ \emph {et~al.}(2018)\citenamefont
  {Magnard}, \citenamefont {Kurpiers}, \citenamefont {Royer}, \citenamefont
  {Walter}, \citenamefont {Besse}, \citenamefont {Gasparinetti}, \citenamefont
  {Pechal}, \citenamefont {Heinsoo}, \citenamefont {Storz}, \citenamefont
  {Blais} \emph {et~al.}}]{magnard2018fast}%
  \BibitemOpen
  \bibfield  {author} {\bibinfo {author} {\bibfnamefont {P.}~\bibnamefont
  {Magnard}}, \bibinfo {author} {\bibfnamefont {P.}~\bibnamefont {Kurpiers}},
  \bibinfo {author} {\bibfnamefont {B.}~\bibnamefont {Royer}}, \bibinfo
  {author} {\bibfnamefont {T.}~\bibnamefont {Walter}}, \bibinfo {author}
  {\bibfnamefont {J.-C.}\ \bibnamefont {Besse}}, \bibinfo {author}
  {\bibfnamefont {S.}~\bibnamefont {Gasparinetti}}, \bibinfo {author}
  {\bibfnamefont {M.}~\bibnamefont {Pechal}}, \bibinfo {author} {\bibfnamefont
  {J.}~\bibnamefont {Heinsoo}}, \bibinfo {author} {\bibfnamefont
  {S.}~\bibnamefont {Storz}}, \bibinfo {author} {\bibfnamefont
  {A.}~\bibnamefont {Blais}},  \emph {et~al.},\ }\href
  {https://arxiv.org/abs/1801.07689} {\bibfield  {journal} {\bibinfo  {journal}
  {Physical review letters}\ }\textbf {\bibinfo {volume} {121}},\ \bibinfo
  {pages} {060502} (\bibinfo {year} {2018})}\BibitemShut {NoStop}%
\bibitem [{\citenamefont {Huggins}\ \emph {et~al.}(2019)\citenamefont
  {Huggins}, \citenamefont {Patil}, \citenamefont {Mitchell}, \citenamefont
  {Whaley},\ and\ \citenamefont {Stoudenmire}}]{Huggins2019a}%
  \BibitemOpen
  \bibfield  {author} {\bibinfo {author} {\bibfnamefont {W.}~\bibnamefont
  {Huggins}}, \bibinfo {author} {\bibfnamefont {P.}~\bibnamefont {Patil}},
  \bibinfo {author} {\bibfnamefont {B.}~\bibnamefont {Mitchell}}, \bibinfo
  {author} {\bibfnamefont {K.~B.}\ \bibnamefont {Whaley}}, \ and\ \bibinfo
  {author} {\bibfnamefont {E.~M.}\ \bibnamefont {Stoudenmire}},\ }\href
  {\doibase 10.1088/2058-9565/aaea94} {\bibfield  {journal} {\bibinfo
  {journal} {Quantum Science and Technology}\ }\textbf {\bibinfo {volume}
  {4}},\ \bibinfo {pages} {024001} (\bibinfo {year} {2019})},\ \Eprint
  {http://arxiv.org/abs/1803.11537} {arXiv:1803.11537} \BibitemShut {NoStop}%
\bibitem [{\citenamefont {Liu}\ \emph {et~al.}(2019)\citenamefont {Liu},
  \citenamefont {Zhang}, \citenamefont {Wan},\ and\ \citenamefont
  {Wang}}]{Liu2019a}%
  \BibitemOpen
  \bibfield  {author} {\bibinfo {author} {\bibfnamefont {J.-G.}\ \bibnamefont
  {Liu}}, \bibinfo {author} {\bibfnamefont {Y.-H.}\ \bibnamefont {Zhang}},
  \bibinfo {author} {\bibfnamefont {Y.}~\bibnamefont {Wan}}, \ and\ \bibinfo
  {author} {\bibfnamefont {L.}~\bibnamefont {Wang}},\ }\href {\doibase
  10.1103/PhysRevResearch.1.023025} {\bibfield  {journal} {\bibinfo  {journal}
  {Physical Review Research}\ }\textbf {\bibinfo {volume} {1}},\ \bibinfo
  {pages} {023025} (\bibinfo {year} {2019})},\ \Eprint
  {http://arxiv.org/abs/1902.02663} {arXiv:1902.02663} \BibitemShut {NoStop}%
\bibitem [{\citenamefont {Foss-Feig}\ \emph {et~al.}(2021)\citenamefont
  {Foss-Feig}, \citenamefont {Hayes}, \citenamefont {Dreiling}, \citenamefont
  {Figgatt}, \citenamefont {Gaebler}, \citenamefont {Moses}, \citenamefont
  {Pino},\ and\ \citenamefont {Potter}}]{foss2021holographic}%
  \BibitemOpen
  \bibfield  {author} {\bibinfo {author} {\bibfnamefont {M.}~\bibnamefont
  {Foss-Feig}}, \bibinfo {author} {\bibfnamefont {D.}~\bibnamefont {Hayes}},
  \bibinfo {author} {\bibfnamefont {J.~M.}\ \bibnamefont {Dreiling}}, \bibinfo
  {author} {\bibfnamefont {C.}~\bibnamefont {Figgatt}}, \bibinfo {author}
  {\bibfnamefont {J.~P.}\ \bibnamefont {Gaebler}}, \bibinfo {author}
  {\bibfnamefont {S.~A.}\ \bibnamefont {Moses}}, \bibinfo {author}
  {\bibfnamefont {J.~M.}\ \bibnamefont {Pino}}, \ and\ \bibinfo {author}
  {\bibfnamefont {A.~C.}\ \bibnamefont {Potter}},\ }\href
  {https://arxiv.org/abs/2005.03023} {\bibfield  {journal} {\bibinfo  {journal}
  {Physical Review Research}\ }\textbf {\bibinfo {volume} {3}},\ \bibinfo
  {pages} {033002} (\bibinfo {year} {2021})}\BibitemShut {NoStop}%
\bibitem [{\citenamefont {Rattew}\ \emph {et~al.}(2020)\citenamefont {Rattew},
  \citenamefont {Sun}, \citenamefont {Minssen},\ and\ \citenamefont
  {Pistoia}}]{Rattew2020}%
  \BibitemOpen
  \bibfield  {author} {\bibinfo {author} {\bibfnamefont {A.~G.}\ \bibnamefont
  {Rattew}}, \bibinfo {author} {\bibfnamefont {Y.}~\bibnamefont {Sun}},
  \bibinfo {author} {\bibfnamefont {P.}~\bibnamefont {Minssen}}, \ and\
  \bibinfo {author} {\bibfnamefont {M.}~\bibnamefont {Pistoia}},\ }\href
  {http://arxiv.org/abs/2009.06601} {\  (\bibinfo {year} {2020})},\ \Eprint
  {http://arxiv.org/abs/2009.06601} {arXiv:2009.06601} \BibitemShut {NoStop}%
\bibitem [{\citenamefont {Butucea}\ \emph {et~al.}(2015)\citenamefont
  {Butucea}, \citenamefont {Guţă},\ and\ \citenamefont {Kypraios}}]{2015}%
  \BibitemOpen
  \bibfield  {author} {\bibinfo {author} {\bibfnamefont {C.}~\bibnamefont
  {Butucea}}, \bibinfo {author} {\bibfnamefont {M.}~\bibnamefont {Guţă}}, \
  and\ \bibinfo {author} {\bibfnamefont {T.}~\bibnamefont {Kypraios}},\ }\href
  {\doibase 10.1088/1367-2630/17/11/113050} {\bibfield  {journal} {\bibinfo
  {journal} {New Journal of Physics}\ }\textbf {\bibinfo {volume} {17}},\
  \bibinfo {pages} {113050} (\bibinfo {year} {2015})}\BibitemShut {NoStop}%
\bibitem [{\citenamefont {Childs}\ \emph {et~al.}(2018)\citenamefont {Childs},
  \citenamefont {Maslov}, \citenamefont {Nam}, \citenamefont {Ross},\ and\
  \citenamefont {Su}}]{childs2018toward}%
  \BibitemOpen
  \bibfield  {author} {\bibinfo {author} {\bibfnamefont {A.~M.}\ \bibnamefont
  {Childs}}, \bibinfo {author} {\bibfnamefont {D.}~\bibnamefont {Maslov}},
  \bibinfo {author} {\bibfnamefont {Y.}~\bibnamefont {Nam}}, \bibinfo {author}
  {\bibfnamefont {N.~J.}\ \bibnamefont {Ross}}, \ and\ \bibinfo {author}
  {\bibfnamefont {Y.}~\bibnamefont {Su}},\ }\href
  {https://www.pnas.org/content/115/38/9456.short} {\bibfield  {journal}
  {\bibinfo  {journal} {Proceedings of the National Academy of Sciences}\
  }\textbf {\bibinfo {volume} {115}},\ \bibinfo {pages} {9456} (\bibinfo {year}
  {2018})}\BibitemShut {NoStop}%
\bibitem [{\citenamefont {Kieferov{\'a}}\ and\ \citenamefont
  {Wiebe}(2017)}]{kieferova2017tomography}%
  \BibitemOpen
  \bibfield  {author} {\bibinfo {author} {\bibfnamefont {M.}~\bibnamefont
  {Kieferov{\'a}}}\ and\ \bibinfo {author} {\bibfnamefont {N.}~\bibnamefont
  {Wiebe}},\ }\href
  {https://journals.aps.org/pra/abstract/10.1103/PhysRevA.96.062327} {\bibfield
   {journal} {\bibinfo  {journal} {Physical Review A}\ }\textbf {\bibinfo
  {volume} {96}},\ \bibinfo {pages} {062327} (\bibinfo {year}
  {2017})}\BibitemShut {NoStop}%
\bibitem [{\citenamefont {Biamonte}\ \emph {et~al.}(2017)\citenamefont
  {Biamonte}, \citenamefont {Wittek}, \citenamefont {Pancotti}, \citenamefont
  {Rebentrost}, \citenamefont {Wiebe},\ and\ \citenamefont
  {Lloyd}}]{biamonte2017quantum}%
  \BibitemOpen
  \bibfield  {author} {\bibinfo {author} {\bibfnamefont {J.}~\bibnamefont
  {Biamonte}}, \bibinfo {author} {\bibfnamefont {P.}~\bibnamefont {Wittek}},
  \bibinfo {author} {\bibfnamefont {N.}~\bibnamefont {Pancotti}}, \bibinfo
  {author} {\bibfnamefont {P.}~\bibnamefont {Rebentrost}}, \bibinfo {author}
  {\bibfnamefont {N.}~\bibnamefont {Wiebe}}, \ and\ \bibinfo {author}
  {\bibfnamefont {S.}~\bibnamefont {Lloyd}},\ }\href
  {https://www.nature.com/articles/nature23474} {\bibfield  {journal} {\bibinfo
   {journal} {Nature}\ }\textbf {\bibinfo {volume} {549}},\ \bibinfo {pages}
  {195} (\bibinfo {year} {2017})}\BibitemShut {NoStop}%
\bibitem [{\citenamefont {Somma}\ \emph {et~al.}(2008)\citenamefont {Somma},
  \citenamefont {Boixo}, \citenamefont {Barnum},\ and\ \citenamefont
  {Knill}}]{somma2008quantum}%
  \BibitemOpen
  \bibfield  {author} {\bibinfo {author} {\bibfnamefont {R.~D.}\ \bibnamefont
  {Somma}}, \bibinfo {author} {\bibfnamefont {S.}~\bibnamefont {Boixo}},
  \bibinfo {author} {\bibfnamefont {H.}~\bibnamefont {Barnum}}, \ and\ \bibinfo
  {author} {\bibfnamefont {E.}~\bibnamefont {Knill}},\ }\href
  {https://journals.aps.org/prl/abstract/10.1103/PhysRevLett.101.130504}
  {\bibfield  {journal} {\bibinfo  {journal} {Physical review letters}\
  }\textbf {\bibinfo {volume} {101}},\ \bibinfo {pages} {130504} (\bibinfo
  {year} {2008})}\BibitemShut {NoStop}%
\bibitem [{\citenamefont {Horodecki}\ \emph {et~al.}(2009)\citenamefont
  {Horodecki}, \citenamefont {Horodecki}, \citenamefont {Horodecki},\ and\
  \citenamefont {Horodecki}}]{Horodecki2009a}%
  \BibitemOpen
  \bibfield  {author} {\bibinfo {author} {\bibfnamefont {R.}~\bibnamefont
  {Horodecki}}, \bibinfo {author} {\bibfnamefont {P.}~\bibnamefont
  {Horodecki}}, \bibinfo {author} {\bibfnamefont {M.}~\bibnamefont
  {Horodecki}}, \ and\ \bibinfo {author} {\bibfnamefont {K.}~\bibnamefont
  {Horodecki}},\ }\href {\doibase 10.1103/RevModPhys.81.865} {\bibfield
  {journal} {\bibinfo  {journal} {Reviews of Modern Physics}\ }\textbf
  {\bibinfo {volume} {81}},\ \bibinfo {pages} {865} (\bibinfo {year}
  {2009})}\BibitemShut {NoStop}%
\bibitem [{\citenamefont {Ekert}(1991)}]{Ekert1991}%
  \BibitemOpen
  \bibfield  {author} {\bibinfo {author} {\bibfnamefont {A.~K.}\ \bibnamefont
  {Ekert}},\ }\href {\doibase 10.1103/PhysRevLett.67.661} {\bibfield  {journal}
  {\bibinfo  {journal} {Physical Review Letters}\ }\textbf {\bibinfo {volume}
  {67}},\ \bibinfo {pages} {661} (\bibinfo {year} {1991})}\BibitemShut
  {NoStop}%
\bibitem [{\citenamefont {Bennett}\ \emph {et~al.}(1996)\citenamefont
  {Bennett}, \citenamefont {Bernstein}, \citenamefont {Popescu},\ and\
  \citenamefont {Schumacher}}]{Bennett1996b}%
  \BibitemOpen
  \bibfield  {author} {\bibinfo {author} {\bibfnamefont {C.~H.}\ \bibnamefont
  {Bennett}}, \bibinfo {author} {\bibfnamefont {H.~J.}\ \bibnamefont
  {Bernstein}}, \bibinfo {author} {\bibfnamefont {S.}~\bibnamefont {Popescu}},
  \ and\ \bibinfo {author} {\bibfnamefont {B.}~\bibnamefont {Schumacher}},\
  }\href {\doibase 10.1103/PhysRevA.53.2046} {\bibfield  {journal} {\bibinfo
  {journal} {Physical Review A}\ }\textbf {\bibinfo {volume} {53}},\ \bibinfo
  {pages} {2046} (\bibinfo {year} {1996})}\BibitemShut {NoStop}%
\bibitem [{\citenamefont {Brandao}\ and\ \citenamefont
  {Svore}(2017)}]{Brandao2017}%
  \BibitemOpen
  \bibfield  {author} {\bibinfo {author} {\bibfnamefont {F.~G.}\ \bibnamefont
  {Brandao}}\ and\ \bibinfo {author} {\bibfnamefont {K.~M.}\ \bibnamefont
  {Svore}},\ }in\ \href {\doibase 10.1109/FOCS.2017.45} {\emph {\bibinfo
  {booktitle} {2017 IEEE 58th Annual Symposium on Foundations of Computer
  Science (FOCS)}}}\ (\bibinfo  {publisher} {IEEE},\ \bibinfo {year} {2017})\
  pp.\ \bibinfo {pages} {415--426},\ \Eprint {http://arxiv.org/abs/1609.05537}
  {arXiv:1609.05537} \BibitemShut {NoStop}%
\bibitem [{\citenamefont {Dong}(2016)}]{dong2016gravity}%
  \BibitemOpen
  \bibfield  {author} {\bibinfo {author} {\bibfnamefont {X.}~\bibnamefont
  {Dong}},\ }\href@noop {} {\bibfield  {journal} {\bibinfo  {journal} {Nature
  communications}\ }\textbf {\bibinfo {volume} {7}},\ \bibinfo {pages} {1}
  (\bibinfo {year} {2016})}\BibitemShut {NoStop}%
\bibitem [{\citenamefont {Bekenstein}(2020)}]{bekenstein2020black}%
  \BibitemOpen
  \bibfield  {author} {\bibinfo {author} {\bibfnamefont {J.~D.}\ \bibnamefont
  {Bekenstein}},\ }in\ \href
  {http://dec1.sinp.msu.ru/~panov/Lib/Papers/GR/Bekenstein1973Entropy.pdf}
  {\emph {\bibinfo {booktitle} {JACOB BEKENSTEIN: The Conservative
  Revolutionary}}}\ (\bibinfo  {publisher} {World Scientific},\ \bibinfo {year}
  {2020})\ pp.\ \bibinfo {pages} {307--320}\BibitemShut {NoStop}%
\bibitem [{\citenamefont {Nielsen}\ and\ \citenamefont
  {Chuang}(2002)}]{nielsen2002quantum}%
  \BibitemOpen
  \bibfield  {author} {\bibinfo {author} {\bibfnamefont {M.~A.}\ \bibnamefont
  {Nielsen}}\ and\ \bibinfo {author} {\bibfnamefont {I.}~\bibnamefont
  {Chuang}},\ }\href
  {http://mmrc.amss.cas.cn/tlb/201702/W020170224608149940643.pdf} {\enquote
  {\bibinfo {title} {Quantum computation and quantum information},}\ }
  (\bibinfo {year} {2002})\BibitemShut {NoStop}%
\end{thebibliography}%
\bibliographystyle{apsrev4-1}
\onecolumngrid

\newpage
\begin{center}
{\textbf{\large Appendix }}
\end{center}

\renewcommand{\theequation}{S\arabic{equation}}
\renewcommand{\thealgorithm}{S\arabic{algorithm}}
\setcounter{equation}{0}
\setcounter{figure}{0}
\setcounter{table}{0}
\setcounter{section}{0}
\setcounter{proposition}{0}
\setcounter{algorithm}{0}

\section{Fourier series}\label{app:fourier}
\subsection{Truncated Taylor series for von Neumann entropy}\label{sec:series}
\renewcommand{\thelemma}{S\arabic{lemma}}
\setcounter{lemma}{0}
\begin{lemma}\label{le:series}
Suppose the minimal non-zero eigenvalue of state $\rho$ is at least $\Lambda$, then there exists an integer $K$ such that 
\begin{align}
    \left|S(\rho)-\sum_{k=1}^{K}\frac{1}{k}\tr(\rho(I-\rho)^k)\right|\leq\epsilon,
\end{align}
where $K\in\Theta(\log(1/\epsilon\Lambda)/\log(1/(1-\Lambda))$, and $\sum_{k=1}^{K}\frac{1}{k}\leq \log(K+1)+1$.
\end{lemma}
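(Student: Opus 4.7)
The plan is to start from the scalar Taylor expansion
\[
-\ln(x) \;=\; \sum_{k=1}^{\infty} \frac{(1-x)^k}{k}, \qquad x\in(0,2),
\]
and lift it to the operator level via the spectral decomposition $\rho=\sum_i \lambda_i \op{\psi_i}{\psi_i}$. Since every nonzero eigenvalue $\lambda_i$ lies in $[\Lambda,1]\subset(0,2)$ and the zero eigenvalues contribute nothing to $-\tr(\rho\ln\rho)$ (by the usual $0\ln 0=0$ convention), I obtain the exact identity
\[
S(\rho) \;=\; -\sum_i \lambda_i \ln \lambda_i \;=\; \sum_{k=1}^{\infty}\frac{1}{k}\sum_i \lambda_i(1-\lambda_i)^k \;=\; \sum_{k=1}^{\infty}\frac{1}{k}\tr\!\bigl(\rho (I-\rho)^k\bigr).
\]

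Next, I would control the tail left over after truncating at $K$. For each nonzero $\lambda_i$, $0\le 1-\lambda_i \le 1-\Lambda$, so $\sum_i \lambda_i (1-\lambda_i)^k \le (1-\Lambda)^k \sum_i \lambda_i \le (1-\Lambda)^k$. Summing the geometric series gives
\[
\left|S(\rho)-\sum_{k=1}^{K}\frac{1}{k}\tr\!\bigl(\rho(I-\rho)^k\bigr)\right|
\;\le\; \sum_{k=K+1}^{\infty}\frac{(1-\Lambda)^k}{k}
\;\le\; \frac{(1-\Lambda)^{K+1}}{\Lambda}.
\]
Demanding this be $\le\epsilon$ and taking logarithms yields the stated requirement $K\ge \log(1/(\epsilon\Lambda))/\log(1/(1-\Lambda))-1$, i.e.\ $K\in\Theta(\log(1/(\epsilon\Lambda))/\log(1/(1-\Lambda)))$.

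Finally, the weight bound $\sum_{k=1}^{K}1/k\le \log(K+1)+1$ is a standard harmonic-number estimate, obtained by comparing the sum with the integral $\int_1^{K+1} dx/x=\log(K+1)$ and adding back the $k=1$ term. The only subtlety worth noting is the role of the zero eigenvalues: they force a separate treatment since $\ln 0$ is singular, but they drop out of both $S(\rho)$ and each $\tr(\rho(I-\rho)^k)$ because of the $\rho$ prefactor, so the bound above (which only invokes $\lambda_i\in[\Lambda,1]$) is indeed valid. I don't expect any genuine obstacle; the heart of the argument is simply the geometric tail bound enabled by the spectral gap $\Lambda$ to $1$.
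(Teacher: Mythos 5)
Your proposal is correct and follows essentially the same route as the paper: spectral decomposition, geometric tail bound using $1-\lambda_i\le 1-\Lambda$, and the integral comparison for the harmonic sum. The only cosmetic difference is that you drop the $1/k$ factor in the tail and get $(1-\Lambda)^{K+1}/\Lambda$ where the paper keeps it to get the slightly tighter $(1-\Lambda)^{K+1}/(\Lambda(K+1))$; both yield the same $\Theta$ for $K$.
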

\begin{proof}
First, we assume $\rho$ has a spectral decomposition $\rho=\sum_{j}\lambda_j\op{\mathbf{e}_j}{\mathbf{e}_j}$. Then the truncation error is given by
\begin{align}
    {\rm error}&=\left|\sum_{k=K+1}^{\infty}\frac{1}{k}\tr\left(\rho(I-\rho)^k\right)\right|\\
    &=\sum_{k=K+1}^{\infty}\sum_{j}\frac{1}{k}\lambda_j(1-\lambda_j)^k\\
    &\leq\sum_{k=K+1}^{\infty}\frac{1}{k}(1-\Lambda)^k\\
    &\leq \frac{(1-\Lambda)^{K+1}}{\Lambda(K+1)}.
\end{align}
Here we have used facts that $\sum_j\lambda_j=1$ and $\lambda_j\geq \Lambda$.

To suppress the truncation error to $\epsilon$, we choose $K\in O(\log(1/\epsilon\Lambda)/\log(1/(1-\Lambda)))$. This can be easily verified. As we aim to find $K$ such that $(1-\Lambda)^{K+1}/\Lambda(K+1)\leq\epsilon$, we take logarithm on both sides. Then we have
\begin{align}
    (K+1)\log(1-\Lambda)-\log(K+1)\leq \log(\epsilon\Lambda).
\end{align}
Multiply $-1$ on both sides.
\begin{align}
    (K+1)\log(1/(1-\Lambda))+\log(K+1)\geq \log(1/\epsilon\Lambda).
\end{align}
Clearly, $\log(K+1)\geq0$. We then choose $K$ such that $(K+1)\log(1/(1-\Lambda))\log(4/\epsilon\Lambda)$, immediately leading to the desired result.

Finally, we derive an upper bound on the Harmonic number $\sum_{k=1}^{K}\frac{1}{k}$. Notice that 
\begin{align}
    \frac{1}{k}=\frac{1}{k}\cdot1\leq\int_{x=k}^{k+1}\frac{1}{x}{\rm d}x+1\cdot(\frac{1}{k}-\frac{1}{k+1}).
\end{align}
Here, $\frac{1}{k}\cdot 1$ denotes the area of a rectangle. Accordingly, the R.H.S denotes the area of two regions, one of which is determined by $1/x$, varying from $1/k$ to $1/k+1$, and the other is a rectangle with height $1/k-1/(k+1)$ and width 1. Thus,
\begin{align}
    \sum_{k=1}^{K}\frac{1}{k}&\leq \sum_{k=1}^{K}\left[\int_{x=k}^{k+1}\frac{1}{x}{\rm d}x+1\cdot(\frac{1}{k}-\frac{1}{k+1})\right]\\
    &=\int_{x=1}^{K+1}\frac{1}{x}{\rm d}x+\sum_{k=1}^{K}(\frac{1}{k}-\frac{1}{k+1})\\
    &\leq \ln(K+1)+1.
\end{align}
Finally, the proof is completed.
\end{proof}

\subsection{Proof for Lemma~\ref{le:fourier}}\label{sec:fourier}
\renewcommand{\thelemma}{\arabic{lemma}}
\setcounter{lemma}{0}
\begin{lemma}
For arbitrary quantum state $\rho\in\mathbb{C}^{2^n\times 2^n}$, let $\Lambda$ be the lower bound on all non-zero eigenvalues of $\rho$. There exists a Fourier series $S(\rho)_{est}$ such that $\left|S(\rho)-S(\rho)_{est}\right| \leq\epsilon$ for any $\epsilon\in(0,1)$, where
\begin{align}
    S(\rho)_{est} = \sum_{l=0}^{\lfloor L\rfloor}\sum_{s=D_l}^{U_l}\sum_{k=1}^{K}\frac{b_{l}^{(k)}\binom{l}{s}}{k2^{l}}\tr(\rho\cos(\rho\cdot t(s,l))).\label{app:eq:entropy estimation}
\end{align}
Particularly, coefficient $U_{l}=\min\{l,\lceil \frac{l}{2}\rceil+M_l\}$ and $D_l=\max\{0,\lfloor \frac{l}{2}\rfloor-M_l\}$, and $\binom{l}{s}$ denotes the binomial coefficient. Meanwhile, coefficients $t(s,l)=(2s-l)\pi/2$, and coefficients $K,L,M_l$ are given by 
\begin{align}
K\in\Theta\left(\frac{\log(\epsilon\Lambda)}{\log(1-\Lambda}\right), \quad L=\ln\left(\frac{4\sum_{k=1}^{K}1/k}{\epsilon}\right)\frac{1}{\Lambda^2}, \quad M_l=\left\lceil\sqrt{\ln\left(\frac{4\sum_{k=1}^{K}1/k}{\epsilon}\right)\frac{l}{2}}\right\rceil.
\end{align}
For any $k=1,\ldots,K,l=0,\ldots,\lfloor L\rfloor$, the coefficients $b_{l}^{(k)}$ are positive and defined inductively. Explicitly, 
\begin{align}
      b_{l}^{(1)}=0,\quad\text{if $l$ is even},\quad b_{l}^{(1)}=\frac{2\binom{l-1}{(l-1)/2}}{\pi2^{l-1}l},\quad\text{if $l$ is odd},\quad b_{l}^{(k+1)}=\sum_{l'=0}^{l}b_{l'}^{(k)}b_{l-l'}^{(1)}, \quad \forall k\geq1. \label{app:eq:blk}
\end{align}
Moreover, overall weights of $S(\rho)_{est}$ is bounded as follows.
\begin{align}
  \sum_{l=0}^{\lfloor L\rfloor}\sum_{s=D_{l}}^{U_l}\sum_{k=1}^{K}\frac{b_{l}^{(k)}\binom{l}{s}}{2^lk}\in O\left(\log(K)\right).
\end{align}
\end{lemma}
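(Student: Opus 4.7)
The plan is to convert the truncated Taylor series from Lemma S1 into a Fourier series through three stages: an arcsine substitution that brings $1-\rho$ inside a cosine, a binomial expansion that reduces powers of cosines to pure cosines, and a two-level truncation of the resulting triple sum. First, Lemma S1 with the stated choice $K \in \Theta(\log(\epsilon\Lambda)/\log(1-\Lambda))$ ensures $|S(\rho) - \sum_{k=1}^K \tfrac{1}{k}\tr(\rho(I-\rho)^k)| \leq \epsilon/2$, leaving an additional $\epsilon/2$ budget for the Fourier approximation.

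The substitution step uses the identity $1 - x = \tfrac{2}{\pi}\arcsin(\cos(\pi x/2))$, valid on $x \in [0,1]$, together with the Taylor expansion $\arcsin(y) = \sum_{n\geq 0} \binom{2n}{n} y^{2n+1}/(4^n(2n+1))$. Matching the coefficient of $\cos^l(\rho\pi/2)$ produces $I - \rho = \sum_{l\geq 0} b_l^{(1)} \cos^l(\rho\pi/2)$ with exactly the $b_l^{(1)}$ of the lemma (zero for even $l$, positive for odd $l$). Applying the Cauchy product then yields $(I-\rho)^k = \sum_{l\geq 0} b_l^{(k)} \cos^l(\rho\pi/2)$ with the stated convolution recurrence. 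Expanding each power via the standard identity $\cos^l\theta = 2^{-l}\sum_{s=0}^l \binom{l}{s}\cos((2s-l)\theta)$ (which follows from $\cos\theta = (e^{i\theta}+e^{-i\theta})/2$ after pairing $s$ with $l-s$) turns the truncated Taylor series into the triple sum in the target form $\sum_{k,l,s} \tfrac{b_l^{(k)}\binom{l}{s}}{2^l k}\tr(\rho\cos(\rho\, t(s,l)))$.

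Two tails then need to be controlled so that their combined contribution stays within $\epsilon/2$. For the outer $l$-truncation, I would use $|\cos(\lambda\pi/2)| \leq 1 - c\Lambda^2$ for every nonzero eigenvalue $\lambda \geq \Lambda$ (via $\cos x \leq 1 - 2x^2/\pi^2$ on $|x|\leq\pi/2$) to obtain $|\tr(\rho\cos^l(\rho\pi/2))| \leq (1-c\Lambda^2)^l$, so the residual sum over $l > L$ decays geometrically and the choice $L = \Lambda^{-2}\ln(4\sum_k(1/k)/\epsilon)$ suppresses it below $\epsilon/4$. For the inner $s$-truncation, Hoeffding's inequality for the symmetric binomial gives $\sum_{|s-l/2|>M_l}\binom{l}{s}/2^l \leq 2\exp(-2M_l^2/l)$, and the stated $M_l = \lceil\sqrt{\ln(4\sum_k(1/k)/\epsilon)\cdot l/2}\,\rceil$ makes this at most $\epsilon/(2\sum_k(1/k))$ per term, whose total contribution is again at most $\epsilon/4$.

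The main obstacle, and the key payoff, is the total-weight bound, which drives both tail analyses and the final $O(\log K)$ estimate. Evaluating $1 - x = \tfrac{2}{\pi}\arcsin(\cos(\pi x/2))$ at $x = 0$ gives $\sum_{l\geq 0} b_l^{(1)} = 1$, and the Cauchy product propagates this to $\sum_{l\geq 0} b_l^{(k)} = 1$ for every $k\geq 1$ (in particular $0 \leq b_l^{(k)} \leq 1$). Combined with $\sum_{s=0}^l \binom{l}{s}/2^l = 1$ and the harmonic bound $\sum_{k=1}^K 1/k \leq \ln(K+1)+1$ from Lemma S1, this yields $\sum_l \sum_s \sum_{k=1}^K b_l^{(k)}\binom{l}{s}/(2^l k) \leq \sum_{k=1}^K 1/k \in O(\log K)$, and the truncations only shrink this sum. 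Assembling the Taylor-truncation error, the $l$-tail error, and the $s$-tail error then completes the proof of the lemma.
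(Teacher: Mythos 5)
Your proposal follows essentially the same route as the paper's proof: the truncated Taylor series of the logarithm, the $\arcsin(\cos(\cdot))$ substitution with the Cauchy-product recurrence for $b_l^{(k)}$, the binomial expansion of $\cos^l$, a Chernoff/Hoeffding truncation of the inner $s$-sum, and the weight bound via $\sum_{l} b_l^{(k)}=1$ together with the harmonic-number estimate. The only substantive difference is your $l$-tail estimate $\cos(\lambda\pi/2)\le 1-2(\lambda\pi/2)^2/\pi^2=1-\lambda^2/2$, which gives the constant $c=1/2$ and would require $L$ to be twice the stated value to push the tail below $\epsilon/4$; the paper instead uses $\sin((1-\delta)\pi/2)\le 1-\delta^2$, i.e.\ $c=1$, which matches the stated $L$ exactly.
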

\begin{proof}
To prove the result in the Lemma, we only focus on the Taylor series of the natural logarithm. By Lemma~\ref{le:series}, for $K\in\Theta(\log(1/\epsilon)/\log(1/(1-\Lambda))$,
\begin{align}
\left|-\ln(x)-\sum_{k=1}^{K}\frac{1}{k}(1-x)^k\right|\leq\frac{\epsilon}{4}.\label{eq:ln_truncation}
\end{align}
We use the method in Lemma 37 of \cite{van2017quantum} to construct the Fourier series. The process proceeds as three major steps: 1. Transform the Taylor series into a linear combination of cosines. 2. Truncate the series to obtain a high-precision approximation. 3. Substitute all cosines with complex exponents. 

\textbf{Step 1.} 
As all eigenvalues $\lambda$ of $\rho$ in $[0,1]$, the identity below holds.
\begin{align}
1-\lambda=\frac{\arcsin(\sin((1-\lambda)\pi/2))}{\pi/2}=\frac{\arcsin(\cos(\lambda\pi/2))}{\pi/2}.
\end{align}
Substitute $\lambda$ with $x$ in the L.H.S of Eq.~\eqref{eq:ln_truncation} and $1-\lambda$ with $\arcsin(\cos(\lambda\pi/2))/\pi/2$.
\begin{align}
    \left|-\ln(\lambda)-\sum_{k=1}^{K}\frac{1}{k}\left(\frac{\arcsin(\cos(\lambda\pi/2))}{\pi/2}\right)^k\right|\leq\frac{\epsilon}{4}.
\end{align}
Recall the Taylor series of $\arcsin(y)$, for all $y\in[-1,1]$,
\begin{align}
    \arcsin(y)=\sum_{l=0}^{\infty}\frac{(2l)!}{4^l(l!)^2(2l+1)}y^{2l+1}=y+\frac{y^3}{6}+\frac{3y^5}{40}+\ldots
\end{align}
Using the Taylor series of $\arcsin(y)$, we deduce the Taylor series of $(\arcsin(y)/\pi/2)^k$, given by
\begin{align}
    \left(\frac{\arcsin(y)}{\pi/2}\right)^k=\sum_{l=0}^{\infty}b_{l}^{(k)}y^l.
\end{align}
Particularly, the coefficients $b_l^{(k)}$ could be inductively computed. Specifically speaking,
\begin{align}
    \left(\frac{\arcsin(y)}{\pi/2}\right)^{k+1}=\left(\frac{\arcsin(y)}{\pi/2}\right)^{k}\left(\sum_{l=0}^{\infty}b_{l}^{(1)}y^l\right).
\end{align}
For $k=1$, all $b_{l}^{(1)}$ correspond to the coefficients of the Taylor series of $\arcsin$ times $\frac{2}{\pi}$. That is,
\begin{align}
    b_{2l}^{(1)}=0,\quad b_{2l+1}^{1}=\frac{2}{\pi}\frac{(2l)!}{4^l(l!)^2(2l+1)}=\frac{2/\pi}{4^l(2l+1)}\binom{2l}{l}.\label{eq:arcsin}
\end{align}
For general $k$, suppose all coefficients $b_{l}^{(k)}$ are obtained, then the coefficients $b_{l}^{(k+1)}$ are computed as follows:
\begin{align}
    b_{l}^{(k+1)}&=\sum_{l'=0}^{l}b_{l'}^{(k)}b_{l-l'}^{(1)}.\label{eq:induction}
\end{align}
So, we could inductively compute all coefficients $b_{l}^{(k)}$ using Eqs.~\eqref{eq:arcsin}-\eqref{eq:induction}.

Consequently, we have deducted a series to approximate $-\ln(\lambda)$.
\begin{align}
    \left|-\ln(\lambda)-\sum_{k=1}^{K}\frac{1}{k}\sum_{l=0}^{\infty}b_{l}^{(k)}\cos^l(\lambda\pi/2)\right|\leq\frac{\epsilon}{4}. \label{eq:derived_series}
\end{align}

\textbf{Step 2.} Now, we truncate the series in Eq.~\eqref{eq:derived_series} to obtain a high-precision approximation series. 

First, we truncate the infinity series at order $L=\ln\left(\frac{4\sum_{k=1}^{K}1/k}{\epsilon}\right)\frac{1}{\Lambda^2}$. Next, we show the truncation error.
\begin{align}
    \left|\sum_{l=\lceil L\rceil}^{\infty}b_{l}^{(k)}\cos^l(\lambda\pi/2)\right|&\leq \sum_{l=\lceil L\rceil}^{\infty} b_{l}^{(k)}\left|\cos^l(\lambda\pi/2)\right|\\
    &\leq \cos^{L}(\lambda\pi/2)\sum_{l=\lceil L\rceil}^{\infty} b_{l}^{(k)}\\
    &\leq \cos^{L}(\lambda\pi/2)\\
    &= \sin^{L}((1-\lambda)\pi/2)\\
    &\leq (1-\lambda^2)^L\\
    &\leq e^{-\lambda^2L}\\
    &\leq e^{-\Lambda^2L}\\
    &\leq \frac{\epsilon}{4\sum_{k=1}^{K}1/k}.
\end{align}
Here, we have used facts that $b_{l}^{(k)}\geq 0$, $\sum_{l=1}^{\infty}b_l^{(k)}=1$, $\sin((1-\delta)\pi/2)\leq 1-\delta^2$ for all $\delta\in(0,1)$, $(1-\delta)\leq e^{-\delta}$, and $\lambda\in[\Lambda,1]$.

As shown above, the truncated series could act well as an approximation.
\begin{align}
    \left|-\ln(\lambda)-\sum_{k=1}^{K}\frac{1}{k}\sum_{l=0}^{\lfloor L\rfloor}b_{l}^{(k)}\cos^l(\lambda\pi/2)\right|\leq \frac{\epsilon}{2}. \label{eq:cos_series}
\end{align}

\textbf{Step 3.}
Now, we use the equality $\cos(z)=\frac{e^{iz}+e^{-iz}}{2}$ to rewrite the series in Eq.~\eqref{eq:cos_series}. 
\begin{align}
    \sum_{k=1}^{K}\frac{1}{k}\sum_{l=0}^{\lfloor L\rfloor}b_{l}^{(k)}\left[\frac{e^{i\lambda\pi/2}+e^{-i\lambda\pi/2}}{2}\right]^l=\sum_{k=1}^{K}\frac{1}{k}\sum_{l=0}^{\lfloor L\rfloor}b_{l}^{(k)}2^{-l}\sum_{s=0}^{l}\binom{l}{s}e^{i(2s-l)\lambda\pi/2}.
\end{align}
Particularly, this series could be further truncated by using the property of binomial distribution. By Chernoff's inequality, we have
\begin{align}
    \sum_{s=\lceil l/2\rceil+M_l}^{l}2^{-l}\binom{l}{s}\leq e^{-\frac{2M_l^2}{l}}.
\end{align}
Setting $M_l=\left\lceil\sqrt{\ln\left(\frac{4\sum_{k=1}^{K}1/k}{\epsilon}\right)\frac{l}{2}}\right\rceil$, suppose that $M_{l}\leq \lfloor l/2\rfloor $, then we could find that
\begin{align}
    \sum_{s=0}^{\lfloor l/2\rfloor-M_l}2^{-l}\binom{l}{s}=\sum_{s=\lceil l/2\rceil+M_l}^{l}2^{-l}\binom{l}{s}\leq e^{-\frac{2M_l^2}{l}} \leq \frac{\epsilon}{4\sum_{k=1}^{K}1/k}.
\end{align}

Eventually, we have obtained the desired series, shown below, up to precision $\epsilon$.
\begin{align}
    \sum_{l=0}^{\lfloor L\rfloor}\sum_{s=\lfloor l/2\rfloor-M_l}^{\lceil l/2\rceil+M_l}\left(\sum_{k=1}^{K}\frac{b_{l}^{(k)}}{k}\right)2^{-l}\binom{l}{s}e^{i(2s-l)\lambda\pi/2}.
\end{align}
The claimed result follows immediately from the formula of entropy.

In addition, the sum of all coefficients is bounded. This is because
\begin{align}
    \sum_{l=0}^{\lfloor L\rfloor}\sum_{s=\lfloor l/2\rfloor-M_l}^{\lceil l/2\rceil+M_l}\left(\sum_{k=1}^{K}\frac{b_{l}^{(k)}}{k}\right)2^{-l}\binom{l}{s}&\leq \sum_{l=0}^{\lfloor L\rfloor}\left(\sum_{k=1}^{K}\frac{b_{l}^{(k)}}{k}\right)\\
    &\leq \sum_{k=1}^{K}\frac{1}{k}\\
    &\leq \ln(K+1)+1.
\end{align}
Here we have used facts that $2^{-l}\sum_{s=0}^{l}\binom{l}{s}=1$, $\sum_{l=0}^{\infty}b_{l}^{(k)}=\left(\frac{\arcsin(1)}{\pi/2}\right)^k=1$, and the result in Lemma~\ref{le:series}.
\end{proof}

\subsection{Proof for Proposition~\ref{le:bound_binom}}\label{app:proposition1}
\begin{proposition}
For any constant $\beta\in(-1,0)\cup(0,+\infty)$, there exists a bound on the generalized binomial coefficient $\binom{\beta}{k}$. 
\begin{enumerate}
    \item For $\beta\in(-1,0)$ and any integer $k\geq 1$, $|\binom{\beta}{k}|\leq |\beta|$. 
    \item For $\beta\in(0,1]$ and any $k\geq 2$, $\left|\binom{\beta}{k}\right|\leq \left[1+\frac{\beta\ln\frac{(k+1)}{k^2}+\beta-1}{k}\right]^k.$
Particularly, $|\binom{\beta}{k}|\leq \frac{1}{e}$, if $k\in\Omega(1)$. 
\item For $\beta\in(1,+\infty)$ and $k\geq \beta+1$, $\left|\binom{\beta}{k}\right|\leq \left[1+\frac{\beta\ln\frac{(\beta+1)^2}{k}+2}{k}\right]^k.$
Particularly, $|\binom{\beta}{k}|\leq 1$, if $k\in \Omega((\beta+1)^2)$.
\end{enumerate}
Moreover, for an integer $K$, the sum $\sum_{k=1}^{K}\left|\binom{\beta}{k}\right|$ is bounded. 
\begin{align}
    \sum_{k=1}^{K}\left|\binom{\beta}{k}\right|\leq\left\{
    \begin{array}{ll}
     O\left( K+e^2(\beta+1)^{2\beta}\cdot\left[\ln(\lceil e^{\frac{2}{\beta}}(\beta+1)^2\rceil+1)+1\right]-e^{\frac{2}{\beta}}(\beta+1)^2 \right),  & \text{if $\beta\in(1,+\infty)$}, \\
        O(K), & \text{if $\beta\in(0,1]$},\\
     O\left(   |\beta| K \right), & \text{if $\beta\in(-1,0)$}.
    \end{array}\right.
\end{align}
\end{proposition}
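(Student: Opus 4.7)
The plan is to express the generalized binomial coefficient as a product of $k$ nonnegative factors, $|\binom{\beta}{k}| = \prod_{j=1}^{k} \frac{|\beta - j + 1|}{j}$, and then bound this product by case analysis on $\beta$, using either a direct telescoping argument (for negative $\beta$) or the AM--GM inequality (for positive $\beta$). The cumulative bounds then follow by splitting the sum $\sum_{k=1}^{K}|\binom{\beta}{k}|$ at an appropriate threshold and combining the pointwise estimates with standard harmonic-series tail bounds.

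For $\beta \in (-1, 0)$ I would factor out $|\beta|$ from the $j=1$ term, which leaves the tail product $\prod_{j=2}^{k} (1 - (1+\beta)/j)$. Because $1 + \beta \in (0, 1)$, every factor is strictly less than $1$, so the tail product is at most $1$, yielding immediately $|\binom{\beta}{k}| \leq |\beta|$ and hence $\sum_{k=1}^{K}|\binom{\beta}{k}| \leq |\beta| K$.

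For $\beta \in (0, 1]$ with $k \geq 2$, I would apply AM--GM to the $k$ positive factors. Computing $\sum_{j=2}^{k}(j-1-\beta)/j = (k-1) - (1+\beta)(H_k - 1)$ and adding the $j=1$ contribution $\beta$ yields $|\binom{\beta}{k}|^{1/k} \leq 1 + (2\beta - (1+\beta)H_k)/k$. Substituting a suitable lower bound for the harmonic number $H_k$ (starting from $H_k \geq \ln(k+1)$ and rearranging the logarithms to manufacture a $\ln((k+1)/k^2)$ term) puts the estimate in the stated algebraic form. The simpler consequence $|\binom{\beta}{k}| \leq e^{\beta-1}((k+1)/k^2)^{\beta} = O(k^{-\beta})$ drops below $1/e$ once $k$ exceeds a fixed constant, and $\sum_{k=1}^{K}|\binom{\beta}{k}| = O(K)$ is immediate.

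For $\beta \in (1, +\infty)$ with $k \geq \beta + 1$, the same AM--GM route applies once one is careful about the finitely many indices $j \leq \beta + 1$ for which $|\beta - j + 1|/j$ can exceed $1$. The hypothesis $k \geq \beta + 1$ ensures that the tail factors with $j > \beta + 1$ dominate, and an analogous computation produces the bound $(1 + (\beta \ln((\beta+1)^2/k) + 2)/k)^{k} \leq e^{2}(\beta+1)^{2\beta} k^{-\beta}$, which falls below $1$ exactly when $k \geq e^{2/\beta}(\beta+1)^2$. For the cumulative bound I would split the sum at $k_0 = \lceil e^{2/\beta}(\beta+1)^2 \rceil$: the head $k \leq k_0$ is bounded by $e^{2}(\beta+1)^{2\beta} \sum_{k=1}^{k_0} k^{-\beta} \leq e^{2}(\beta+1)^{2\beta}(\ln(k_0+1)+1)$ using $\sum_{k=1}^{k_0} k^{-\beta} \leq H_{k_0}$, while the tail $k > k_0$ contributes at most $K - k_0$ terms of size at most $1$, reproducing the stated asymptotic. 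The main technical obstacle throughout is tracking the constants in the AM--GM step carefully enough to recover the precise algebraic closed form, and ensuring the split point $k_0$ aligns exactly with the transition where the upper bound crosses $1$; the rest is a routine combination of harmonic-sum estimates.
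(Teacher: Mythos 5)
Your proposal is correct and follows essentially the same route as the paper: writing $\bigl|\binom{\beta}{k}\bigr|$ as a product of $k$ factors, bounding it termwise for $\beta\in(-1,0)$, applying AM--GM together with harmonic-number estimates for $\beta>0$, and splitting the sum at the threshold $\lceil e^{2/\beta}(\beta+1)^2\rceil$ in the $\beta>1$ case. The only deviations are cosmetic (e.g., how the $k$ factors are grouped before AM--GM), and the "routine but fiddly" rearrangements you defer are exactly the manipulations the paper carries out.
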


\begin{proof}
Case 1: $\beta\in(1,+\infty)$. 

We consider an integer $k$ such that $k-1\geq\beta$. We use $[\beta]$ to denote the maximal integer that is less than $\beta$.
\begin{align}
    \left|\binom{\beta}{k}\right|&=\frac{\beta|\beta-1|\cdot|\beta-2|\cdot\ldots\cdot|\beta-k+1|}{k!} \\
    &=\frac{\beta}{k}\cdot\frac{|\beta-1|}{1}\cdot\frac{|\beta-2|}{2}\cdot\ldots\frac{|\beta-k+1|}{k-1}\cdot\\
&\leq\left[\frac{\frac{\beta}{k}+|\frac{\beta}{1}-1|+\ldots+|\frac{\beta}{[\beta]}-1|+\ldots+|\frac{\beta}{k-1}-1|}{k}\right]^k\\
&=\left[\frac{\frac{\beta}{k}+(\frac{\beta}{1}-1)+\ldots+(\frac{\beta}{[\beta]}-1)+\ldots+(1-\frac{\beta}{k-1})}{k}\right]^k\\
&=\left[\frac{\beta(1+\frac{1}{2}+\ldots+\frac{1}{[\beta]}+\frac{1}{k})-[\beta]+(k-1-[\beta])-\beta(\frac{1}{[\beta]+1}+\ldots+\frac{1}{k-1})}{k}\right]^k\\
&=\left[\frac{2\beta(1+\frac{1}{2}+\ldots+\frac{1}{[\beta]})+\frac{\beta}{k}-[\beta]+(k-1-[\beta])-\beta(1+\frac{1}{2}+\ldots+\frac{1}{k-1})}{k}\right]^k.
\end{align}
In the first inequality, we use the inequality of arithmetic and geometric means. 

Notice that sum $\sum_{l=1}^{k}\frac{1}{k}$ is a Harmonic number. By the properties of Harmonic numbers, we have 
\begin{align}
&1+\frac{1}{2}+\ldots+\frac{1}{[\beta]}\leq \ln([\beta]+1)+1\leq\ln(\beta+1)+1,\\
&1+\frac{1}{2}+\ldots+\frac{1}{k-1} \geq \ln(k).
\end{align}
With these properties, we derive a new bound on the generalized binomial coefficient.
\begin{align}
    \left|\binom{\beta}{k}\right|&\leq \left[\frac{2\beta(\ln(\beta+1)+1)+\frac{\beta}{k}-2[\beta]+k-1-\beta \ln(k)}{k}\right]^k\\
    &\leq\left[1+\frac{2\beta(\ln(\beta+1)+1)-2(\beta-1)-\beta\ln(k)}{k}\right]^k\\
    &=\left[1+\frac{\beta\ln\frac{(\beta+1)^2}{k}+2}{k}\right]^k.
\end{align}

Let $k\geq e^{\frac{2}{\beta}}(\beta+1)^2$, then we have $\beta\ln\frac{(\beta+1)^2}{k}+2\leq0$. Then $|\binom{\beta}{k}|$ will decrease exponentially fast as $k$ increases. As a result, it suffices to bound the generalized binomial coefficient using $k=\lceil e^{\frac{2}{\beta}}(\beta+1)^2\rceil$, i.e., $|\binom{\beta}{k}|\leq 1$.

For an integer $K\geq \lceil e^{\frac{2}{\beta}}(\beta+1)^2\rceil$, we can derive a bound on the sum $\sum_{k=1}^{K}|\binom{\beta}{k}|$. Note that for $k\geq  \lceil e^{\frac{2}{\beta}}(\beta+1)^2\rceil$, the generalized coefficient $\binom{\beta}{k}$ has a bound $1$.
\begin{align}
    \sum_{k=1}^{K}\left|\binom{\beta}{k}\right|& \leq \sum_{k=1}^{\lceil e^{\frac{2}{\beta}}(\beta+1)^2\rceil}\left[1+\frac{\beta\ln\frac{(\beta+1)^2}{k}+2}{k}\right]^k+K-\lceil e^{\frac{2}{\beta}}(\beta+1)^2\rceil\\
    &\leq \sum_{k=1}^{\lceil e^{\frac{2}{\beta}}(\beta+1)^2\rceil} \exp\left(\beta\ln\frac{(\beta+1)^2}{k}+2\right)+K-\lceil e^{\frac{2}{\beta}}(\beta+1)^2\rceil\\
    &=\sum_{k=1}^{\lceil e^{\frac{2}{\beta}}(\beta+1)^2\rceil}e^2\cdot \left[\frac{(\beta+1)^2}{k}\right]^\beta+K-\lceil e^{\frac{2}{\beta}}(\beta+1)^2\rceil.
\end{align}
In the second inequality, we use the inequality $\exp(x)\geq 1+x$ for all $x\geq0$.

Next, we give a bound.
\begin{align}
    \sum_{k=1}^{\lceil e^{\frac{2}{\beta}}(\beta+1)^2\rceil}e^2\cdot \left[\frac{(\beta+1)^2}{k}\right]^\beta&=e^2\cdot (\beta+1)^{2\beta}\sum_{k=1}^{\lceil e^{\frac{2}{\beta}}(\beta+1)^2\rceil}k^{-\beta}\\
    &<e^2\cdot (\beta+1)^{2\beta}\sum_{k=1}^{\lceil e^{\frac{2}{\beta}}(\beta+1)^2\rceil}\frac{1}{k}\\
    &\leq e^2\cdot (\beta+1)^{2\beta}\left[\ln(\lceil e^{\frac{2}{\beta}}(\beta+1)^2\rceil+1)+1\right].
\end{align}
In the first inequality, we use the fact that $k^{-\beta}<k^{-1}$ as $\beta>1$. 

Consequently, we readily obtain a bound for the sum.
\begin{align}
    \sum_{k=1}^{K}\left|\binom{\beta}{k}\right|\leq e^2\cdot (\beta+1)^{2\beta}\cdot\left[\ln(\lceil e^{\frac{2}{\beta}}(\beta+1)^2\rceil+1)+1\right]+K-e^{\frac{2}{\beta}}(\beta+1)^2.
\end{align}

Case 2: $\beta\in(0,1]$. 

Recall the form of the generalized binomial coefficient. 
\begin{align}
    \left|\binom{\beta}{k}\right|&=\frac{\beta(1-\beta)\ldots(k-1-\beta)}{k!}\\
    &=\frac{\beta}{k}\cdot(1-\beta)\ldots(1-\frac{\beta}{k-1})\\
    &\leq \left[\frac{\frac{\beta}{k}+(1-\beta)+\ldots+(1-\frac{\beta}{k-1})}{k}\right]^k\\
    &=\left[\frac{\frac{\beta}{k}+(k-1)-\beta(1+\frac{1}{2}+\ldots+\frac{1}{k-1})}{k}\right]^k\\
    &\leq \left[\frac{\frac{\beta}{k}+(k-1)-\beta\ln(k)}{k}\right]^k\\
    &=\left[1+\frac{\frac{\beta}{k}-1-\beta\ln(k)}{k}\right]^k\\
    &\leq \left[1+\frac{\beta[\ln(k+1)+1-\ln(k)]-\beta\ln(k)-1}{k}\right]^k\\
    &=\left[1+\frac{\beta\ln\frac{(k+1)}{k^2}+\beta-1}{k}\right]^k.
\end{align}
The first inequality is due to the inequality of arithmetic and geometric means. The second and third are the results of applying the properties of the Harmonic series.

Notice that for any integer $k\geq 4$, the numerator $\beta\ln\frac{(k+1)}{k^2}+\beta-1<\beta-1<-1$. Hence, we can derive a bound by setting $k\geq 4$.
\begin{align}
    \left|\binom{\beta}{k}\right|\leq \left[1+\frac{\beta\ln\frac{(k+1)}{k^2}+\beta-1}{k}\right]^k\leq \frac{1}{e}.
\end{align}
Immediately, the bound on the sum is given as $O(K)$.

Case 3: $\beta\in(-1,0)$.

For any integer $k$, the bound on the generalized coefficient is given as follows.
\begin{align}
    \left|\binom{\beta}{k}\right|&=\frac{(-\beta)(1-\beta)\ldots(k-1-\beta)}{k!}\\
    &\leq \frac{(-\beta)(2)\ldots(k)}{k!}\\
    &=-\beta.
\end{align}
Again, the bound on the sum is $O(|\beta|K)$.
\end{proof}

\subsection{Proof for Proposition~\ref{le:taylor_series}}\label{app:proposition2}
\begin{proposition}
For any constant $\alpha\in(0,1)\cup(1,+\infty)$ and $\xi\in(0,1)$, there exists an integer $K$ such that, for any quantum state $\rho$ with eigenvalue lower bound $\Lambda$,
\begin{align}
    \left|\tr(\rho^\alpha)-1-\sum_{k=1}^{K}\binom{\beta}{k}\tr\left(\rho(\rho-I)^k\right)\right|\leq\xi.\label{app:eq:taylor_approximation}
\end{align}
Particular, the choice of integer $K$ is shown below.
\begin{align}
K\in\left\{
    \begin{array}{ll}
        \max\{\Omega(\alpha^2), \Omega(\log(\Lambda\xi)/\log(1-\Lambda))\} & \text{if $\alpha\in(2,+\infty)$}, \\
        \Omega(\log(\Lambda\xi)/\log(1-\Lambda)) &  \text{if $\alpha\in(0,1)\cup(1,2]$}.
    \end{array}\right.
\end{align}
\end{proposition}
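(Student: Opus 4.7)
The plan is to reduce the matrix claim to a scalar Taylor-remainder estimate via the spectral decomposition, and then bound that remainder using the three cases of Proposition~\ref{le:bound_binom}. Concretely, write $\rho=\sum_j \lambda_j\op{\mathbf{e}_j}{\mathbf{e}_j}$; since every nonzero eigenvalue satisfies $\lambda_j\in[\Lambda,1]$, the argument of the power function lies in $(0,2)$ and the binomial series $x^\beta=\sum_{k\geq 0}\binom{\beta}{k}(x-1)^k$ converges absolutely and uniformly on $[\Lambda,1]$. Applying this pointwise to each eigenvalue, the $k=0$ term contributes $\tr(\rho)=1$, so
\begin{align}
\tr(\rho^\alpha)=\tr(\rho\cdot\rho^{\beta})=1+\sum_{k=1}^{\infty}\binom{\beta}{k}\tr\bigl(\rho(\rho-I)^k\bigr).
\end{align}

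Next I would control the truncation tail. Using the spectral decomposition, $|\tr(\rho(\rho-I)^k)|\leq \sum_j\lambda_j(1-\lambda_j)^k\leq(1-\Lambda)^k$, because $(1-\lambda_j)^k\leq(1-\Lambda)^k$ whenever $\lambda_j\neq 0$ and the zero eigenvalues contribute nothing. Therefore the quantity we must drive below $\xi$ is
\begin{align}
\left|\tr(\rho^\alpha)-1-\sum_{k=1}^{K}\binom{\beta}{k}\tr\bigl(\rho(\rho-I)^k\bigr)\right|\;\leq\;\sum_{k=K+1}^{\infty}\Bigl|\binom{\beta}{k}\Bigr|(1-\Lambda)^k.
\end{align}

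The final step is to invoke the case analysis of Proposition~\ref{le:bound_binom} to turn this into a concrete choice of $K$. For $\beta\in(-1,0)$ (equivalently $\alpha\in(0,1)$) the factor $|\binom{\beta}{k}|$ is uniformly bounded by $|\beta|\leq 1$, so the tail is dominated by a geometric series with ratio $1-\Lambda$ and the bound $|\beta|(1-\Lambda)^{K+1}/\Lambda\leq\xi$ gives $K=\Omega(\log(1/(\Lambda\xi))/\log(1/(1-\Lambda)))$. For $\beta\in(0,1]$ ($\alpha\in(1,2]$), Proposition~\ref{le:bound_binom} yields $|\binom{\beta}{k}|\leq 1/e$ once $k$ is larger than a fixed constant, so up to a constant prefactor the same geometric estimate and the same order of $K$ work.

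The main obstacle is the case $\beta\in(1,+\infty)$ ($\alpha>2$), where $|\binom{\beta}{k}|$ first grows and only becomes $\leq 1$ after $k\in\Omega((\beta+1)^2)=\Omega(\alpha^2)$. To handle this I would split the truncation index into two regimes: a ``pre-decay'' regime $k\leq k_0:=\lceil e^{2/\beta}(\beta+1)^2\rceil$ and a ``decay'' regime $k>k_0$. Choosing $K\geq k_0$ automatically bypasses the growing portion, so for every $k>K$ one has $|\binom{\beta}{k}|\leq 1$ and the tail reduces to $(1-\Lambda)^{K+1}/\Lambda$, which requires the additional $\log(1/(\Lambda\xi))/\log(1/(1-\Lambda))$ factor. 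Taking the maximum of these two lower bounds produces the stated $K\in\max\{\Omega(\alpha^2),\Omega(\log(\Lambda\xi)/\log(1-\Lambda))\}$. Verifying that $k_0=\Theta(\alpha^2)$ (hidden inside the $e^{2/\beta}$ factor when $\alpha>2$) is the only delicate bookkeeping; once that is done, combining the three cases completes the proof.
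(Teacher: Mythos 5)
Your proposal is correct and follows essentially the same route as the paper's own proof: spectral decomposition to reduce to scalars, the tail bound $\sum_{k>K}\bigl|\binom{\beta}{k}\bigr|(1-\Lambda)^k\leq(1-\Lambda)^{K+1}/\Lambda$ once Proposition~\ref{le:bound_binom} guarantees $\bigl|\binom{\beta}{k}\bigr|\leq 1$ beyond the threshold, with the extra $\Omega(\alpha^2)$ requirement for $\alpha>2$ arising exactly from that threshold. Your case-by-case bookkeeping (including checking $k_0=\Theta(\alpha^2)$) is slightly more explicit than the paper's terse two-line argument, but the underlying idea is identical.
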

\begin{proof}
Let $\rho=\sum_{j}p_j \op{\mathbf{e}_j}{\mathbf{e}_j}$ be the Schmidt decomposition of $\rho$, and expand $\rho^\beta$ into the Taylor series.
\begin{align}
    \left\|\rho^\alpha-\rho-\sum_{k=1}^{K}\binom{\beta}{k}\rho(\rho-I_{\rm supp})^k\right\|&=\left\|\sum_{k=K+1}^{\infty}\binom{\beta}{k}\sum_{j}p_j(p_j-1)^k\op{\mathbf{e}_j}{\mathbf{e}_j}\right\|\\
    &\leq \sum_{k=K+1}^{\infty}\left|\binom{\beta}{k}\right|(1-\Lambda)^k\\
    &\leq \frac{(1-\Lambda)^{K+1}}{\Lambda}.
\end{align}
By Proposition \ref{le:bound_binom}, we can find an integer $K$ such that $|\binom{\beta}{k}|\leq 1$ for all $k\geq K$.

Furthermore, letting $K\geq \log(\Lambda\xi)/\log(1-\xi)$, we can verify that $(1-\Lambda)^{K+1}/\Lambda \leq \xi$.
\end{proof}

\subsection{Proof for Lemma~\ref{le:renyi_fourier}}\label{app:renyi_fourier}
\begin{lemma}
Consider a quantum state $\rho\in\mathbb{C}^{2^n\times 2^n}$. Let $\Lambda\in(0,1)$ be a lower bound on all non-zero eigenvalues. For any $\alpha\in(0,1)\cup(1,+\infty)$, there exists an estimate $R_{\alpha}(\rho)_{est}$ $\alpha$-R\'enyi entropy $R_{\alpha}(\rho)$ up to precision $\epsilon$. To be specific,
\begin{align}
    R_{\alpha}(\rho)_{est}&=\frac{1}{1-\alpha}\log F_{\alpha}(\rho),
\end{align}
where $F_{\alpha}(\rho)$ satisfies $|F_{\alpha}(\rho)-\tr(\rho^\alpha)|\leq \xi$.
\begin{align}
  F_{\alpha}(\rho)= 1+ \sum_{l=0}^{\lfloor L\rfloor}\sum_{s=D_l}^{U_l}\left(\sum_{k=1}^{K}(-1)^kb_{l}^{(k)}\binom{\alpha-1}{k}\right)2^{-l}\binom{l}{s}\tr(\rho\cdot \cos(\rho t(s,l))). \label{app:eq:renyi}
\end{align}
In particular, the relation between $\epsilon$ and $\xi$ is given in Eq.~\eqref{eq:renyi_precision}, and definition of all $b_{l}^{(k)}$ are given in Eq.~\eqref{eq:blk}.
And the parameters of $F_{\alpha}(\rho)$ are given as follows. Coefficient $t(s,l)=\frac{(2s-l)\pi}{2}$, and $U_{l}=\min\{l,\lceil \frac{l}{2}\rceil+M_l\}$ and $D_l=\max\{0,\lfloor \frac{l}{2}\rfloor-M_l\}$. Moreover,
\begin{align}
    K=\Theta\left(\frac{\log(\Lambda\xi)}{\log(1-\Lambda)}+\alpha^2\right),\quad L=\ln\left(\frac{4\sum_{k=1}^{K}|\binom{\alpha-1}{k}|}{\xi}\right)\frac{1}{\Lambda^2},\quad M_l=\left\lceil\sqrt{\ln\left(\frac{4\sum_{k=1}^{K}|\binom{\alpha-1}{k}|}{\xi}\right)\frac{l}{2}}\right\rceil.
\end{align}
And the overall weights of $F_{\alpha}(\rho)$ are bounded by $\sum_{k=1}^{K}|\binom{\alpha-1}{k}|$, and bounds on $\sum_{k=1}^{K}|\binom{\alpha-1}{k}|$ are given in Table~\ref{tab:my_label_weights}.

\end{lemma}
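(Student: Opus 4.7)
The plan is to mirror the three-step derivation used to prove Lemma~\ref{le:fourier}, starting this time from the Taylor series for $\tr(\rho^\alpha)$ supplied by Proposition~\ref{le:taylor_series} rather than from that of $-\ln(x)$. By that proposition, for the prescribed $K$ we have
\begin{align*}
    \left|\tr(\rho^\alpha) - 1 - \sum_{k=1}^K \binom{\beta}{k}\tr(\rho(\rho-I)^k)\right| \leq \xi/2,
\end{align*}
so it suffices to approximate the truncated Taylor sum by the claimed cosine expansion up to an additional error $\xi/2$.

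First I would rewrite $(\rho-I)^k = (-1)^k (I-\rho)^k$ — this produces the $(-1)^k$ factor that appears in Eq.~\eqref{eq:renyi_approximation} — and then substitute $I-\rho = (2/\pi)\arcsin(\cos(\rho\pi/2))$, which holds spectrally because the eigenvalues of $\rho$ lie in $[0,1]$. Expanding each $((2/\pi)\arcsin)^k$ via the Taylor series $\sum_l b_l^{(k)} y^l$, with $b_l^{(k)}$ as in Eq.~\eqref{eq:blk}, turns the truncated Taylor sum into a double series $\sum_{k=1}^K (-1)^k\binom{\beta}{k}\sum_{l=0}^\infty b_l^{(k)} \cos^l(\rho \pi/2)$, exactly paralleling the intermediate step in Lemma~\ref{le:fourier} but with weights $(-1)^k\binom{\beta}{k}$ in place of $1/k$.

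Next I would truncate the inner $l$-sum at $\lfloor L\rfloor$ and, after expanding each $\cos^l$ via $\cos(z)=(e^{iz}+e^{-iz})/2$ and the binomial theorem, truncate the resulting $s$-sum to $[D_l,U_l]$. The two tails are bounded exactly as in the proof of Lemma~\ref{le:fourier}: for the $l$-tail use $\sum_l b_l^{(k)}=1$ together with $\cos^L(\lambda\pi/2)\leq e^{-\Lambda^2 L}$; for the $s$-tail use the Chernoff tail bound on the symmetric binomial distribution. Each tail contributes at most $\xi/(4\sum_k|\binom{\alpha-1}{k}|)$, so after multiplication by $|\binom{\alpha-1}{k}|$ and summation over $k$ the total truncation error is at most $\xi/2$. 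Taking trace against $\rho$ and pairing conjugate exponentials yields the $\tr(\rho\cos(\rho t(s,l)))$ form of $F_\alpha(\rho)$, and the overall-weight bound $\sum_k|\binom{\alpha-1}{k}|$ together with the ranges in Table~\ref{tab:my_label_weights} follows directly from Proposition~\ref{le:bound_binom}.

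The main obstacle, relative to the von Neumann case, is that the weights $|\binom{\alpha-1}{k}|$ can grow with both $k$ and $\alpha$, so the allowable per-term truncation error shrinks by a factor of $\sum_k|\binom{\alpha-1}{k}|$. That forces $L$ and $M_l$ to carry this sum inside the logarithm, which is exactly the form stated. The remaining step — converting an $\xi$-accurate estimate of $\tr(\rho^\alpha)$ into an $\epsilon$-accurate estimate of $R_\alpha(\rho)$ — is bookkeeping: combining $|\log(1+x)|\leq 2|x|$ on $[-1/2,1]$ with the lower bounds $\tr(\rho^\alpha)\geq[\tr(\rho^2)]^{\alpha-1}$ for $\alpha\in(0,1)\cup(2,\infty)$ and $\tr(\rho^\alpha)\geq\tr(\rho^2)$ for $\alpha\in(1,2]$ reproduces the choice of $\xi$ in Eq.~\eqref{eq:renyi_precision}, closing the argument.
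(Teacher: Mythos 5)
Your proposal is correct and follows essentially the same route as the paper's own proof in Appendix~\ref{app:renyi_fourier}: start from the truncated Taylor series of Proposition~\ref{le:taylor_series}, extract the $(-1)^k$ from $(\rho-I)^k$, substitute $I-\rho=(2/\pi)\arcsin(\cos(\rho\pi/2))$, expand the arcsin powers with the coefficients $b_l^{(k)}$, and truncate the $l$- and $s$-sums with the same $e^{-\Lambda^2 L}$ and Chernoff tail bounds, finishing with the $\epsilon$--$\xi$ conversion via $|\log(1+x)|\leq 2|x|$. The only difference is a trivially different split of the error budget ($\xi/2+\xi/2$ versus the paper's $\xi/4$ for the Taylor truncation), which does not affect the argument.
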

\begin{proof}
We focus on constructing a Fourier series approximation to $\rho^{\beta}$, where $\beta=\alpha-1$. 

By Lemma \ref{le:taylor_series}, we can find a Taylor series in Eq.~\eqref{eq:taylor_approximation} that approximates to $\rho^{\beta}$ with the error $(1-\Lambda)^{K+1}/\Lambda$. Setting $K=\Theta\left(\frac{\log(\Lambda\xi)}{\log(1-\Lambda)}+\alpha^2\right)$, the approximation error is suppressed to $\xi/4$. 

We show the process of constructing the Fourier series in three steps.

\textbf{Step 1}. For simplicity, we only consider the eigenvalue $\lambda$ instead of state $\rho$. Notice that 
\begin{align}
    \lambda-1=\frac{\arcsin(\sin(\frac{(\lambda-1)\pi}{2}))}{\frac{\pi}{2}}=-\frac{\arcsin(\cos(\lambda\pi/2))}{\pi/2}. \label{eq:sin}
\end{align}
Take the relation in Eq.~\eqref{eq:sin} into the series in Eq.~\eqref{eq:taylor_approximation}, then we can find an approximation to $\tr(\rho^\alpha)$.
\begin{align}
    \tr(\rho^\alpha)=\tr(\rho\cdot\rho^\beta)\approx 1+\sum_{k=1}^{K}\binom{\beta}{k}(-1)^k\tr\left(\rho\left(\frac{\arcsin(\cos(\rho\pi/2))}{\pi/2}\right)^k\right).
\end{align}

Next, expand the function $(\arcsin(y)/\pi/2)^k$ into the Taylor series, which is given in the following formula.
\begin{align}
    (\arcsin(y)/\pi/2)^k=\sum_{l=0}^{\infty}b_{l}^{(k)}y^l.
\end{align}
The coefficients $b_{l}^{(k)}$ can be efficiently calculated.

Then, $\tr(\rho^\alpha)$ could be written as follows.
\begin{align}
\tr(\rho^{\alpha})&\approx1+\sum_{k=1}^{K}\binom{\beta}{k}(-1)^k\sum_{l=0}^{\infty}b_{l}^{(k)}\tr\left(\rho\cdot \cos(\rho\pi/2)^l\right)\\
&=1+\sum_{k=1}^{K}\binom{\beta}{k}(-1)^k\sum_{l=0}^{\infty}b_{l}^{(k)}\tr\left(\rho\cdot \left(\frac{e^{-i\rho \pi/2}+e^{i\rho \pi/2}}{2}\right)^l\right)\\
&=1+\sum_{k=1}^{K}\binom{\beta}{k}(-1)^k\sum_{l=0}^{\infty}b_{l}^{(k)}\sum_{s=0}^{l}\frac{\binom{l}{s}}{2^l}\tr\left(\rho\cdot e^{i\rho(2s-l)\pi/2}\right).
\end{align}
Consequently, we could truncate the above series to obtain an approximation of $\tr(\rho^\alpha)$ at order $L$.
\begin{align}
    \tr(\rho^\alpha)\approx 1+\sum_{k=1}^{K}\binom{\beta}{k}(-1)^k\sum_{l=0}^{\lfloor L \rfloor}b_{l}^{(k)}\sum_{s=0}^{l}\frac{\binom{l}{s}}{2^l}\tr\left(\rho\cdot \cos\left(\rho\frac{(2s-l)\pi}{2}\right)\right). \label{eq:power_approximation}
\end{align}
By setting $L=\ln\left(\frac{4\sum_{k=1}^{K}|\binom{\beta}{k}|}{\xi}\right)\frac{1}{\Lambda^2}$, we would obtain a series approximating to $\tr(\rho^{\alpha})$ up to precision $\xi/2$. This is because, for any non-zero eigenvalue $\lambda$, we have
\begin{align}
    \left|\sum_{l=\lceil L\rceil}^{\infty}b_{l}^{(k)}\cos^l(\lambda\pi/2)\right|&\leq \sum_{l=\lceil L\rceil}^{\infty} b_{l}^{(k)}\left|\cos^l(\lambda\pi/2)\right|\\
    &\leq \cos^{L}(\lambda\pi/2)\sum_{l=\lceil L\rceil}^{\infty} b_{l}^{(k)}\\
    &\leq \cos^{L}(\lambda\pi/2)\\
    &= \sin^{L}((1-\lambda)\pi/2)\\
    &\leq (1-\lambda^2)^L\\
    &\leq e^{-\lambda^2L}\\
    &\leq e^{-\Lambda^2L}\\
    &\leq \frac{\xi}{4\sum_{k=1}^{K}|\binom{\beta}{k}|}.
\end{align}
Here, we have used facts that $b_{l}^{(k)}\geq 0$, $\sum_{l=1}^{\infty}b_l^{(k)}=1$, $\sin((1-\delta)\pi/2)\leq 1-\delta^2$ for all $\delta\in(0,1)$, $(1-\delta)\leq e^{-\delta}$, and $\lambda\in[\Lambda,1]$.

\textbf{Step 3.}
Now, we use the equality $\cos(z)=\frac{e^{iz}+e^{-iz}}{2}$ to rewrite the series in Eq.~\eqref{eq:power_approximation}. 
\begin{align}
    \sum_{k=1}^{K}(-1)^k\binom{\beta}{k}\sum_{l=0}^{\lfloor L\rfloor}b_{l}^{(k)}\left[\frac{e^{i\lambda\pi/2}+e^{-i\lambda\pi/2}}{2}\right]^l=\sum_{k=1}^{K}(-1)^k\binom{\beta}{k}\sum_{l=0}^{\lfloor L\rfloor}b_{l}^{(k)}2^{-l}\sum_{s=0}^{l}\binom{l}{s}e^{i(2s-l)\lambda\pi/2}.
\end{align}
Particularly, this series could be further truncated by using the property of binomial distribution. By Chernoff's inequality, we have
\begin{align}
    \sum_{s=\lceil l/2\rceil+M_l}^{l}2^{-l}\binom{l}{s}\leq e^{-\frac{2M_l^2}{l}}.
\end{align}
Setting $M_l=\left\lceil\sqrt{\ln\left(\frac{4\sum_{k=1}^{K}|\binom{\beta}{k}|}{\epsilon}\right)\frac{l}{2}}\right\rceil$, suppose that $M_{l}\leq \lfloor l/2\rfloor $, then we could find that
\begin{align}
    \sum_{s=0}^{\lfloor l/2\rfloor-M_l}2^{-l}\binom{l}{s}=\sum_{s=\lceil l/2\rceil+M_l}^{l}2^{-l}\binom{l}{s}\leq e^{-\frac{2M_l^2}{l}} \leq \frac{\xi}{4\sum_{k=1}^{K}|\binom{\beta}{k}|}.
\end{align}

Eventually, we have obtained the desired series $F_{\alpha}(\rho)$, shown below, up to precision $\xi$.
\begin{align}
  F_{\alpha}(\rho)= 1+ \sum_{l=0}^{\lfloor L\rfloor}\sum_{s=\lfloor l/2\rfloor-M_l}^{\lceil l/2\rceil+M_l}\left(\sum_{k=1}^{K}(-1)^kb_{l}^{(k)}\binom{\beta}{k}\right)2^{-l}\binom{l}{s}e^{i(2s-l)\lambda\pi/2}.
\end{align}
The claimed result follows immediately from the formula of entropy.

In addition, the sum of all coefficients is bounded. This is because
\begin{align}
    \sum_{l=0}^{\lfloor L\rfloor}\sum_{s=\lfloor l/2\rfloor-M_l}^{\lceil l/2\rceil+M_l}\sum_{k=1}
    ^{K}\left|(-1)^kb_{l}^{(k)}\binom{\beta}{k}2^{-l}\binom{l}{s}\right|\leq \sum_{l=0}^{\lfloor L\rfloor}\left|\sum_{k=1}^{K}b_{l}^{(k)}\binom{\beta}{k}\right|\leq \sum_{k=1}^{K}\left|\binom{\beta}{k}\right|.
\end{align}
Here we have used facts that $2^{-l}\sum_{s=0}^{l}\binom{l}{s}=1$, $\sum_{l=0}^{\infty}b_{l}^{(k)}=\left(\frac{\arcsin(1)}{\pi/2}\right)^k=1$. 

By Proposition~\ref{le:bound_binom}, we can find the upper bound on the sum of generalized coefficients as claimed.

\end{proof}

\section{Quantum circuits}
\subsection{Proof for Eq.~(\ref{eq:difference})} \label{sec:exponent_inequality}
\renewcommand{\thelemma}{S\arabic{lemma}}
\setcounter{lemma}{1}
\begin{lemma}\label{le:exponent_inequality}
For any quantum state $\rho$, time $t>0$, and integer $K>1$, the trace norm of $\sum_{k\geq K}\frac{(-i\rho t)^k}{k!}$ is bounded by $\sqrt{2}t^K/K!$.
\end{lemma}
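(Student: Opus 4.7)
The plan is to diagonalize $\rho$ and reduce the operator trace-norm estimate to a scalar Taylor-remainder bound for $e^{-ix}$. Writing $\rho = \sum_j \lambda_j \op{e_j}{e_j}$ with $\lambda_j \in [0,1]$ and $\sum_j \lambda_j = 1$, the sum $\sum_{k \geq K} (-i\rho t)^k/k!$ is a power series in $\rho$ and is therefore diagonal in its eigenbasis, with $j$-th diagonal entry equal to the scalar tail $R_K(\lambda_j t) := \sum_{k \geq K} (-i\lambda_j t)^k/k!$. Consequently the trace norm equals $\sum_j |R_K(\lambda_j t)|$, and the entire question reduces to estimating $R_K$ pointwise on $[0, t]$.

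For that scalar estimate I would apply Taylor's theorem with integral remainder to $f(x) = e^{-ix}$. Since $f^{(K)}(s) = (-i)^K e^{-is}$ has unit modulus on $\mathbb{R}$, the remainder satisfies
\begin{equation*}
|R_K(x)| = \left|\frac{1}{(K-1)!}\int_0^x (x-s)^{K-1}(-i)^K e^{-is}\, ds\right| \leq \frac{x^K}{K!}
\end{equation*}
for every $x \geq 0$. This is a standard estimate with no delicate step.

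Combining these ingredients gives
\begin{equation*}
\left\|\sum_{k \geq K} \frac{(-i\rho t)^k}{k!}\right\|_{tr} = \sum_j |R_K(\lambda_j t)| \leq \frac{t^K}{K!}\sum_j \lambda_j^K \leq \frac{t^K}{K!}\sum_j \lambda_j = \frac{t^K}{K!},
\end{equation*}
where I use $\lambda_j^K \leq \lambda_j$ (valid because $\lambda_j \in [0,1]$ and $K \geq 1$) together with the normalization $\tr(\rho) = 1$. Since $t^K/K! \leq \sqrt{2}\, t^K/K!$, the stated lemma follows, and in fact with a slightly tighter constant.

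I do not expect a real obstacle; the only point requiring care is to exploit the state normalization \emph{after} passing to the eigenbasis, not before. A direct operator-level triangle inequality $\|\sum_{k\ge K}(-i\rho t)^k/k!\|_{tr} \leq \sum_{k\ge K} (t^k/k!)\,\|\rho^k\|_{tr}$, together with $\|\rho^k\|_{tr} \leq 1$, yields only $e^t - \sum_{k=0}^{K-1} t^k/k!$, which is already too weak to beat $\sqrt{2}\,t^K/K!$ near $K = 2$ and $t = 1$ (there $e - 2 \approx 0.718$ versus $\sqrt{2}/2 \approx 0.707$). The spectral reduction saves a factor via the pointwise bound $\lambda_j^K \leq \lambda_j$, which is precisely what the crude triangle inequality discards.
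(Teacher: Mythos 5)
Your proof is correct and follows the same overall route as the paper's: diagonalize $\rho$, reduce the trace norm to a sum of scalar Taylor tails $|R_K(\lambda_j t)|$, bound each tail by a constant times $(\lambda_j t)^K/K!$, and then exploit $\lambda_j^K \le \lambda_j$ together with $\sum_j \lambda_j = 1$. The one place you genuinely diverge is the scalar estimate. The paper splits $e^{-ix}$ into $\cos(x)$ and $\sin(x)$, applies the Lagrange form of Taylor's theorem to each real series separately, and then combines the two real bounds into a complex modulus --- which is exactly where its factor $\sqrt{2}$ comes from. You instead apply the integral form of the remainder directly to the complex-valued function $e^{-ix}$, using $|f^{(K)}(s)| = 1$, which yields the clean bound $x^K/K!$ with constant $1$. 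Both arguments are elementary and rigorous, but yours is shorter and strictly sharper: it shows the $\sqrt{2}$ in the lemma statement is unnecessary. Your closing remark is also on point --- the operator-level triangle inequality is indeed too lossy, and the spectral reduction (so that normalization can be exploited via $\lambda_j^K \le \lambda_j$) is the essential step in both proofs.
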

\begin{proof}
Suppose $\rho$ has a spectral decomposition
\begin{align}
    \rho=\sum_{j}\lambda_j\op{\mathbf{e}_j}{\mathbf{e}_j}.
\end{align}
As the trace norm is the sum of all singular values, we could deduce that
\begin{align}
    \left\|\sum_{k\geq K}\frac{(-i\rho t)^k}{k!}\right\|_{tr}&=\left\|\sum_{j}\sum_{k\geq K}\frac{(-i\lambda_j t)^k}{k!}\op{\mathbf{e}_j}{\mathbf{e}_j}\right\|_{tr}\\
    &=\sum_{j}\left|\sum_{k\geq K}\frac{(-i\lambda_j t)^k}{k!}\right|\\
    &\leq\sum_{j}\frac{\sqrt{2}(\lambda_j t)^{K}}{K!}\\
    &\leq\sum_{j}\frac{\sqrt{2}\lambda_j( t)^{K}}{K!}\\
    &\leq \frac{\sqrt{2}t^K}{K!}.
\end{align}
Here, we have used facts that $\left| \sum_{k\geq K}\frac{(-i\lambda_j t)^k}{k!}\right|\leq \frac{\sqrt{2}(\lambda_j t)^K}{K!}$ for all $\lambda_j$, and $\sum_{j}\lambda_j=1$.


To complete the proof, we show that, for any $x\in\mathbb{R}$ and integer $K>1$,
\begin{align}
    \left|e^{-ix}-\sum_{k=0}^{K-1}\frac{(-ix)^k}{k!}\right|\leq \frac{\sqrt{2}|x|^{K}}{K!}.
\end{align}
Given arbitrary integer $K>0$, we use Euler's formula and expand the triangle functions to the Taylor series. The error is given by
\begin{align}
   e^{ix}- \sum_{k=1}^{K}\frac{(ix)^k}{k!}&=\cos(x)+i\sin(x)-\left[\sum_{p} (i)^{2p+1}\frac{x^{2p+1}}{(2p+1)!}+\sum_{p} (i)^{2p}\frac{x^{2p}}{(2p)!}\right]\\
    &=\cos(x)+i\sin(x)-\left[i\sum_{p} (-1)^{p}\frac{x^{2p+1}}{(2p+1)!}+\sum_{p} (-1)^{p}\frac{x^{2p}}{(2p)!}\right]\\
    &=\left[\cos(x)-\sum_{p} (-1)^{p}\frac{x^{2p}}{(2p)!}\right]+i\left[\sin(x)-\sum_{p} (-1)^{p}\frac{x^{2p+1}}{(2p+1)!}\right],
\end{align}
where $1\leq p\leq [K/2]$.

Notice that $\sum_{p} (-1)^{p}\frac{x^{2p+1}}{(2p+1)!}$ and $\sum_{p} (-1)^{p}\frac{x^{2p}}{(2p)!}$ are truncated Taylor series of $\sin(x)$ and $\cos(x)$ up to order $K$, respectively. By the Taylor's theorem, we have
\begin{align}
    &\cos(x)-\sum_{p} (-1)^{p}\frac{x^{2p}}{(2p)!}= \frac{\cos^{(K+1)}(\zeta)}{(K+1)!}x^{K+1},\\
    &\sin(x)-\sum_{p} (-1)^{p}\frac{x^{2p+1}}{(2p+1)!}= \frac{\sin^{(K+1)}(\xi)}{(K+1)!}x^{K+1}.
\end{align}
where $\zeta$ and $\xi$ are some values between $0$ and $x$. Immediately, we have
\begin{align}
    &\left|\cos(x)-\sum_{p} (-1)^{p}\frac{x^{2p}}{(2p)!}\right|\leq \frac{|x|^{K+1}}{(K+1)!},\\
    &\left|\sin(x)-\sum_{p} (-1)^{p}\frac{x^{2p+1}}{(2p+1)!}\right|\leq \frac{|x|^{K+1}}{(K+1)!}.
\end{align}
Finally, the result immediately follows, and the proof is finished.
\end{proof}

\subsection{Proof for Proposition~\ref{le:swap}}\label{app:swap}
\setcounter{proposition}{4}
\begin{proposition}
For arbitrary time $\Delta t\in(-1,1)$, define two rotations $R_1$ and $R_2$ as in Eqs.~\eqref{eq:rot_1}-\eqref{eq:rot_2}. Define a circuit module $W$ as in Figure \ref{figure:circuit_W} and a unitary $A=-W(I_{2}-2P)W^{\dagger}(I_{2}-2P)W$, where $P=\op{00}{00}$, and $I_{2}$ denotes the identity acting on $\ket{00}$. Then the unitary $e^{-i\mathcal{S}\Delta t}$ can be simulated in the sense that
\begin{align}
    P\otimes e^{-i\mathcal{S}\Delta t}=PAP.
\end{align}
\end{proposition}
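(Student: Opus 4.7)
The plan is to verify the linear-combination-of-unitaries identity $PWP=\tfrac12\,P\otimes e^{-i\mathcal{S}\Delta t}$ and then invoke one round of oblivious amplitude amplification to promote the amplitude factor $\tfrac12$ to $1$.

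First I will compute $W(\ket{00}\otimes\ket{\psi})$ for an arbitrary main-register state $\ket{\psi}$ by tracking the four two-ancilla branches through $R_1$, oc-$R_2$, select$(\mathcal{S})$, and $R_2^\dagger$ in turn. After $R_1$ the first ancilla sits in $\tfrac{\alpha}{2}\ket{0}+\sqrt{1-\alpha^2/4}\,\ket{1}$; the 0-controlled $R_2$ then rotates the second ancilla only in the $\ket{0}$ branch of the first ancilla, producing ancilla amplitudes $\tfrac12\sqrt{\alpha\cos(\Delta t)}$ on $\ket{00}$, $\tfrac12\sqrt{\alpha|\sin(\Delta t)|}$ on $\ket{01}$, and $\sqrt{1-\alpha^2/4}$ on $\ket{10}$. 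The controlled select$(\mathcal{S})$ multiplies the main-register piece in the $\ket{01}$ branch by $(-i\cdot\mathrm{sgn}(\Delta t))\,\mathcal{S}$, and then $R_2^\dagger$ acts on the second ancilla. Projecting the resulting state onto $\ket{00}$ in the ancillas and collecting contributions from the former $\ket{00}$ and $\ket{01}$ branches, the square roots cancel against $\alpha=\cos(\Delta t)+|\sin(\Delta t)|$ and I obtain
\begin{align}
PWP \;=\; \tfrac12\,P\otimes\bigl(\cos(\Delta t)\,I - i\sin(\Delta t)\,\mathcal{S}\bigr) \;=\; \tfrac12\,P\otimes e^{-i\mathcal{S}\Delta t},
\end{align}
using $e^{-i\mathcal{S}\Delta t}=\cos(\Delta t)I-i\sin(\Delta t)\mathcal{S}$ and $|\sin(\Delta t)|\cdot\mathrm{sgn}(\Delta t)=\sin(\Delta t)$.

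Next I will package the action of $W$ as a two-dimensional rotation. Because $W$ is unitary and $\tfrac12\,e^{-i\mathcal{S}\Delta t}$ is an isometry rescaled by $\tfrac12$, the output admits the decomposition
\begin{align}
W\ket{00}\ket{\psi} \;=\; \tfrac12\,\ket{00}\otimes e^{-i\mathcal{S}\Delta t}\ket{\psi} \;+\; \tfrac{\sqrt{3}}{2}\,\ket{\Phi^{\perp}(\psi)},
\end{align}
with $\ket{\Phi^{\perp}(\psi)}$ a unit vector in $\mathrm{range}(I-P)$, i.e.\ with no support on $\ket{00}$ in the ancillas. Setting $\sin\theta=1/2$ so $\theta=\pi/6$, the operator $A=-W(I_{2}-2P)W^\dagger(I_{2}-2P)W$ is the standard Grover iterate on the two-dimensional invariant subspace spanned by $\ket{00}\otimes e^{-i\mathcal{S}\Delta t}\ket{\psi}$ and $\ket{\Phi^{\perp}(\psi)}$, and it rotates the good amplitude from $\sin\theta$ to $\sin(3\theta)=\sin(\pi/2)=1$. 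Hence $A\ket{00}\ket{\psi}=\ket{00}\otimes e^{-i\mathcal{S}\Delta t}\ket{\psi}$, and since $\ket{\psi}$ was arbitrary, sandwiching with $P$ on both sides yields $PAP=P\otimes e^{-i\mathcal{S}\Delta t}$.

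The main obstacle is the bookkeeping in the first step: I need to verify that the sign factor $\mathrm{sgn}(\Delta t)$ built into select$(\mathcal{S})$ combines correctly with $|\sin(\Delta t)|$ to reproduce $-i\sin(\Delta t)\,\mathcal{S}$ uniformly in the sign of $\Delta t$, and that the choice $\alpha=\cos(\Delta t)+|\sin(\Delta t)|$ is exactly what is needed for the square roots from $R_1$ and $R_2$ to collapse to $\tfrac12\cos(\Delta t)$ and $\tfrac12|\sin(\Delta t)|$ on the projected $\ket{00}$ branch. The oblivious-amplification half of the argument is then formal, provided one checks that the reflection $I_{2}-2P$ leaves $\ket{00}\otimes e^{-i\mathcal{S}\Delta t}\ket{\psi}$ fixed and negates $\ket{\Phi^{\perp}(\psi)}$, which is immediate because the reflection acts only on the ancillas and $P$ projects onto $\ket{00}$.
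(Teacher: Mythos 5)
Your proof is correct and follows essentially the same route as the paper: both arguments hinge on the computation $PWP=\tfrac12\,P\otimes e^{-i\mathcal{S}\Delta t}$, after which the paper concludes via the algebraic identity $PAP=3PWP-4PWPW^{\dagger}PWP$ while you invoke the equivalent one-round Grover/oblivious-amplitude-amplification rotation from $\sin\theta=\tfrac12$ to $\sin(3\theta)=1$. One small slip: $(I_{2}-2P)$ negates the $\ket{00}$ component and fixes the orthogonal one (not the reverse, as you state), but since this reflection appears twice in $A$ the overall sign is unaffected and your conclusion stands.
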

\begin{proof}
First, we can easily show that
\begin{align}
    PAP&=3PWP-4PWPW^\dagger PWP.
\end{align}
Particularly, an important property of $W$ is 
\begin{align}
    \bra{00}W\ket{00}&=\frac{1}{2} \left(\cos(\Delta t)I_{2n}-i\sin(\Delta t)\mathcal{S}\right)=\frac{1}{2} e^{-i\mathcal{S}\Delta t}.
\end{align}
Then, 
\begin{align}
    &PWP=\frac{1}{2}P\otimes e^{-iS\Delta t},\\
    &PWPW^{\dagger}PWP=\frac{1}{4}PWP.
\end{align}
Last, the result immediately follows.

\end{proof}

\subsection{Gate decomposition}\label{sec:circuit_decomposition}
This section decomposes the quantum gates in Figure~\ref{figure:circuit_A} into primitive single/two-qubit gates. We primarily consider decomposing controlled gates c-${\rm select}(\mathcal{S})$, and anti-controlled gates oc-$R_2$. 

\paragraph{Circuit of oc-$R_2$.} Here, we let $R_{2}=R_{y}(\theta_2)$. By the relation that $XR_{y}(-\theta)X=R_{y}(\theta)$, we can decompose oc-$R_{y}(\theta_2)$ as shown in  Figure~\ref{figure:circuit_oc_R_2}.
\begin{figure}[htb]
\[ 
\Qcircuit @C=.7em @R=0.5em {
 & \gate{X}& \ctrl{1} & \qw & \ctrl{1} & \gate{X} & \qw  \\
  & \gate{R_{y}(\theta_2)} & \targ & \gate{R_{y}(-\theta_2/2)} & \targ & \gate{R_{y}(-\theta_2/2)} & \qw 
}
\]
\caption{\footnotesize Quantum circuit for anti-controlled rotation oc-$R_2$.}
\label{figure:circuit_oc_R_2}
\end{figure}
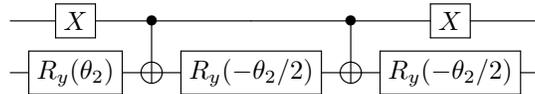


\paragraph{Circuit of ${\rm c-select}(\mathcal{S})$.} 
Notice that ${\rm select}(\mathcal{S})$ is a product of two operations: $P_1=(\op{0}{0}+(-i{\rm sgn}(t)\op{1}{1})\otimes I_{2n}$ and $P_2=\op{0}{0}\otimes I_{2n}+\op{1}{1}\otimes \mathcal{S}$. 
Hence, the c-${\rm select}(\mathcal{S})$ is to separately apply c-$P_1/P_2$. One key component of c-$P_1$ is the controlled phase gate c-$S$, which consists of $T$ gate, CNOT, and $R_z$. The decomposition is depicted in Figure~\ref{figure:circuit_c_S}.
\begin{figure}[htb]
\[ 
\Qcircuit @C=.7em @R=0.5em {
    & \gate{T} & \ctrl{1} & \qw & \ctrl{1} & \qw  & \qw    \\
    & \gate{R_{z}(\pi/2)} & \targ & \gate{R_{z}(-\pi/4)} & \targ & \gate{R_{z}(-\pi/4)} & \qw 
}
\]
\caption{\footnotesize Quantum circuit for controlled phase gate c-$S$.}
\label{figure:circuit_c_S}
\end{figure}
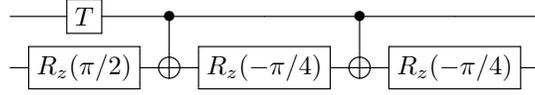

As for the c-$P_2$ gate, we first append one more measure register $\ket{0}$ and then use Toffoli and CNOT gates. Note that the decomposition of Toffoli gate can be found in \cite{nielsen2002quantum}. In consequence, our circuit is composed of primitive single/two-qubit gates. 

Ultimately, combining the circuits of c-$P_1/P_2$ leads to the circuit for c-${\rm select}(\mathcal{S})$, which is depicted in Figure~\ref{figure:circuit_c_swap}.
\begin{figure}[htb]
\[ 
\Qcircuit @C=1em @R=0.5em {
    &  \ctrl{1}       & \qw  & \qw        & \qw & \qw&\qw&\qw&\qw & \ctrl{1} & \qw & \ctrl{1} & \ctrl{1} & \qw\\
    &  \ctrl{1}       & \qw  & \qw        & \qw & \qw&\qw&\qw&\qw & \ctrl{1}  &\qw & \gate{S} & \gate{Z} & \qw \\
  \lstick{\op{0}{0}}  &  \targ       & \ctrl{2}  & \qw        & \qw & \ctrl{3}&\qw&\qw&\ctrl{4} & \targ  &\qw  & \qw & & \\
    &  &  &  & &  & & & &  &  & & & &\\
    &  \targ     & \ctrl{4}  & \targ      & \qw &\qw&\qw&\qw&\qw & \qw & \qw & \qw & &\\
  \lstick{\rho} &\qw&\qw&\qw&\targ&\ctrl{4}&\targ&\qw&\qw & \qw & \qw & \qw & &\\
    &\qw&\qw&\qw&\qw&\qw&\qw&\targ&\ctrl{4} & \targ& \qw & \qw &   & & \\
    &&&&&&&& & & &  & & \\
    & \ctrl{-4}  & \targ     & \ctrl{-4}  & \qw &\qw&\qw&\qw&\qw & \qw & \qw & \qw & & \\
  \lstick{\rho}&\qw&\qw&\qw&\ctrl{-4}&\targ&\ctrl{-4}&\qw&\qw & \qw & \qw & \qw & & \\
    &\qw&\qw&\qw&\qw&\qw&\qw&\ctrl{-4}&\targ & \ctrl{-4}& \qw  & \qw& &
}
\]
\caption{\footnotesize Quantum circuit for implementing controlled ${\rm select}(\mathcal{S})$. Here we take three-qubit state $\rho$ as example. The circuit appends one qubit $\ket{0}$. The decomposition of the c-$S$ is given in Figure~\ref{figure:circuit_c_S}. Particularly, the c-$Z$ gate is applied only when $t>0$.}
\label{figure:circuit_c_swap}
\end{figure}
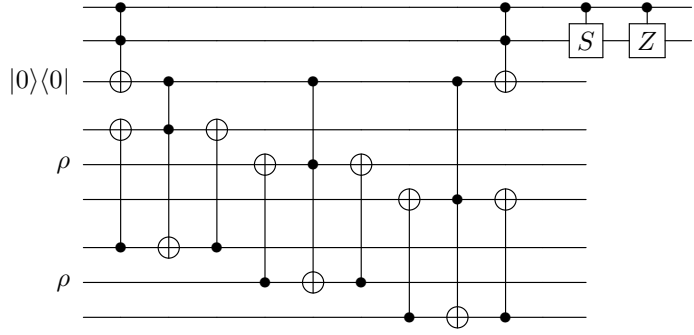

\section{Quantum algorithms}\label{app:renyi_algorithm}
\setcounter{theorem}{1}
\begin{theorem}
Consider a quantum state $\rho\in\mathbb{C}^{2^n\times2^n}$. Let $\Lambda$ be the lower bound on all non-zero eigenvalues of $\rho$. Suppose we have access to copies of $\rho$, then Algorithm~\ref{alg:renyi_entropy} outputs an estimate of $\alpha$-R\'enyi entropy $R_{\alpha}(\rho)$ up to precision $\epsilon$, succeeding with probability at least $1-\delta$. Furthermore, the total amount of the needed copies of $\rho$ and single/two-qubit gates, in the worst case, are shown in the Table~\ref{tab:renyi}.
\end{theorem}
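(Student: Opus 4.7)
The plan is to mirror the proof of Theorem~\ref{th:main_result_sampling}, with two new ingredients: the coefficients of the Fourier series in Lemma~\ref{le:renyi_fourier} are signed (because of the $(-1)^k\binom{\alpha-1}{k}$ factor), and the target quantity is $R_\alpha(\rho)=\frac{1}{1-\alpha}\log\tr(\rho^\alpha)$ rather than a linear functional of $\rho$, so we must propagate an additive error on $\tr(\rho^\alpha)$ through the logarithm. I will first establish correctness of the sampling estimator for $\tr(\rho^\alpha)$ (up to an accuracy $\xi$ to be chosen), then convert $\xi$ into the desired accuracy $\epsilon$ for $R_\alpha$, and finally aggregate the sample complexity into copy and gate counts.

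Concretely, write $F_\alpha(\rho)=1+\sum_{s,l}f(s,l)\,\tr(\rho\cos(\rho\,t(s,l)))$ with $f(s,l)=\bigl(\sum_{k=1}^{K}(-1)^k b_l^{(k)}\binom{\alpha-1}{k}\bigr)\binom{l}{s}/2^l$. By Lemma~\ref{le:renyi_fourier}, $|F_\alpha(\rho)-\tr(\rho^\alpha)|\le\xi$, and $\|\mathbf{f}\|_{\ell_1}\le\sum_{k=1}^{K}|\binom{\alpha-1}{k}|$, whose bound for each regime of $\alpha$ is given in Table~\ref{tab:my_label_weights}. Define the signed random variable
\begin{equation}
\mathbf{R}=\operatorname{sgn}(f(s,l))\,\tr\!\left(\rho\cos\!\left(\rho\tfrac{(2s-l)\pi}{2}\right)\right),\qquad\Pr[(s,l)]=\frac{|f(s,l)|}{\|\mathbf{f}\|_{\ell_1}},
\end{equation}
so that $\|\mathbf{f}\|_{\ell_1}\,\mathbf{E}[\mathbf{R}]=F_\alpha(\rho)-1$. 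Since $|\mathbf{R}|\le 1$, Hoeffding's inequality shows that $B=O\bigl(\|\mathbf{f}\|_{\ell_1}^2\xi^{-2}\log(1/\delta)\bigr)$ samples suffice to estimate $\mathbf{E}[\mathbf{R}]$ within precision $\xi/\|\mathbf{f}\|_{\ell_1}$ with probability at least $1-\delta/3$. For each sampled pair, Proposition~\ref{th:main_cost} together with a union bound over the $N=\sum_l(2M_l+1)$ terms gives an estimate of $\tr(\rho\cos(\rho\,t(s,l)))$ to precision $\xi/\|\mathbf{f}\|_{\ell_1}$ with failure probability at most $\delta/(3N)$ per term, using $O(t(s,l)^2\|\mathbf{f}\|_{\ell_1}/\xi)$ copies and $O(nt(s,l)^2\|\mathbf{f}\|_{\ell_1}/\xi)$ primitive gates per measurement and $O(\|\mathbf{f}\|_{\ell_1}^2\xi^{-2}\log(N/\delta))$ measurements per sample. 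Adding the Fourier truncation error $\xi$, the sampling error, and the per-term circuit error yields $|1+\|\mathbf{f}\|_{\ell_1}\cdot(\mathrm{sum}/B)-\tr(\rho^\alpha)|\le O(\xi)$ with success probability at least $1-\delta$.

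Next, convert $\xi$ to $\epsilon$ via the stability of $\log$. Using $|\log(1+y)|\le 2|y|$ for $|y|\le 1/2$ and the bound $\tr(\rho^\alpha)\ge[\tr(\rho^2)]^{\alpha-1}$ for $\alpha\in(0,1)\cup(2,\infty)$ and $\tr(\rho^\alpha)\ge\tr(\rho^2)$ for $\alpha\in(1,2]$, one picks $\xi$ as in Eq.~\eqref{eq:renyi_precision} so that the logarithmic output lies within $\epsilon$ of $R_\alpha(\rho)$. Combining this with the previous paragraph gives the claimed correctness.

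Finally, I would aggregate the cost. The longest evolution time in the Fourier series is $t_{\max}=O(M_{\lfloor L\rfloor})=\widetilde O(1/\Lambda)$, and the largest number of circuit evaluations is $B\cdot\#\mathrm{Measurements}=\widetilde O(\|\mathbf{f}\|_{\ell_1}^4/\xi^4)$. Plugging in $t_{\max}^2/\xi$ copies per measurement and $nt_{\max}^2/\xi$ gates per measurement gives the copy bound $\widetilde O(\|\mathbf{f}\|_{\ell_1}^5/(\xi^5\Lambda^2))$ and gate bound $\widetilde O(n\|\mathbf{f}\|_{\ell_1}^3/(\xi^3\Lambda^2))$, and substituting the value of $\xi$ from Eq.~\eqref{eq:renyi_precision} produces the entries in Table~\ref{tab:renyi}. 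The main technical obstacle is bookkeeping: keeping track of the failure budget across the outer sampling loop and the inner nondeterministic circuit evaluations, and carefully choosing $\xi$ so that the logarithm-propagation factor $1/(|1-\alpha|\tr(\rho^2)^{\alpha-1})$ is absorbed correctly in all three regimes of $\alpha$ while ensuring $\widehat{\tr(\rho^\alpha)}/\tr(\rho^\alpha)-1\in[-1/2,1]$, which is what justifies the linearisation of $\log$.
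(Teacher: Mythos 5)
Your proposal follows essentially the same route as the paper's proof: the same importance-sampling estimator for $F_\alpha(\rho)$, the same Hoeffding/union-bound accounting of the failure budget over samples and per-term measurements, the same aggregation into copy and gate counts via the maximal evolution time $O(M_{\lfloor L\rfloor})$, and the same conversion of $\xi$ to $\epsilon$ through the stability of the logarithm. Your explicit inclusion of $\operatorname{sgn}(f(s,l))$ in the random variable is a small but welcome correction — the paper's stated estimator omits the sign even though the Rényi coefficients are signed, and without it the identity $F_\alpha(\rho)=1+\|\mathbf{f}\|_{\ell_1}\mathbf{E}[\mathbf{R}]$ would not hold.
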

\begin{proof}
\textbf{Correctness analysis}.

Recall the relation between $R_{\alpha}(\rho)$ and $\tr(\rho^\alpha)$. Thus, we focus on analyzing the estimation of $\tr(\rho^\alpha)$. For this purpose, we write the Fourier series $F_{\alpha}(\rho)$ as follows.
\begin{align}
    F_{\alpha}(\rho)=1+\sum_{l=0}^{\lfloor L\rfloor}\sum_{s=D_l}^{U_l}f(s,l)\tr\left(\rho \cos\left(\rho\frac{(2s-l)\pi}{2}\right)\right).
\end{align}
Coefficients $f(s,l)$ are given by
\begin{align}
    f(s,l)=\left(\sum_{k=1}^{K}b_{l}^{(k)}(-1)^k\binom{\beta}{k}\right)\frac{\binom{l}{s}}{2^{l}}, \quad\forall s,l. \label{eq:renyi_weights}
\end{align}
where $\beta=\alpha-1$. Let $\mathbf{f}$ be a vector that consists of $f(s,l)$. Then the $\ell_1$-norm of $\mathbf{f}$ is bounded, i.e., $\|\mathbf{f}\|_{\ell_1}\in O\left(\sum_{k=1}^{K}|\binom{\beta}{k}|\right)$. In addition, the bound on $\sum_{k=1}^{K}|\binom{\beta}{k}|$ can be found in Lemma~\ref{le:renyi_fourier}.

Next, define an importance sampling as follows:
\begin{align}
    &\mathbf{R}=\tr\left(\rho \cos\left(\rho\frac{(2s-l)\pi}{2}\right)\right) \quad \text{with prob. $\frac{|f(s,l)|}{ \|\mathbf{f}\|_{\ell_1}}$}.\label{eq:renyi_random_variable}
\end{align}
The random variable $\mathbf{R}$ indicates that each Fourier term associated with $(s,l)$ is sampled with probability proportional to its weight $f(s,l)$. Then, the Fourier series could be rewritten as an expectation of $\mathbf{R}$. 
\begin{align}
    F(\rho)=1+\left\|\mathbf{f}\right\|_{\ell_1}\cdot\mathbf{E}\left[\mathbf{R}\right].
\end{align}

\textbf{Cost analysis}.

Note that the sample mean could estimate the expectation. The estimation accuracy replies on the number of samples. By Chebyshev's inequality, $O({\bf Var}/\epsilon^2)$ samples are sufficient to derive an estimate of the expectation with precision $\epsilon$ and high probability, where ${\bf Var}$ denotes the variance, and $\epsilon$ is the precision. Meanwhile, the probability could be boosted to $1-\delta$ at the cost of an additional multiplicative factor $O(\log(1/\delta))$ according to Chernoff bounds. Alternatively, by Hoeffding's inequality, we only need $O(\log(1/\delta)/\epsilon^2)$ samples to derive an estimate with precision $\epsilon$ and probability larger than $1-\delta$.

Regarding random variable $\mathbf{R}$, the variance is less than $1$. We set the precision as $\varepsilon=\xi/\left\|\mathbf{f}\right\|_{\ell_1}$ and failure probability $\delta/2$. Then the number of required samples is 
\begin{align}
    \#\textrm{Samples}&= O\left(\frac{1}{\varepsilon^2}\log\left(\frac{2}{\delta}\right)\right)=O\left(\frac{\|\mathbf{f}\|_{\ell_1}^2}{\xi^2}\log\left(\frac{2}{\delta}\right)\right).\label{eq:renyi_samples}
\end{align}
It means that there are at most $\#{\rm Samples}$ terms needing to estimate via quantum circuits. 

On the other hand, notice that the Fourier series $F(\rho)$ consists of $N=\sum_{l=0}^{L}(2M_l+1)$ terms in all. Hence, it suffices to estimate each term with probability $1-\delta/2N$. In this way, the overall failure probability is at most $\delta$ by union bound.

When estimating the Fourier series, we need to measure the resultant state after evolving the input state $\rho$ by the circuits (e.g., please refer to Figure~\ref{figure:circuit_instance}). Note that the measurement outcome is evaluated nondeterministically, which implies that the estimation could fail. To suppress the failure probability to $\delta$, we suffice to evaluate each term with a failure probability at most $\delta/2N$. In consequence, for each term, the number of needed measurements is
\begin{align}
  \#{\rm Measurements}=  O\left(\frac{\log(2N/\delta)}{\varepsilon^2}\right).\label{eq:renyi_measurements}
\end{align}
Immediately, the total number of measurements is at most
\begin{align}
C_{m}&=\#{\rm Sample}\times \#{\rm Measurements}=O\left(\frac{\|\mathbf{f}\|_{\ell_1}^4}{\xi^4}\log\left(\frac{2}{\delta}\right)\log\left(\frac{N}{\delta}\right)\right).
\end{align}

Now, we consider the number of copies of state $\rho$. Note that running the circuit once costs $\widetilde{O}(t^2/\epsilon)$ by Proposition~\ref{th:main_cost}. Here, the largest time is $O(M_L)=O\left({\ln\left(\frac{4\sum_{k=1}^{K}|\binom{\beta}{k}|}{\xi}\right)\frac{1}{\Lambda}}\right)$. As shown above, we have to run circuits $C_m$ times in the entropy estimation, then the number of overall copies is at most
\begin{align}
     C_{\rho}=C_{m}\times \widetilde{O}\left(\ln^2\left(\frac{\sum_{k=1}^{K}|\binom{\beta}{k}|}{\epsilon}\right)\frac{1}{\varepsilon\Lambda^2}\right)=\widetilde{O}\left(\frac{\|\mathbf{f}\|_{\ell_1}^5}{\xi^5\Lambda^2}\right).
\end{align}

By Proposition~\ref{th:term}, for $\tr(\rho\cos(\rho t))$, the number of primitive single/two-qubit gates scales $O(nt^2/\epsilon)$. Here, the overall gate counts for the entropy estimation, in the worst case, is 
\begin{align}
    C_{g}=\# {\rm Sample} \times O(nM_L^2/\varepsilon)=\widetilde{O}\left(\frac{n\|\mathbf{f}\|_{\ell_1}^3}{\xi^3\Lambda^2}\right).
\end{align}

Finally, using the relation between $\xi$ and $\epsilon$ in Eq.~\eqref{eq:renyi_precision}, we can finish the proof for the claimed.

\end{proof}

\section{Numerical simulation}
\subsection{Quantum states used in Figure \ref{fig:histgram}}\label{sec:quantum state}

Given the condition that the eigenvalue of the minimum states is greater than lower bound $\Lambda=0.35$, the four states are generated randomly.

\begin{align*}
    \rho_1 = 
    \begin{pmatrix}
        0.37336237 & -0.02597119\\
        -0.02597119 & 0.62663763
    \end{pmatrix},
    \rho_2 =& 
    \begin{pmatrix}
        0.42050704 & -0.08174482\\
        -0.08174482 & 0.57949296
    \end{pmatrix},\\
    \rho_3 = 
    \begin{pmatrix}
        0.58221067 & -0.04587666\\
        -0.04587666 & 0.41778933
    \end{pmatrix},
    \rho_4 =& 
    \begin{pmatrix}
        0.42932114 & -0.02696812\\
        -0.02696812 & 0.57067886
    \end{pmatrix}.
\end{align*} 
\end{document}